\renewcommand{\theenumi}{\roman{enumi}}
\setlist[enumerate]{label={\rm(\theenumi)}}
\theoremstyle{plain}
\newtheorem{theorem}{Theorem}
\newtheorem{proposition}{Proposition}
\newtheorem{lemma}{Lemma}
\newtheorem{claim}{Claim}
\newtheorem*{theoremLR}{Quasiconcavification Theorem}
\newtheorem*{lemmaLR}{Lemma LR}
\theoremstyle{definition}
\newtheorem{definition}{Definition}
\DeclareMathOperator*{\argmax}{arg\,max}
\DeclareMathOperator*{\co}{co}
\newcommand{\supp}[1]{\operatorname{supp}{#1}}
\DeclareMathOperator{\inte}{int}
\newcommand{\abs}[1]{\lvert#1\rvert}
\newcommand{\norm}[1]{\lVert#1\rVert}
\newcommand{\set}[2]{\{\,{#1}:{#2}\,\}}
\newcommand{\setd}[2]{\left\{\,{#1}:{#2}\,\right\}}
\newcommand{\R}{\mathbb{R}}
\newcommand{\one}{\mathbf{1}}
\newcommand{\vsum}{\one\cdot}
\newcommand{\cD}{{\mathcal D}}
\newcommand{\cR}{{\mathcal R}}
\newcommand{\cU}{{\mathcal U}}
\newcommand{\cV}{{\mathcal V}}
\title{Robust equilibria in cheap-talk games\\ with fairly transparent motives\thanks{We thank Frederic Koessler, Ming Li, Christina Pawlowitsch, Denis Shishkin, Joel Sobel, Elias Tsakas, and Andy Zapechelnyuk for helpful discussions and suggestions. Jan-Henrik Steg gratefully acknowledges financial support from the Austrian Science Fund (FWF) [Lise Meitner grant M~2905-G] and from the German Research Foundation (DFG) [SFB 1283/2 2021 – 317210226]. Christoph Kuzmics gratefully acknowledges financial support from the Jubil{\"a}umsfonds of the {\"O}sterreichische Nationalbank.}}
\author{Jan-Henrik Steg\thanks{Bielefeld University, Germany, jsteg@uni-bielefeld.de} \and Elshan Garashli\thanks{University of Graz, Austria, elshan.garashli@uni-graz.at} \and Michael Greinecker\thanks{ENS Paris-Saclay, France, michael.greinecker@ens-paris-saclay.fr} \and Christoph Kuzmics\thanks{University of Graz, Austria, christoph.kuzmics@uni-graz.at}}
\date{March 15, 2024}
\begin{document}

\maketitle

\begin{abstract}
For cheap-talk games with a binary state space in which the sender has state-independent preferences, we characterize equilibria that are robust to introducing slight state-dependence on the side of the sender. Not all equilibria are robust, but the sender-optimum is always achieved at some robust equilibrium.
\end{abstract}

\noindent JEL codes: C72, D82, D83

\noindent Keywords: Cheap talk, Communication, Information transmission, Robustness

\section{Introduction}

A sender learns about a state of the world and can provide some information about the state to a receiver who then chooses an action. Both the sender and the receiver care about which action is chosen in which state. The sender's problem has a particularly simple solution if the sender can commit to an information structure before learning the state. For that case, \citet{kamenica2011bayesian} have shown that the highest payoff the sender can achieve can be calculated as the concave envelope of a suitable value function for the sender, an approach pioneered by \citet{AumannMaschler1995}. When the sender cannot commit, we are in the more complex cheap-talk setting introduced by \citet{crawford1982strategic}. Finding sender-optimal equilibria in general cheap-talk games is an intractable problem. However, \citet{lipnowski2020cheap} have shown that the problem has a tractable solution when the sender's preferences over actions do not depend on the state, i.e., when the sender's motives are fully transparent. Then the highest payoff the sender can achieve is given by the quasiconcave envelope of the sender's value function instead of the concave envelope. This characterization makes it possible to calculate the value of commitment; it is the difference between the two envelopes.

\citet{lipnowski2020cheap} have even obtained a characterization of the whole set of sender equilibrium payoffs when the sender has transparent motives, not only the highest one. This set is an interval, so there is either a unique equilibrium payoff for the sender (from the trivial no-communication---or ``babbling''---equilibrium) or a continuum.

We are interested in the question which of the equilibria are sensitive to the main structural assumption in \citet{lipnowski2020cheap}. Which equilibria disappear as soon as arbitrarily little state dependence is introduced? And will the useful geometric characterization from \citet{lipnowski2020cheap} then still hold? To wit, \citet{diehl2021non} have shown that no nontrivial equilibrium in the class of transparent cheap-talk games studied by \citet{chakraborty2010persuasion} is robust against introducing slight uncertainty that the receiver has about the sender's---still state-independent---preferences; no equilibrium is robust in the sense of \citet{harsanyi1967games}.

We consider a different form of robustness. The information structure remains the same as in the reference model; the receiver knows the sender's preferences. Now it is we, the analysts, who are unsure about the sender's preferences; and now we only know they are close to state-independent. Indeed, we often know only the sender's ordinal ranking over the receiver's actions with certainty. We do not allow for perturbations that generate uncertainty of the receiver over the sender's preferences as in \cite{diehl2021non}, or perturbations that endow the receiver with private information as in \citet*{arieli2023robust}. Under the latter perturbations, but keeping preferences fixed, \citet*{arieli2023robust} show that sender-optimal equilibria are robust exactly when they are uninformative or completely reveal a state with positive probability. As we show, under our perturbations the optimal sender payoff is always robustly attainable.

For general games, the question of robustness with respect to small payoff perturbations was first asked by \citet{WuJiang1962}, who introduced the concept of an essential equilibrium. We adapt the notion of an essential equilibrium to cheap-talk games by restricting the allowable perturbations to those that keep cheap talk cheap.

Not all equilibria in transparent cheap-talk games with two states are robust.\footnote
{In contrast, \citet[Section 3]{farrell1993meaning} shows why refinements based on the \citet{kohlberg1986strategic} notion of strategic stability have almost no bite in cheap talk games. This includes refinements for signaling games such as the intuitive criterion of \citet{cho1987signaling}.
\citet{farrell1993meaning} also introduced a refinement custom-made for cheap-talk games, \emph{neologism-proof equilibrium}, which requires that no subset of sender types is self-signaling. A subset of sender types is self-signaling if all types within the subset and no types outside this subset would benefit from the receiver learning that the sender type is within this subset. Neologism-proofness has almost no bite in transparent cheap-talk games: Since all types of the sender have the same preferences, the only possible self-signaling set comprises all types, and signaling that the type is actually in this set benefits the sender only if the payoff from the equilibrium in question is less than the babbling equilibrium payoff.

A different approach to refining equilibria is introduced by \citet*{balkenborg2013refined} that has the potential to remove some mixed equilibria in cheap-talk games as shown in \citet*[Section 6]{balkenborg2015refined} for a very simple class of cheap-talk games. We don't know what this approach would deliver for the transparent cheap-talk games studied here. \citet*{lipnowski2022perfect} use a similar approach to refine equilibria in the Bayesian persuasion setting (i.e., with commitment) of \citet{kamenica2011bayesian}.\label{fn:refine}} However, the sender optimal payoff is always achieved at some robust equilibrium; the perturbations do not hurt the sender. In particular, the quasiconcave envelope of the sender's value function, as introduced by \citet{lipnowski2020cheap}, gives the highest sender payoff that can be robustly obtained. 

Our precise robustness notion is somewhat subtle. Allowing for all sufficiently small perturbations would destroy every nontrivial equilibrium; only the no-communication payoff would persist. Yet, we find that certain nontrivial equilibria are robust to open sets of perturbations that can be arbitrarily close to zero. Those equilibria are not ``knife-edge'' ones, they exist not only for a thin set of games. For the case of a binary state space, we are able to fully characterize the equilibria that are robust in this sense. We then use this characterization to construct robust equilibria that yield the sender-optimal payoff.\bigskip

The paper proceeds as follows: Section~\ref{sec:model} provides the model of finite cheap-talk games, our notion of equilibrium and robustness, and the benchmark results of \citet{lipnowski2020cheap}. Section~\ref{sec:examples} illustrates the logic of the main result of this paper in terms of two simple examples. Section~\ref{sec:results} provides the central characterization, Proposition~\ref{prop:binarybeliefrobust}, and the statement of the main result of the paper, Theorem~\ref{thm:smax}. The actual link between Proposition~\ref{prop:binarybeliefrobust} and Theorem~\ref{thm:smax} is formed by a second, more graphical characterization derived in Section~\ref{sec:graphical} as Proposition~\ref{prop:srobust}. In Section~\ref{sec:discussion}, we further discuss our approach. Two appendices contain several auxiliary results (Appendix~\ref{app:aux}) and all nontrivial proofs of results from the main text (Appendix~\ref{app:proofs}).

\section{The model}\label{sec:model}

We restrict ourselves to finite cheap-talk games. There are two players, a sender and a receiver. There is a finite state space $\Theta$ and a prior distribution $\mu_0\in \Delta\Theta$  with full-support.\footnote
{For any finite (or metrizable) space $X$, the set of all (Borel) probability measures over $X$ is denoted by $\Delta X$.}
Even though our main results assume that the state space is binary, we do not yet need this assumption in this section.

The sender first observes the true state and then sends a message from a finite set of messages. It will turn out that the exact message space is irrelevant, provided there are enough messages. The receiver observes the sender's message and takes an action from the finite set $A$.

The sender's and receiver's payoffs are given by (von Neumann--Morgenstern) utility functions $u_S:A\times\Theta  \to \R$ and $u_R:A\times\Theta \to \R$, respectively. Messages are not payoff-relevant; talk is cheap. A \emph{cheap-talk game} is then given by the quintuple $\Gamma=\langle\Theta,\mu_0,A,u_S,u_R\rangle$. To rule out trivial cases, we assume there are at least two states and at least two actions.

Throughout, we adopt the belief-based approach, which concentrates on the \emph{ex ante} distribution of the receiver's posterior beliefs that is induced by any combination of a sender strategy (for sending messages) with a belief system (specifying which posterior belief the receiver adopts after observing any particular message).\footnote
{The belief-based approach was pioneered by \citet{AumannMaschler1995} and had a recent renaissance thanks to the work of \citet{kamenica2011bayesian}. It is also the basis that \citet{lipnowski2020cheap} build on.}
Instead of specifying message choices, we let the sender directly choose a posterior belief distribution $p \in \Delta \Delta \Theta$ with finite support. That such a posterior belief distribution $p$ can be induced by some strategy to send messages and a corresponding belief update following Bayes' rule is equivalent to $p$ being \emph{Bayes plausible} \citep[in the language of][]{kamenica2011bayesian}, which means that the expected posterior equals the prior:
\begin{equation*}
\sum_{\mu \in \supp{p}} p(\mu) \mu = \mu_0.
\end{equation*}
For any $\mu \in \Delta \Theta$, let $A(\mu)$ be the receiver's optimal actions given belief $\mu$, that is
\begin{equation*}
A(\mu)=\argmax_{a\in A}\sum_{\theta \in \Theta} u_R(a,\theta)\mu(\theta).
\end{equation*}
Consider a receiver strategy $\rho\colon\supp{p}\to\Delta A$ that assigns a randomized action to any posterior belief that is possible given $p$. That $\rho$ is optimal for the receiver is equivalent to $\supp{(\rho(\mu))}\subseteq A(\mu)$ for all $\mu\in\supp{p}$.

We are interested in the sender's \emph{interim} expected payoff, the conditional expected payoff after learning the state but before releasing any information to the receiver. Often, we simply call it the payoff.\footnote
{In the Bayesian persuasion setting of \citet{kamenica2011bayesian}, where the sender and receiver always share the same posterior belief, and the cheap-talk setting with (completely) transparent motives of \citet{lipnowski2020cheap}, where the sender's expected payoff does not change when learning the state, it is enough to focus on the \emph{ex ante} payoff---a single number.}
Cheap talk is characterized by the inability of the sender to commit to any postulated strategy. Therefore, after learning the state, the sender can induce the receiver to adopt any of the ex ante possible posterior beliefs, but an equilibrium requires that the sender cannot benefit from manipulating any posterior belief generated by $p$. For a convenient expression of the sender's incentives in any state $\theta$, we enumerate the available actions arbitrarily, $A=\{a_1,\dots,a_n\}$. We then let (with a slight abuse of notation)
\begin{equation*}
u_S(\theta):=\left(u_S(a_1,\theta),\dots,u_S(a_n,\theta)\right)^\top
\end{equation*}
and similarly treat any mixed action $r\in\Delta A$ as the vector $(r(a_1),\dots,r(a_n))^\top$. We can then simply write $u_S(\theta)\cdot\rho(\mu)$ for the sender's interim expected payoff from any posterior belief $\mu$ and strategy $\rho$ for the receiver.

\begin{definition}\label{def:eqm}
For a cheap-talk game $\Gamma=\langle\Theta,\mu_0,A,u_S,u_R\rangle$, a pair $(p,\rho)$ consisting of a posterior belief distribution $p\in\Delta\Delta\Theta$ and a receiver strategy $\rho\colon\supp{p}\to\Delta A$ is an \emph{equilibrium} if
\begin{enumerate}
\item\label{eqm_p} $p$ is Bayes plausible,
\item\label{eqm_r} $\supp{(\rho(\mu))}\subseteq A(\mu)$ for all $\mu\in\supp{p}$, and
\item\label{eqm_s} $u_S(\theta)\cdot\rho(\mu)\geq u_S(\theta)\cdot\rho(\mu')$ for all $\mu, \mu' \in \supp{p}$ and all $\theta \in \supp{\mu}$.
\end{enumerate}
\end{definition}

An equilibrium $(p,\rho)$ induces the \emph{outcome} $(p,s^*)$, where $s^*\colon\Theta\to\R$ is the sender's interim expected payoff function, given by $s^*(\theta)=\max_{\mu\in\supp{p}}u_S(\theta)\cdot\rho(\mu)$.\footnote
{This follows from equilibrium condition~\ref{eqm_s} and the fact that, by the Bayes plausibility of $p$ and the full-support assumption for $\mu_0$, every state $\theta\in\Theta$ is indeed in the support of some posterior $\mu\in\supp{p}$.}
In Appendix~\ref{app:beliefbased}, we show that the outcomes that can result from any equilibrium as defined here are exactly the same as those induced by equilibria defined in the standard way for arbitrary finite sets of messages.

We say that an equilibrium $(p',\rho')$ is \emph{sender payoff-equivalent} to an equilibrium $(p,\rho)$ if both induce the same interim expected payoff function for the sender.

An important benchmark is the set of \emph{babbling} equilibria, in which the receiver infers no information at all; the posterior distribution is concentrated on the prior. Since the support of its belief distribution is the singleton $\{\mu_0\}$, any babbling equilibrium can be fully characterized by the randomized action $\rho(\mu_0)$. The only equilibrium requirement is then that $\rho(\mu_0)$ is supported on $A(\mu_0)$; the function $u_S$ is irrelevant.

\subsection{Robustness}

We are interested in the robustness of equilibria for the transparent cheap-talk games of \citet{lipnowski2020cheap} within the ambient space of general cheap-talk games.  We consider only sender payoff perturbations. From a given cheap-talk game $\Gamma=\langle \Theta, \mu_0, A, u_S, u_R \rangle$, we fix the state space $\Theta$, the prior belief $\mu_0$, the action space $A$, and the receiver preferences $u_R$; we vary only $u_S$. Let $\cU$ denote the space of all possible sender utility functions, which we identify with the Euclidean space $\R^{|A| \cdot |\Theta|}$ endowed with the usual topology.

Similarly, we can identify any equilibrium outcome $(p,s^*)$ with an element of the space $\Delta\Delta\Theta\times\R^{\abs{\Theta}}$, which we equip with the product topology, using the topology of weak convergence\footnote
{It will follow from Lemma~\ref{lem:samebeliefs} below that choosing the weak topology plays almost no role in our robustness analysis.}
on $\Delta\Delta\Theta$ and again the Euclidean topology on $\R^{\abs{\Theta}}$. Then a natural first attempt to define robustness is the following.

\begin{definition}\label{def:fullyrobust}
An equilibrium with outcome $(p,s^*)$ is \emph{fully robust} if for any pair of neighborhoods $V$ of $p$ and $W$ of $s^*$ there is a neighborhood $U$ of $u_S$ such that for any $\tilde{u}_S \in U$ there is an equilibrium whose outcome is in $V\times W$.
\end{definition}

However, full robustness turns out to be too restrictive; in cheap-talk games with transparent motives, the only payoffs that can result from fully robust equilibria are those that need no communication (see Proposition~\ref{prop:fullyrobust} in Section~\ref{sec:discussion}). Nevertheless, we still want to identify equilibria that are not knife-edge cases for specific sender utilities. Therefore, we consider open sets of perturbed utility functions \emph{nearby} the original one, but not necessarily containing it.

\begin{definition}\label{def:robust}
An equilibrium with outcome $(p,s^*)$ is \emph{robust} if for any triple of neighborhoods $U$ of $u_S$, $V$ of $p$, and $W$ of $s^*$ there is a point $u_S'\in U$ with a neighborhood $U'$ such that for any $\tilde{u}_S \in U'$ there is an equilibrium whose outcome is in $V\times W$.
\end{definition}

In effect, arbitrarily close to $u_S$, we want to have nonempty, open sets of perturbed sender utility functions $\tilde{u}_S$ that support equilibrium outcomes arbitrarily close to the given $(p,s^*)$; see Figure~\ref{fig:robusteqm}.

\begin{figure}[ht]
\centering
\begin{tikzpicture}
\node (O) at (0,0) {};
\node (u) at (0,-10pt) {$u_S$};
\fill (O) circle (1.5pt);
\draw[line width=1pt, dashed] (O) circle (55pt);
\node (U) at (-25pt,15pt) {$U$};
\node (u') at (20pt,20pt) {};
\fill (u') circle (1pt);
\draw[line width=.6pt, dashed] (u') circle (24pt);
\node (U') at (15pt,30pt) {$U'$};
\node (utilde) at (26pt,6pt) {$\tilde{u}_S$};
\fill (16pt,6pt) circle (1.5pt);
\end{tikzpicture}
\caption{Admissible perturbations for a robust (but not fully robust) equilibrium.}
\label{fig:robusteqm}
\end{figure}

The problem of characterizing robust equilibria is greatly simplified by the following two results. Firstly, we need not consider any other equilibrium belief distributions than the given $p$ when we perturb the sender's utility function.

\begin{lemma}\label{lem:samebeliefs}
For any cheap-talk game, the set of robust (or fully robust) equilibria does not change if, in the definition, requiring equilibrium outcomes to be in $V\times W$ is strengthened to them being in $\{p\}\times W$.
\end{lemma}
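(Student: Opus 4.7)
The ``$\{p\}\times W\subseteq V\times W$'' direction is immediate, so the plan is to show the converse: that if a perturbed game admits an equilibrium whose outcome lies in a sufficiently small $V\times W$, then it also admits one whose outcome lies in $\{p\}\times W$. The heart of the argument will be a ``collapsing'' step that converts any equilibrium $(p',\rho')$ of $\Gamma_{\tilde u_S}$, with $p'$ close to $p$, into an equilibrium $(p,\rho^{**})$ of the same perturbed game with identical sender interim payoff function. Once this step is available, the lemma follows by applying the weaker robustness condition with a $V$ tailored to the collapse.

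To set up the collapse, I would enumerate $\supp{p}=\{\mu_1,\dots,\mu_k\}$ with weights $w_i>0$ and pick small pairwise disjoint open balls $B_i\ni\mu_i$ in $\Delta\Theta$ satisfying (a)~$A(\mu)\subseteq A(\mu_i)$ for all $\mu\in B_i$, which follows from upper hemicontinuity of the receiver's best-response correspondence together with finiteness of $A$, and (b)~$\supp{\mu}\supseteq\supp{\mu_i}$ for all $\mu\in B_i$, which follows from continuity of $\mu\mapsto\mu(\theta)$ and finiteness of $\Theta$. Setting $V=\{p'\in\Delta\Delta\Theta:p'(B_i)>0\text{ for every }i\}$ then gives a weakly open neighborhood of $p$, since $p'\mapsto p'(B_i)$ is lower semicontinuous for each open $B_i$; crucially, $V$ depends only on $p$ and on the receiver's primitives, not on $u_S$.

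For the collapse itself, given an equilibrium $(p',\rho')$ of $\Gamma_{\tilde u_S}$ with $p'\in V$, I would define $\rho^{**}(\mu_i)$ as the convex combination of $\rho'(\mu)$ over $\mu\in B_i\cap\supp{p'}$ weighted by $p'(\{\mu\})/p'(B_i)$; any support points of $p'$ outside $\bigcup_i B_i$ are simply dropped. Checking that $(p,\rho^{**})$ is an equilibrium of $\Gamma_{\tilde u_S}$ is then routine: condition~\ref{eqm_p} for $p$ is inherited from the original game; property~(a) places $\rho^{**}(\mu_i)$ in $\Delta A(\mu_i)$, giving condition~\ref{eqm_r}; and for each $\theta\in\supp{\mu_j}$, property~(b) ensures $\theta\in\supp{\mu}$ for every $\mu\in B_j$, so the sender's constraint in $(p',\rho')$ forces $\tilde u_S(\theta)\cdot\rho'(\mu)$ to equal the common interim payoff $\tilde s^{*}(\theta)$ for every such $\mu$. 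Averaging delivers $\tilde u_S(\theta)\cdot\rho^{**}(\mu_j)=\tilde s^{*}(\theta)$, while for arbitrary $i$, $\tilde u_S(\theta)\cdot\rho^{**}(\mu_i)$ is an average of terms each at most $\tilde s^{*}(\theta)$. This simultaneously yields condition~\ref{eqm_s} and shows the new interim payoff function coincides with $\tilde s^{*}\in W$.

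With the collapse available, both the robust and fully robust versions follow: given neighborhoods $U$ and $W$, invoke the original robustness condition with the $V$ above to produce, for every admissible $\tilde u_S$, an equilibrium of $\Gamma_{\tilde u_S}$ with outcome in $V\times W$, and collapse. The subtlety I expect to be the main obstacle is that the weak topology does not allow one to force $p'$ to be supported entirely in $\bigcup_i B_i$; the resolution is that the collapse only needs strictly positive mass on each $B_i$---which is an open condition---and any stray mass outside $\bigcup_i B_i$ can be discarded because we are building a genuinely new equilibrium, not modifying $p'$.
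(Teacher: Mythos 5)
Your proposal is correct and follows essentially the same route as the paper: you build a $u_S$-independent weak neighborhood $V$ of $p$ from upper hemicontinuity of $A(\cdot)$ and the openness of support containment, and then convert any equilibrium of the perturbed game with beliefs in $V$ into a payoff-equivalent equilibrium with belief distribution exactly $p$. The only (harmless) difference is that you average the receiver's mixed actions over all support points of $p'$ in each ball $B_i$, whereas the paper selects a single nearby posterior $f_{\tilde p}(\mu_i)$ per $\mu_i\in\supp{p}$ and sets $\rho=\tilde{\rho}\circ f_{\tilde p}$.
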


Further, we can ignore all belief distributions that are supported by more posterior beliefs than there are states. This consequence of Carath{\'e}odory's theorem still holds in our setting, which considers the sender's expected payoff at the interim and not only the ex ante stage, and, more importantly, also for robust equilibria.

\begin{lemma}\label{lem:caratheodory}
Let $\Gamma$ be a cheap-talk game and suppose $(p,\rho)$ is an equilibrium with $|\supp{p}| > |\Theta|$. Then there is another, sender payoff-equivalent, equilibrium $(p',\rho')$ in which $\supp{p'} \subseteq \supp{p}$, $|\supp{p'}| \leq |\Theta|$, and $\rho'=\rho\vert_{\supp{p'}}$. Further, if $(p,\rho)$ is (fully) robust, so is $(p',\rho')$.
\end{lemma}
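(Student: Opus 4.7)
The plan is to apply Carath\'eodory's theorem to reduce the support of $p$, and then to observe that the same reduction transports automatically to the perturbed games that witness robustness of $(p,\rho)$.

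Since $p$ is Bayes plausible, the prior $\mu_0$ lies in $\co(\supp{p})\subseteq\Delta\Theta$, which has affine dimension $|\Theta|-1$. Carath\'eodory's theorem thus yields a subset $S\subseteq\supp{p}$ with $|S|\leq|\Theta|$ and strictly positive weights $\{p'(\mu)\}_{\mu\in S}$ summing to one such that $\mu_0=\sum_{\mu\in S}p'(\mu)\mu$. I would define $p'$ accordingly, set $\rho'=\rho\vert_{S}$, and verify that $p'$ is Bayes plausible by construction, while obedience~\ref{eqm_r} and sender incentive compatibility~\ref{eqm_s} are directly inherited from $(p,\rho)$ by restricting to the subset $S\subseteq\supp{p}$. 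For sender payoff-equivalence, condition~\ref{eqm_s} applied to $(p,\rho)$ forces $u_S(\theta)\cdot\rho(\mu)$ to equal the common value $s^*(\theta)$ for \emph{every} $\mu\in\supp{p}$ with $\theta\in\supp{\mu}$. Because $\mu_0$ has full support and $p'$ is Bayes plausible with strictly positive weights on $S$, each state $\theta$ lies in $\supp{\mu}$ for some $\mu\in S$, so the common value $s^*(\theta)$ is still attained under $\rho'$ and the maximum defining the sender's payoff function is preserved.

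For the robustness claim, given any neighborhoods $U$ of $u_S$, $V$ of $p'$, and $W$ of $s^*$, I would combine robustness of $(p,\rho)$ with Lemma~\ref{lem:samebeliefs} to obtain $u_S'\in U$ with a neighborhood $U'$ such that for every $\tilde u_S\in U'$ there is an equilibrium $(p,\tilde\rho)$ for $\tilde u_S$ whose payoff function lies in $W$. Restricting $\tilde\rho$ to $S$ yields $(p',\tilde\rho\vert_{S})$, which is still an equilibrium for $\tilde u_S$ by the same inheritance argument, and the incentive-equality argument from the previous paragraph (now applied to $\tilde u_S$) shows that its payoff function coincides with that of $(p,\tilde\rho)$, hence lies in $W$. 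The resulting outcome therefore lies in $\{p'\}\times W\subseteq V\times W$. The full-robustness case proceeds identically, with $U'$ replaced by a neighborhood of $u_S$ itself.

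The main subtlety is recognizing that the \emph{same} subset $S$ works for every perturbed game, because the incentive and obedience constraints on a smaller support are automatically implied by those on any larger support containing it---so no separate case analysis in $\tilde u_S$ is required. Lemma~\ref{lem:samebeliefs} is essential, as it lets us pin the belief distribution of the perturbed equilibria to $p$ itself rather than merely close to $p$, making the Carath\'eodory restriction canonical and yielding outcomes whose belief distribution is exactly $p'$, trivially inside any neighborhood $V$.
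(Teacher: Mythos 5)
Your proposal is correct and follows essentially the same route as the paper: Carath\'eodory's theorem to reduce the support, restriction of $\rho$ to the retained posteriors, and Lemma~\ref{lem:samebeliefs} to pin the perturbed equilibria's belief distribution to $p$ before restricting again. The only cosmetic difference is that you verify equilibrium and sender payoff-equivalence of the restricted pair directly (via the indifference forced by condition~\ref{eqm_s} and the fact that every state appears in the support of some retained posterior), whereas the paper reuses the map-based argument from the proof of Lemma~\ref{lem:samebeliefs} with the identity function.
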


Although not important for our purposes, note that the receiver will not generally be indifferent between two sender payoff-equivalent equilibria.

\subsection{Transparent games}

As mentioned above, we call, following \citet{lipnowski2020cheap}, a cheap-talk game $\Gamma$ \emph{transparent} if the sender's preferences are independent of the state, that is if $u_S(a,\theta)=u_S(a,\theta')$ for all $a\in A$ and $\theta,\theta'\in\Theta$. It follows that there is a unique function $v_S:A \to \R$ such that $u_S(a,\theta)=v_S(a)$ for all $\theta\in\Theta$ and $a\in A$.

\citet{lipnowski2020cheap} introduce the \emph{sender value correspondence}, denoted here by
\begin{equation*}
\cV:\Delta \Theta \to 2^{\R}
\end{equation*}
and given by
\begin{equation*}
\mu \mapsto \co \ v_S\left( A(\mu) \right).
\end{equation*}
By the linearity of expected payoffs, $\cV(\mu)$ contains all sender payoffs from mixed best replies of the receiver to the belief $\mu$. In a transparent cheap-talk game, the sender, whose preferences are state-independent, must expect the same payoff in all states and for all messages sent in equilibrium. Otherwise, they would strictly prefer one message over another and would do better avoiding to send the latter. Given this, \citet{lipnowski2020cheap} characterize equilibrium posterior belief distribution and sender ex ante expected payoff pairs $(p,s)$ as follows:

\begin{lemmaLR}[Lipnowski and Ravid, 2020]\label{lem:LR}
For a given transparent cheap-talk game with sender value correspondence $\cV$, there is an equilibrium inducing the posterior belief distribution $p \in \Delta \Delta \Theta$ and the ex ante expected sender payoff $s \in \R$ if and only if
\begin{enumerate}
\item the posterior belief distribution $p$ is Bayes plausible, and
\item $s\in \bigcap_{\mu\in \supp{p}}\cV(\mu)$.
\end{enumerate}
\end{lemmaLR}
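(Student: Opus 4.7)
The plan is to prove both implications by exploiting the defining feature of transparent motives: the sender's payoff from any receiver strategy $\rho(\mu)$ equals $v_S\cdot\rho(\mu)$, independent of the state $\theta$. This collapses equilibrium condition~\ref{eqm_s} of Definition~\ref{def:eqm} into the requirement that every $\mu\in\supp{p}$ yield the same payoff for the sender: if two posteriors in $\supp{p}$ induced different sender values, the sender would strictly benefit in every state by reporting the one giving the higher value. As a byproduct, the sender's interim payoff is independent of $\theta$ and coincides with the ex ante expected payoff $s$.

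For the ``only if'' direction, suppose $(p,\rho)$ is an equilibrium and let $s$ denote the sender's (ex ante = interim) payoff. Bayes plausibility of $p$ is immediate from Definition~\ref{def:eqm}\ref{eqm_p}. For the intersection property, I use the observation above, combined with the full-support prior (so that every state appears in some $\supp{\mu}$ with $\mu\in\supp{p}$), to conclude that $v_S\cdot\rho(\mu)=s$ for every $\mu\in\supp{p}$. Definition~\ref{def:eqm}\ref{eqm_r} ensures $\supp{\rho(\mu)}\subseteq A(\mu)$, so this common value is a convex combination of elements of $v_S(A(\mu))$ and hence lies in $\cV(\mu)$. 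Therefore $s\in\bigcap_{\mu\in\supp{p}}\cV(\mu)$.

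For the ``if'' direction, suppose $p$ is Bayes plausible and $s\in\bigcap_{\mu\in\supp{p}}\cV(\mu)$. I construct $\rho$ by selection: at each of the finitely many $\mu\in\supp{p}$, membership $s\in\co\,v_S(A(\mu))$ yields weights $\lambda_i\geq 0$ summing to one and actions $a_i\in A(\mu)$ with $\sum_i\lambda_i\,v_S(a_i)=s$, so defining $\rho(\mu)$ as the corresponding mixture on $A(\mu)$ gives $v_S\cdot\rho(\mu)=s$. Then Definition~\ref{def:eqm}\ref{eqm_p} holds by hypothesis, \ref{eqm_r} by construction, and \ref{eqm_s} with equality, since the sender's payoff from any posterior is the constant $s$ regardless of the state.

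The whole argument is essentially bookkeeping once the collapse of the incentive constraint is in place; the only conceptual step is that state-independence of $v_S$ turns the inequality in \ref{eqm_s} into the equation $v_S\cdot\rho(\mu)=v_S\cdot\rho(\mu')$ across the support of $p$. I do not anticipate any genuine obstacle beyond writing this observation down carefully.
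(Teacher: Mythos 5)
Your proof is correct and follows essentially the argument the paper itself sketches just before stating the lemma (the paper gives no formal proof, attributing the result to Lipnowski and Ravid, 2020): transparency collapses equilibrium condition~\ref{eqm_s} into the equality $v_S\cdot\rho(\mu)=v_S\cdot\rho(\mu')$ across $\supp{p}$, and the converse follows by selecting, for each $\mu\in\supp{p}$, a mixture over $A(\mu)$ attaining $s\in\co v_S(A(\mu))$. This matches the standard Lipnowski--Ravid reasoning, so there is nothing to add.
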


The main result of \citet{lipnowski2020cheap} geometrically characterizes the sender's maximal equilibrium value. The quasiconcave envelope of the correspondence $\cV$ is the pointwise lowest quasiconcave function that majorizes $\max \cV$.\footnote
{The definition becomes slightly more complicated without the restriction to a finite state space.} 

\begin{theoremLR}[Lipnowski and Ravid, 2020]\label{thm:LR}
The sender's maximal equilibrium value (for any prior $\mu_0$) is given by the quasiconcave envelope of $\max \cV$, evaluated at $\mu_0$.
\end{theoremLR}

\section{Basic examples}\label{sec:examples}

With two states, Lemma~\ref{lem:caratheodory} shows that we can restrict ourselves to equilibria that induce only two posteriors. For such equilibria, Proposition~\ref{prop:binarybeliefrobust} below characterizes robustness completely. This characterization allows us to prove our main theorem,  Theorem~\ref{thm:smax}, which shows that every sender-optimal payoff can be robustly obtained. Crucially, in a robust equilibrium with two posteriors, one posterior must be degenerate or there must be enough best replies available to the receiver. We illustrate these two points by two simple examples. We don't solve these examples fully here; the gaps can be filled using Proposition~\ref{prop:binarybeliefrobust} below.

In Example~1, the leading example in \citet{lipnowski2020cheap}, there are two states, $\Theta=\{\theta_1,\theta_2\}$, distributed with a uniform prior $\mu_0$; the receiver has three actions available, $A=\{0,1,2\}$; the sender has a state-independent utility function given by $u_S(a,\theta)=v_S(a)=a$; and the receiver's state-dependent utility function is given by the table in Figure~\ref{fig:ex1payoffs}. Figure~\ref{fig:ex1v} illustrates the induced sender value correspondence and its quasiconcave envelope.

\begin{figure}[ht]
\centering
    \begin{tabular}{cc|cc}
      & \multicolumn{1}{c}{} & \multicolumn{2}{c}{}\\
      & \multicolumn{1}{c}{} & \multicolumn{1}{c}{$\theta_1$}  & \multicolumn{1}{c}{$\theta_2$} \\\cline{2-4}
      & $0$ & $3$ & $3$ \\
      & $1$ & $4$ & $0$ \\
       & $2$ & $0$ & $4$ \\
    \end{tabular}
    \caption{\label{fig:ex1payoffs} The receiver's payoffs in Example~1. }
\end{figure}

\begin{figure}[ht]
\centering
\begin{tikzpicture}
\node (0) {}
node[label=right:$\mu(\theta_2)$] (x) at (8,0) {}
node[label=below:\vphantom{$\frac14$}$0$] (x0) at (2,0) {}
node[label=below:$\frac14$] (x1) at (3.5,0) {}
node[label=below:\vphantom{$\frac14$}$\mu_0$] (x2) at (5,0) {}
node[label=below:$\frac34$] (x3) at (6.5,0) {}
node[label=below:\vphantom{$\frac14$}$1$] (x4) at (8,0) {}
node[label=left:$1$] (y1) at (2,2) {}
node[label=left:$2$] (y2) at (2,4) {}
node[label=above:$s$] (y3) at (2,5) {};
  \draw[->,thick] (2,0) -- (2,5);
  \draw[-,thick] (2,0) -- (8,0); \draw[-,thick] (8,-.1) -- (8,.1); \draw[-,thick] (2,-.1) -- (2,.1);
  \draw[very thick] (2,2) -- (3.5,2) -- (3.5,0) -- (6.5,0) -- (6.5,4) -- (8,4);
  \draw[very thick,dashed] (2,2) -- (6.5,2);
  \draw[very thick,dashed] (6.5,4) -- (8,4);
\end{tikzpicture}
\caption{\label{fig:ex1v} The sender value correspondence induced by Example~1, given by the thick black line. The dashed line is the quasiconcave envelope that characterizes the achievable sender payoffs under cheap talk \citep{lipnowski2020cheap}.}
\end{figure}

The sender's maximum expected equilibrium payoff is the value of the quasiconcave envelope at $\mu_0$; here it is $1$. By Lemma~LR, every sender payoff $s\in [0,1]$ is attainable in an equilibrium. However, only the two extreme payoffs, $s=1$ and $s=0$, are also attainable in a robust equilibrium. For $s=0$, this is straightforward; this is the payoff of a babbling equilibrium.

To see that no equilibrium  with a sender payoff in $(0,1)$ can be robust, note first that, by Lemma~LR, any such equilibrium must induce a Bayes plausible posterior belief distribution with the two posteriors $\frac14$ and $\frac34$ (of the state being $\theta_2$). A nearby equilibrium in a nearby game must attach positive probability to these two posteriors, too. This is only possible if the sender is indifferent between inducing both posteriors in both states. Since the receiver plays best replies given their beliefs, they have to play a mixture of $0$ and $1$ at posterior $\frac{1}{4}$ and a mixture of $0$ and $2$ at $\frac34$. Write the receiver's mixtures over the three actions at both beliefs as vectors in $\mathbb{R}^3$ and let the vector $x$ be the difference between the two resulting vectors. The inner product of $x$ with the payoff vector must be $0$ at both posteriors, and the entries of $x$ must sum to $0$. This gives rise to three linear equations in three unknowns. For generic payoff assignments, the system of equations has full rank and a unique solution. But one solution is $x=0$, which amounts to the receiver playing the same mixture at both states. This is then generically the unique solution. However, in this example, the receiver can only play the same mixture at both posteriors if they play $0$ for sure, which would not induce a sender payoff in $(0,1)$.

Obtaining a robust equilibrium with the sender-optimal payoff of $1$, however, is possible. This payoff can be induced using the posteriors $0$ and $\frac34$. To induce this pair of posteriors, the sender need not be indifferent in both states anymore, because these posteriors are consistent with the sender inducing \emph{always} $\frac34$ in state $\theta_2$ (and still inducing both posteriors with some probability in state $\theta_1$). Just a preference for inducing $\frac34$ in state $\theta_2$ (and indifference only in state $\theta_2$) can indeed be achieved for an open set of sender utility functions close to the original state-independent one by the receiver choosing a suitable mixture of $0$ and $2$ at $\frac34$.

Example~2 is a slight modification of Example~1. The main difference is that there are now four actions, $A=\{0,1,2,3\}$. The prior is still uniform, and the sender's state-independent utility function still satisfies $u_S(a,\theta)=v_S(a)=a$. The receiver now has payoffs given in Figure~\ref{fix:ex2payoffs}. Figure~\ref{fig:ex2v} shows the sender value correspondence and its quasiconcave envelope.

\begin{figure}[ht]
\centering
       \begin{tabular}{cc|cc}
      & \multicolumn{1}{c}{} & \multicolumn{2}{c}{}\\
      & \multicolumn{1}{c}{} & \multicolumn{1}{c}{$\theta_1$}  & \multicolumn{1}{c}{$\theta_2$} \\\cline{2-4}
      & $0$ & $3$ & $3$ \\
      & $1$ & $5$ & $-7$ \\
       & $2$ & $4$ & $0$ \\
       & $3$ & $0$ & $4$ \\
          \end{tabular}
          \caption{\label{fix:ex2payoffs} The receiver's payoffs in Example~2.}
\end{figure}

\begin{figure}[ht]
\centering
\begin{tikzpicture}
\node (0) {}
node[label=right:$\mu(\theta_2)$] (x) at (8.8,0) {}
node[label=below:\vphantom{$\frac14$}$0$] (x0) at (2,0) {}
node[label=below:$\frac18$] (x1) at (2.85,0) {}
node[label=below:$\frac14$] (x2) at (3.7,0) {}
node[label=below:\vphantom{$\frac14$}$\mu_0$] (x3) at (5.4,0) {}
node[label=below:$\frac34$] (x4) at (7.1,0) {}
node[label=below:\vphantom{$\frac14$}$1$] (x5) at (8.8,0) {}
node[label=left:$1$] (y1) at (2,1.5) {}
node[label=left:$2$] (y2) at (2,3) {}
node[label=left:$3$] (y3) at (2,4.5) {}
node[label=above:$s$] (y3) at (2,5.5) {};
  \draw[->,thick] (2,0) -- (2,5.5);
  \draw[-,thick] (2,0) -- (8.8,0); \draw[-,thick] (8.8,-.1) -- (8.8,.1); \draw[-,thick] (2,-.1) -- (2,.1);
  \draw[very thick] (2,1.5) -- (2.85,1.5) -- (2.85,3) -- (3.7,3) -- (3.7,0) -- (7.1,0) -- (7.1,4.5) -- (8.8,4.5);
  \draw[very thick,dashed] (2,1.5) -- (2.85,1.5);
  \draw[very thick,dashed] (2.85,3) -- (7.1,3);
  \draw[very thick,dashed] (7.1,4.5) -- (8.8,4.5);
\end{tikzpicture}
\caption{\label{fig:ex2v} The sender value correspondence induced by Example~2 and its quasiconcave envelope.}
\end{figure}

Using again Lemma~LR, every payoff in $[0,2]$ is an equilibrium payoff for the sender. The payoff $0$ can be robustly obtained using a babbling equilibrium. A payoff in $(0,1)$ cannot be robustly obtained; the reason is the same as in the last example. But now, every sender payoff in $[1,2]$ can be robustly obtained by inducing the two posteriors $\frac 18$ and $\frac34$. Now, the relevant mixtures are over four actions. Forming the same system of linear equations gives us now three equations in four unknowns. Generically, we get a one-dimensional space of solutions, which turns out to be enough for satisfying all relevant incentive constraints.

\section{Main result}\label{sec:results}

\begin{theorem}\label{thm:smax}
Let $\Gamma$ be a transparent cheap-talk game with binary state space. Then the sender's maximal equilibrium value is attainable in a robust equilibrium.
\end{theorem}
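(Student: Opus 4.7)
My plan is to combine the Quasiconcavification Theorem with the characterization of robust binary-belief equilibria in Proposition~\ref{prop:binarybeliefrobust}. Write $s^*$ for the sender's maximal equilibrium value; by the Quasiconcavification Theorem this equals the value at $\mu_0$ of the quasiconcave envelope of $\max\cV$. I would split on whether $s^* = \max\cV(\mu_0)$.

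If $s^* = \max\cV(\mu_0)$, I pick $a^* \in A(\mu_0)$ with $v_S(a^*) = s^*$ and use the babbling equilibrium in which the receiver plays $a^*$. This is robust because $A(\mu_0)$ depends only on $u_R$, so babbling with $a^*$ remains an equilibrium under every perturbation $\tilde u_S$ near $u_S$, and the induced interim payoff $\tilde u_S(a^*,\cdot)$ varies continuously with $\tilde u_S$ and stays close to $(s^*,s^*)$.

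If $s^* > \max\cV(\mu_0)$, Lemma~LR together with Lemma~\ref{lem:caratheodory} delivers a two-posterior equilibrium with posteriors $\mu_1 < \mu_0 < \mu_2$ satisfying $s^* \in \cV(\mu_1) \cap \cV(\mu_2)$. I would exploit the freedom in choosing this pair so as to meet the conditions of Proposition~\ref{prop:binarybeliefrobust}. By upper hemi-continuity of the best-reply correspondence $A$, the set $K := \{\mu : s^* \in \cV(\mu)\}$ is closed, and since $s^* > \max\cV(\mu_0)$ excludes $\mu_0$ from $K$, it splits into nonempty subsets $K_L \subseteq [0,\mu_0)$ and $K_R \subseteq (\mu_0,1]$. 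I take the extremal pair $\mu_1^\ast := \min K_L$ and $\mu_2^\ast := \max K_R$. If $\mu_1^\ast$ or $\mu_2^\ast$ is a Dirac on a single state, the degenerate-posterior clause of Proposition~\ref{prop:binarybeliefrobust} immediately yields a robust equilibrium. Otherwise both are interior, and extremality forces $\cV$ to ``lose'' $s^*$ just outside each $\mu_i^\ast$; upper hemi-continuity of $A$ then equips $A(\mu_i^\ast)$ with both the actions that become unavailable just past $\mu_i^\ast$ and the actions that survive past, which I would verify suffices for the ``enough best replies'' clause of Proposition~\ref{prop:binarybeliefrobust}.

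The main obstacle is the interior sub-case. A common action $a \in A(\mu_1^\ast) \cap A(\mu_2^\ast)$ would be optimal throughout $[\mu_1^\ast,\mu_2^\ast]$ by convexity of optimal-action intervals, hence optimal at $\mu_0$ with $v_S(a) \leq \max\cV(\mu_0) < s^*$; such a common low-value action causes the perturbed sender's incentive constraints to collapse to a trivial solution, as for the non-extremal pair in Example~1. I plan to rule this out at the sender-optimal extremal pair by exploiting the specific structure of $K$ at $s^*$: some action with $v_S$-value equal to $s^*$ must be present at each $\mu_i^\ast$, and a careful analysis of which actions must drop out past each extremal should preclude a shared low-$v_S$ action. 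If the sender-optimal mixtures nonetheless sit at corners of the simplex, I resolve this by centering the robust-perturbation neighborhood at a nearby $u_S'$ yielding a strictly interior mixture and sender payoff within any prescribed neighborhood of $s^*$, permitted by the tolerance built into the definition of robustness.
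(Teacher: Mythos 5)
Your reduction to a two-posterior equilibrium at the extremal pair and your first case (babbling) match the paper, but the interior sub-case has a genuine gap: the step where you plan to ``preclude a shared low-$v_S$ action'' at the sender-optimal extremal pair is false in general. Concretely, augment Example~1 by a fourth action $b$ with receiver payoffs $(5,-3)$ and sender value $1$, so the four actions have sender values $1,2,0,3$ and receiver payoff lines $5-8\mu$, $4-4\mu$, $3$, $4\mu$. With the uniform prior, $\max\cV(\mu_0)=0$ and $s=c(\mu_0)=2$, and the only belief pair supporting $s$ is $(\mu,\mu')=(\tfrac14,\tfrac34)$ --- both interior --- yet the action with sender value $0$ is optimal at both beliefs, so $A(\mu)\cap A(\mu')\neq\emptyset$; by Lemma~\ref{lem:Aintersection} the optimal set is constant on $(\mu,\mu')$, so neither condition~\ref{a)} nor condition~\ref{b)} of Proposition~\ref{prop:srobust} holds, and no choice of pair can avoid the shared action. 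Robustness still obtains, but only because $A(\tfrac14)=\{b,a_1,a_0\}$ has three elements, i.e., through the clause $\abs{A(\mu)\cup A(\mu')}\geq 4$ of Proposition~\ref{prop:binarybeliefrobust}. So the correct resolution of your ``main obstacle'' is not to rule out the common action but to show that its presence forces a third optimal action at one of the extremal posteriors. That is exactly what the paper does: since $s=c(\mu_0)$, the value correspondence is capped by $s$ on at least one side of the prior, which yields two distinct actions with sender value at most $s$ that are each optimal at some belief (the ``further sufficient condition'' of Proposition~\ref{prop:srobust}); the nontrivial implication from this condition to $\abs{A(\mu)}\geq 3$ is the delicate part of the proof of Proposition~\ref{prop:srobust}. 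Your proposal leaves precisely this work undone, and your closing idea of recentering the perturbation neighborhood at a nearby $u_S'$ does not touch it, since the obstruction is about the receiver's best-reply structure, not about which sender utility anchors the neighborhood.

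A secondary imprecision: when exactly one of the extremal posteriors is degenerate, the relevant clause is condition~\ref{threeactions} of Proposition~\ref{prop:binarybeliefrobust}, which additionally requires $\abs{A(\mu)\cup A(\mu')}\geq 3$; this does hold at the extremal pair, but it follows from the extremality and upper-hemicontinuity argument in the proof of Proposition~\ref{prop:srobust} (plus $s>\max\cV(\mu_0)$ when the two best-reply sets intersect), not ``immediately'' as you assert.
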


The proof of Theorem~\ref{thm:smax} ultimately relies on a number of auxiliary results. In this section, we present the central underlying results that effectively provide a characterization of all sender payoffs that are attainable in a robust equilibrium (if the state space is binary). From this characterization, we will then derive a more graphical one in Section~\ref{sec:graphical}, and that will finally enable us to give a short proof of Theorem~\ref{thm:smax}. 

To begin, we first do away with the trivial case of a \emph{babbling} equilibrium, which means that with probability one the receiver sticks to the prior belief. 

\begin{proposition}\label{prop:babbling}
Any babbling equilibrium is fully robust.
\end{proposition}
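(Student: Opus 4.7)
The plan is to exploit the fact that a babbling equilibrium is characterized solely by the receiver's behavior at the prior belief $\mu_0$, with no role for the sender's utility function whatsoever. So the same strategy profile will remain an equilibrium after any perturbation of $u_S$, and only the induced interim payoff will change, and it will do so continuously.

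Concretely, fix a babbling equilibrium $(p,\rho)$ with $\supp p=\{\mu_0\}$ and $\supp\bigl(\rho(\mu_0)\bigr)\subseteq A(\mu_0)$. Its outcome is $(p,s^*)$ with $s^*(\theta)=u_S(\theta)\cdot\rho(\mu_0)$ for every $\theta\in\Theta$. I would then verify, for any $\tilde{u}_S\in\cU$, that the unchanged pair $(p,\rho)$ is still an equilibrium of the perturbed game $\langle\Theta,\mu_0,A,\tilde{u}_S,u_R\rangle$: Bayes plausibility of $p$ does not involve $u_S$; optimality of $\rho$ for the receiver depends only on $u_R$ and $\mu_0$, both of which are fixed; and the sender incentive condition~\ref{eqm_s} in Definition~\ref{def:eqm} is vacuous because $\supp p=\{\mu_0\}$ is a singleton, so there is no alternative posterior $\mu'\in\supp p$ to deviate to.

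Given this, the outcome of the unchanged profile in the perturbed game is $(p,\tilde s^*)$ where $\tilde s^*(\theta)=\tilde u_S(\theta)\cdot\rho(\mu_0)$. Since $\rho(\mu_0)$ is a fixed vector in $\Delta A$, the map $\tilde u_S\mapsto\tilde s^*$ is linear, hence continuous, from $\cU$ to $\R^{|\Theta|}$. Thus, for any neighborhood $W$ of $s^*$, there is a neighborhood $U$ of $u_S$ such that $\tilde s^*\in W$ for all $\tilde u_S\in U$. Combined with $p\in V$ for any neighborhood $V$ of $p$ (trivially, since the posterior distribution is literally unchanged), this verifies the condition in Definition~\ref{def:fullyrobust}.

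There is essentially no obstacle: the only thing to notice is that babbling decouples the receiver's play from $u_S$ entirely, which in turn makes the interim payoff function depend continuously (in fact linearly) on the perturbation. Full robustness — not just the weaker robustness of Definition~\ref{def:robust} — follows because the same equilibrium profile works for \emph{every} nearby $\tilde u_S$, not just for those in some nearby open set.
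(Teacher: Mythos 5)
Your proposal is correct and follows essentially the same argument as the paper: the babbling profile survives every perturbation of $u_S$ because the sender's incentive constraint is vacuous with a singleton belief support, and the interim payoff $\tilde u_S(\theta)\cdot\rho(\mu_0)$ varies continuously with $\tilde u_S$, which yields full robustness. No gaps.
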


\begin{proof}
Let $\Gamma$ be a cheap-talk game and $(p,\rho)$ an equilibrium with $p(\mu_0)=1$. Then the same equilibrium persists for \emph{any} perturbed sender utility function $\tilde{u}_S$ by the simple fact that there is no belief $\mu \neq \mu_0$ in the support of $p$. The sender's interim expected payoff in this equilibrium for the perturbed game then is $\tilde{s}^*$ given by $\tilde{s}^*(\theta)=\tilde{u}_S(\theta)\cdot\rho(\mu_0)$, which depends continuously on $\tilde{u}_S$.
\end{proof}

By Lemma~LR, the sender payoffs in a transparent cheap-talk game that can be sustained by a babbling equilibrium are all $s\in\cV(\mu_0)$, so we now turn to payoffs $s>\max\cV(\mu_0)$.

Assume, from now on, that the state space is indeed binary. Then, by Proposition~\ref{prop:babbling} and Lemma~\ref{lem:caratheodory}, we only need to characterize the robust equilibria with exactly two posteriors in the support of the belief distribution. We achieve this in the following proposition. To simplify notation, let specifically $\Theta=\{\theta_1,\theta_2\}$; this allows us to identify any belief $\mu\in\Delta\Theta$ by the probability it assigns to $\theta_2$, so that we can treat $\mu$ as a number in $[0,1]$.

\begin{proposition}\label{prop:binarybeliefrobust}
Let $\Gamma$ be a transparent cheap-talk game with binary state space. An equilibrium with an expected sender payoff $s>\max\cV(\mu_0)$ and a belief distribution $p$ having binary support is robust if and only if one of the following holds:
\begin{enumerate}[label=\arabic*.]
\item\label{fullyinf}
$\supp{p}=\{0,1\}$.
\item\label{threeactions}
$\supp{p}$ contains a posterior $\mu\in\{0,1\}$ and $\abs{\bigcup_{\mu\in\supp{p}}A(\mu)}\geq 3$.
\item\label{fouractions}
$\abs{\bigcup_{\mu\in\supp{p}}A(\mu)}\geq 4$.
\end{enumerate}
\end{proposition}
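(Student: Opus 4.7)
The plan uses Lemma~\ref{lem:samebeliefs} to fix $\supp{p} = \{\mu_1,\mu_2\}$ throughout, so only $\tilde\rho$ varies. Writing $\tilde{x} = \tilde{\rho}(\mu_1) - \tilde{\rho}(\mu_2)$, the IC for a candidate perturbed equilibrium at $\tilde{u}_S$ reads $\tilde{u}_S(\theta)\cdot \tilde{x}\ge 0$ for $\theta\in\supp{\mu_1}$ and $\tilde{u}_S(\theta)\cdot \tilde{x}\le 0$ for $\theta\in\supp{\mu_2}$, with each state lying in both supports contributing a combined equality $\tilde u_S(\theta)\cdot\tilde x=0$. A preliminary observation, used throughout, is that $s > \max\cV(\mu_0)$ forces $x := \rho(\mu_1)-\rho(\mu_2) \neq 0$, for otherwise $\rho(\mu_1) = \rho(\mu_2)$ is supported on $A(\mu_1)\cap A(\mu_2) \subseteq A(\mu_0)$, putting $s$ into $\cV(\mu_0)$.

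For sufficiency I handle the three cases in increasing order of difficulty. In Case~1 both posteriors are degenerate and the IC reduces to two strict inequalities in opposite directions; I choose $u_S' = v_S + \eta$ with $\eta(\theta_1)\cdot x > 0$ and $\eta(\theta_2)\cdot x < 0$, and since strictness is an open condition, the equilibrium with $\tilde\rho = \rho$ persists in a neighborhood of $u_S'$. In Case~2 I have one equality (for the state in both supports) plus one inequality; after picking $u_S'$ making the inequality strict, I apply the implicit function theorem to the single equality, checking that its Jacobian in $\tilde\rho$---the functional $\delta \mapsto v_S \cdot (\delta_1 - \delta_2)$ on the tangent space of $\Delta A(\mu_1) \times \Delta A(\mu_2)$---is nonzero, which follows from $s > \max\cV(\mu_0)$ together with $|A(\mu_1) \cup A(\mu_2)| \ge 3$. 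In Case~3 the two equalities $\tilde u_S(\theta_j)\cdot\tilde x = 0$ coincide at $v_S$, so the admissible $\tilde u_S$ form only a codimension-$1$ set locally; I therefore move along the equilibrium set $E = \{(u_S,\tilde\rho):\text{IC at }u_S\}$ to a nearby $(u_S',\tilde\rho')$ at which the two rows $u_S'(\theta_j)\cdot(\delta_1-\delta_2)$ are linearly independent on the tangent space, and then apply the standard implicit function theorem at $(u_S',\tilde\rho')$ to obtain equilibria on an open neighborhood of $u_S'$. The rank condition needs tangent-space dimension at least $2$ together with a transversality condition on $(u_S'(\theta_1), u_S'(\theta_2))$ modulo constants on each $A(\mu_i)$, both provided by $|A(\mu_1) \cup A(\mu_2)| \ge 4$.

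For necessity, a dimension count shows that outside conditions 1--3 the set of $u_S$ near $v_S$ admitting a perturbed equilibrium with support $\{\mu_1,\mu_2\}$ and sender payoff near $s^*$ is contained in a codimension-$1$ set, hence contains no open neighborhood. The critical boundary sub-cases are Case~3 with $|A(\mu_1)\cup A(\mu_2)|=3$ and Case~2 with $|A(\mu_1)\cup A(\mu_2)|=2$; in each, the linearized IC system is either over-determined for generic $\tilde u_S$ or forces $\tilde x \to 0$, collapsing $\tilde\rho$ onto $A(\mu_1) \cap A(\mu_2) \subseteq A(\mu_0)$ and thus yielding a perturbed sender payoff close to a value in $\cV(\mu_0)$, which is bounded away from $s^* > \max\cV(\mu_0)$.

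The main obstacle is Case~3. At the transparent $v_S$ the Jacobian of the two-equation IC system has rank only $1$, so the implicit function theorem fails there directly. The delicate step is constructing the nearby point $(u_S', \tilde\rho')\in E$ with full-rank Jacobian while ensuring $\tilde\rho'$ stays close to $\rho$ rather than drifting off along the codimension-$1$ family of equilibria available at $v_S$; this is where the hypothesis $|A(\mu_1)\cup A(\mu_2)|\ge 4$ is substantively used. A subsidiary technical issue, easily handled, is the boundary effect when some $\rho(\mu_i,a) = 0$: if necessary, I pass to a payoff-equivalent equilibrium with $\rho(\mu_i)$ of full support on $A(\mu_i)$.
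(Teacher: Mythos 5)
Your overall architecture (fix $p$ via Lemma~\ref{lem:samebeliefs}, encode the receiver's reply by the difference $x=\tilde\rho(\mu)-\tilde\rho(\mu')$, treat the degenerate/nondegenerate posterior cases through equalities and inequalities, dimension count for necessity) is the paper's, but your sufficiency argument for Case~2 (condition~\ref{threeactions}) has a genuine gap. You claim the Jacobian functional $\delta\mapsto v_S\cdot(\delta_1-\delta_2)$ on the tangent space of $\Delta A(\mu_1)\times\Delta A(\mu_2)$ is nonzero because $s>\max\cV(\mu_0)$ and $\abs{A(\mu_1)\cup A(\mu_2)}\geq 3$. That implication is false: the functional vanishes identically exactly when $v_S$ is constant on $A(\mu_1)$ and on $A(\mu_2)$, and in any equilibrium those constants must both equal $s$; this is perfectly compatible with condition~\ref{threeactions} and $s>\max\cV(\mu_0)$ whenever $A(\mu_1)\cap A(\mu_2)=\emptyset$. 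Concretely, take $A(0)=\{a_1,a_2\}$, $A(\mu')=\{a_3\}$, $A(\mu_0)=\{a_4\}$ with $v_S(a_1)=v_S(a_2)=v_S(a_3)=1>0=v_S(a_4)$ (receiver payoffs realizing these argmax sets are easy to construct with two states). Then your chosen base point $(u_S',\rho)$, where $u_S'$ only makes the inequality for the other state strict, has a zero derivative in $\tilde\rho$, so the implicit function theorem does not apply; and if you instead perturb $u_S'(\theta_1)$ to restore the rank, the equality $u_S'(\theta_1)\cdot x=0$ no longer holds at $\rho$, so you must simultaneously re-solve for a nearby base reply---and this is only possible for a restricted (open but not full) set of perturbation directions, which is precisely why robustness rather than full robustness is the right notion. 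In other words, the ``move along the equilibrium set to a better base point'' maneuver that you correctly reserve for Case~3 is equally indispensable in Case~2; the paper's Proposition~\ref{prop:opensetoneindiffonepref} does exactly this work, constructing inside every neighborhood of $(u_S,u_S)$ an open product set of utilities and, for each element, a possibly different $x\in\cD$ near $\rho(\mu)-\rho(\mu')$ with $u_1\cdot x=0\geq u_2\cdot x$, including the degenerate subcase $u_0\cdot v_1=u_0\cdot v_2=0$ that defeats your rank claim.

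Two further points. First, your proposed fix for boundary effects---passing to a ``payoff-equivalent equilibrium with $\rho(\mu_i)$ of full support on $A(\mu_i)$''---is not generally available: unless $v_S$ is constant on $A(\mu_i)$, changing the mixture changes the sender's payoff and breaks the indifference conditions, so the modified profile need not be an equilibrium at all. The paper instead solves in the coefficients of a fixed spanning set of directions in $\cD$, pushes the solution into the interior of the positive orthant (Proposition~\ref{prop:opensettwoindiff}), and then uses Proposition~\ref{prop:decomp} to convert the solved $x$ back into a pair $(\tilde\rho(\mu),\tilde\rho(\mu'))$ close to $(\rho(\mu),\rho(\mu'))$, which is what keeps the interim payoffs near $s^*$. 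Second, in Case~3 your stated requirement ``tangent-space dimension at least $2$'' is too weak: with two generically independent equalities and a two-dimensional span of $\cD$ the only solution is $x=0$ (that is the necessity argument); you need dimension at least $3$, which $\abs{A(\mu_1)\cup A(\mu_2)}\geq 4$ does provide via Lemma~\ref{lem:dimD}. Your necessity sketch is essentially the paper's argument and is fine in outline.
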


Proposition~\ref{prop:binarybeliefrobust} is the main characterization result, but its proof (given in Appendix~\ref{app:binarybeliefrobust}) involves a number of steps and technical results, so we provide only a sketch here. Recall that we can fix the belief distribution $p$ thanks to Lemma~\ref{lem:samebeliefs}.

The easier part of the equivalence is necessity. Condition~\ref{fullyinf} means that there is no further restriction if the equilibrium is fully informative. So suppose $p$ involves a posterior belief that is supported on both states. Then the sender must be indifferent in some state. There are three ways to achieve this indifference. First, the receiver might use the same mixed action for both posteriors. In this case, it is possible to show that there exists also a babbling equilibrium with the same expected sender payoff, which is in conflict with $s>\max\cV(\mu_0)$. Second, the sender might derive the same utility from different actions that the receiver uses. Such sender utility functions are, however, ``nongeneric'' and contained in a lower-dimensional space. Third, the receiver might have enough degrees of freedom to choose different mixed actions for the two posteriors (such that the corresponding ranges of sender payoffs overlap). This amounts to three optimal actions for the receiver when the sender needs to be indifferent in only one state (condition~\ref{threeactions}), and to four optimal actions when indifference must hold in both states (condition~\ref{fouractions}).

To see the latter in more detail, and to prepare for sketching also the proof of sufficiency, let the two posteriors in the support of $p$ be $\mu$ and $\mu'$. Then any equilibrium $(p,\tilde{\rho})$ for a perturbed sender utility function $\tilde{u}_S$ can be characterized in terms of the difference $x=\tilde{\rho}(\mu)-\tilde{\rho}(\mu')$. In particular, the set of all such $x$ where $\tilde{\rho}$ satisfies equilibrium condition~\ref{eqm_r} is
\begin{equation*}
\cD:=\setd{r-r'}{\text{$r,r'\in\Delta A$, $\supp{r}\subseteq A(\mu)$, and $\supp{r'}\subseteq A(\mu')$}}.
\end{equation*}
Further, the sender is indifferent in state $\theta$ if and only if $\tilde{u}_S(\theta)\cdot x=0$. We, thus, need a nonnull $x\in\cD$ that solves the indifference equation(s). This requires $\cD$ to be big enough, and the dimension of the linear span of $\cD$ is precisely $\abs{A(\mu)\cup A(\mu')}-1$ (see Lemma~\ref{lem:dimD} in Appendix~\ref{app:D}).

The proof of sufficiency consists of two steps and uses the same representation of equilibria in terms of the vectors $x$. But now we also need to take care of the sender's (weak) preference if some state is revealed to the receiver, i.e., if some posterior belief is supported by only one state $\theta$. Then equilibrium condition~\ref{eqm_s} corresponds to an inequality of the form $\tilde{u}_S(\theta)\cdot x\geq 0$ or $\tilde{u}_S(\theta)\cdot x\leq 0$. Hence, letting
\begin{equation*}
H_{x}=\setd{y\in\R^n}{y\cdot x=0},
\end{equation*}
\begin{equation*}
H_{x}^{+}=\setd{y\in\R^n}{y\cdot x\geq 0},
\end{equation*}
and 
\begin{equation*}
H_{x}^{-}=\setd{y\in\R^n}{y\cdot x\leq 0}
\end{equation*}
for any $x\in\R^n$, the set of utility functions $\tilde{u}_S$ for which there is an equilibrium $(p,\tilde{\rho})$ is 
\begin{equation*}
\bigcup_{x\in\cD}H_{x}^{+}\times H_{x}^{-}
\end{equation*}
if $\mu=0$ and $\mu'=1$,
\begin{equation*}
\bigcup_{x\in\cD}H_{x}\times H_{x}^{-}
\end{equation*}
if $\mu=0$ and $\mu'<1$,
\begin{equation*}
\bigcup_{x\in\cD}H_{x}^{+}\times H_{x}
\end{equation*}
if $\mu>0$ and $\mu'=1$, and
\begin{equation*}
\bigcup_{x\in\cD}H_{x}\times H_{x}
\end{equation*}
if $\mu>0$ and $\mu'<1$. In the first step of the proof, we address each of these four cases and show that, if $\cD$ is big enough, the given set of utility functions has nonempty interior, that the interior actually has a nonempty intersection with any neighborhood of the state-independent utility function $u_S$, and that the same is still true if we consider only those $x\in\cD$ that are in an arbitrary neighborhood of $\rho(\mu)-\rho(\mu')$ (where $\rho$ is from the given equilibrium for $u_S$). Most of the technical work is necessary for the case of one equality paired with an inequality and the case of two equalities, so the specific arguments for the two cases are presented as separate propositions (in Appendix~\ref{app:binarybeliefrobust}).

In the second step, we then show that any $x\in\cD$ sufficiently close to $\rho(\mu)-\rho(\mu')$ has in fact a representation $x=\tilde{\rho}(\mu)-\tilde{\rho}(\mu')$ with $\tilde{\rho}(\mu)$ arbitrarily close to $\rho(\mu)$ and $\tilde{\rho}(\mu')$ arbitrarily close to $\rho(\mu')$ (see Proposition~\ref{prop:decomp} in Appendix~\ref{app:binarybeliefrobust}). This implies that we can make the interim expected sender payoff in the equilibrium $(p,\tilde{\rho})$ as close as desired to the one in $(p,\rho)$, which finally yields robustness.

Combining Proposition~\ref{prop:babbling}, Lemma~\ref{lem:caratheodory}, and Proposition~\ref{prop:binarybeliefrobust}, we now have a full characterization of all sender payoffs that are attainable in a robust equilibrium. However, this characterization is not very convenient, yet, for proving Theorem~\ref{thm:smax}.

\section{A graphical characterization}\label{sec:graphical}

%
%
%

From Proposition~\ref{prop:binarybeliefrobust}, we can in fact derive an alternative characterization of all sender payoffs that are attainable in a robust equilibrium (other than a babbling equilibrium, which is always robust by Proposition~\ref{prop:babbling}). This new characterization uses only the prior $\mu_0$ and the sender payoff $s$ but not a specific equilibrium. It is particularly useful for a visual analysis of the graph of the sender value correspondence when the receiver's optimal actions change more often than in the examples given in Section~\ref{sec:examples}. But it will also make a short proof of Theorem~\ref{thm:smax} possible (see Subsection~\ref{sec:proofsmax}). 

Recall that, by the Quasiconcavification Theorem of \citet{lipnowski2020cheap}, the maximal equilibrium payoff for the sender is $s=c(\mu_0)$, where $c$ denotes the quasiconcave envelope of the sender value correspondence $\cV$.

\begin{proposition}\label{prop:srobust}
Let $\Gamma$ be a transparent cheap-talk game with binary state space. Assume $c(\mu_0)>\max\cV(\mu_0)$ and consider any sender payoff $s$ such that $c(\mu_0)\geq s>\max\cV(\mu_0)$. Then there are two beliefs $\mu,\mu'\in\Delta\Theta$ satisfying
\begin{equation}\label{extremesupport}
\mu=\min\setd{\tilde{\mu}\in\Delta\Theta}{\text{$\tilde{\mu}<\mu_0$ and $s\in\cV(\tilde{\mu})$}} \quad \text{and} \quad \mu'=\max\setd{\tilde{\mu}\in\Delta\Theta}{\text{$\tilde{\mu}>\mu_0$ and $s\in\cV(\tilde{\mu})$}}.
\end{equation}
Fix these two beliefs. There exists a robust equilibrium attaining $s$ if and only if one of the following holds:
\begin{enumerate}[label=\alph*)]
\item\label{a)} $s\in\cV(0)$ or $s\in\cV(1)$.
\item\label{b)} There exists a belief $\tilde{\mu}\in(\mu,\mu')$ such that $A(\tilde{\mu})\neq A(\mu_0)$.
\item\label{c)} $\abs{A(\mu)}\geq 3$ or $\abs{A(\mu')}\geq 3$.
\end{enumerate}
A sufficient condition for $s$ to be attainable in a robust equilibrium is that there are at least two actions that have utility less than or equal to $s$ for the sender and that are, respectively, optimal for the receiver at some belief in $\Delta\Theta$.
\end{proposition}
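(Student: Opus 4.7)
The plan is to reduce the sufficient condition to the characterization (a)--(c) just proved: I shall show that the existence of two distinct actions with sender utility $\leq s$, each optimal for the receiver at some belief, forces at least one of (a), (b), or (c) to hold. I argue by contrapositive: assuming none of the three holds, I will deduce that $a_0$, the unique action in $A(\mu_0)$, is the only action with $v_S \leq s$ optimal at any belief in $\Delta\Theta$, contradicting the hypothesis.

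First I would combine \emph{not} (b) with upper hemicontinuity of $A(\cdot)$ to obtain $A(\mu) \supseteq A(\mu_0)$; together with $s \in \cV(\mu)$ and $s > \max\cV(\mu_0)$, this strengthens to $A(\mu) \supsetneq A(\mu_0)$. Combined with \emph{not} (c), this pins down $A(\mu_0) = \{a_0\}$ and $A(\mu) = \{a_0, a^*\}$ with $v_S(a_0) < s \leq v_S(a^*)$; symmetrically, $A(\mu') = \{a_0, a^{**}\}$ with $v_S(a^{**}) \geq s$.

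The main obstacle is excluding the knife-edge $v_S(a^*) = s$. Here the key is that $u_R(a, \cdot)$ is linear in the belief, so $B(a^*) := \{\tilde{\mu} \in \Delta\Theta : a^* \in A(\tilde{\mu})\}$ is a closed interval. If $v_S(a^*) = s$, then $s \in \cV(\tilde{\mu})$ throughout $B(a^*)$, and the minimality of $\mu$ together with \emph{not} (b) (which excludes $a^*$ from $A(\tilde{\mu})$ on $(\mu, \mu')$) forces $B(a^*) = \{\mu\}$. Upper hemicontinuity and $\abs{A(\mu)} = 2$ then leave $a_0$ as the unique action optimal in a punctured neighborhood of $\mu$, so the linear function $\nu \mapsto u_R(a^*, \nu) - u_R(a_0, \nu)$ vanishes at $\mu$ and is strictly negative on both sides; it must therefore vanish identically, making $a^*$ and $a_0$ deliver identical receiver payoffs and hence $a^* \in A(\mu_0)$, contradicting $A(\mu_0) = \{a_0\}$.

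So $v_S(a^*) > s$ strictly, and the same linearity argument shows $B(a^*) = [\gamma, \mu]$ for some $\gamma < \mu$. I would then induct leftward from $\gamma$: at the left endpoint of any interval on which some action $c$ with $v_S(c) > s$ is optimal, the newly entering action $c'$ must also satisfy $v_S(c') > s$---otherwise $\cV$ would span $s$ at that boundary, contradicting the minimality of $\mu$. A symmetric argument to the right of $\mu'$ yields the same conclusion on $(\mu', 1]$; combined with \emph{not} (b), which makes $a_0$ the only action optimal on $(\mu, \mu')$, this shows that $a_0$ is the unique action with $v_S \leq s$ optimal anywhere in $\Delta\Theta$, contradicting the hypothesis and finishing the proof.
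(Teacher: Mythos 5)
Your proposal addresses only the final sentence of the proposition. You announce that you will ``reduce the sufficient condition to the characterization (a)--(c) just proved,'' but that characterization is itself the main claim to be proved here, and nothing in your argument establishes it---nor the preliminary claim that the two beliefs in \eqref{extremesupport} exist. In the paper this is the bulk of the work: one first uses the Quasiconcavification Theorem and Securability together with Lemma~LR and Lemma~\ref{lem:caratheodory} to obtain an equilibrium with support $\{\mu,\mu'\}$ attaining $s$, and upper hemicontinuity (Lemma~\ref{lem:Aneighborhood}) to see that the min and max in \eqref{extremesupport} are attained; one then translates conditions \ref{fullyinf}--\ref{fouractions} of Proposition~\ref{prop:binarybeliefrobust} into \ref{a)}--\ref{c)}, which requires showing that $\abs{A(\mu)}\geq 2$ when $\mu>0$ (from the minimality of $\mu$), that condition~\ref{b)} is equivalent to $A(\mu)\cap A(\mu')=\emptyset$ (Lemmas \ref{lem:Aneighborhood} and \ref{lem:Aintersection}), and that when $A(\mu)\cap A(\mu')\neq\emptyset$ one has $A(\mu)\cap A(\mu')=A(\mu_0)$ plus actions $a\in A(\mu)\setminus A(\mu_0)$ and $a'\in A(\mu')\setminus A(\mu_0)$ with sender value at least $s$; and, crucially, one must prove the ``only if'' direction across \emph{all} equilibria attaining $s$: if some robust equilibrium attains $s$, its (Carath\'eodory-reduced) support $\{\tilde{\mu},\tilde{\mu}'\}$ lies in $[\mu,\mu']$, and Lemma~\ref{lem:Aintersection} gives $A(\tilde{\mu})\subseteq A(\mu)$, $A(\tilde{\mu}')\subseteq A(\mu')$, so the canonical equilibrium with support $\{\mu,\mu'\}$ inherits condition~\ref{fouractions} and is robust as well. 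Since none of this appears, the proposal does not prove the proposition as stated.

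For the part you do argue---the further sufficient condition---your contrapositive route is essentially sound and genuinely different from the paper's. The paper argues directly that, when $\mu>0$, $\mu'<1$, and $A(\mu)\cap A(\mu')\neq\emptyset$, the hypothesis forces $\abs{A(\mu)}\geq 3$ (condition~\ref{c)}) via a supremum argument over beliefs $\tilde{\mu}\leq\mu$ at which some action outside $A(\mu_0)$ with $v_S\leq s$ is optimal, using that $A(\tilde{\mu})\cap A(\mu_0)=\emptyset$ for all $\tilde{\mu}<\mu$. You instead exploit one-dimensionality: the sets $\set{\tilde{\mu}}{a\in A(\tilde{\mu})}$ are closed intervals, the affine-difference argument at $\mu$ rules out the knife-edge $v_S(a^*)=s$ and the degenerate interval $\{\mu\}$, and the leftward induction over breakpoints (using minimality of $\mu$ and convexity of $\cV(\tilde{\mu})$) shows every action optimal on $[0,\mu)$ has $v_S>s$. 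That works, provided you make the symmetric step at $\mu'$ explicit---the rightward induction needs $v_S(a^{**})>s$, i.e., the knife-edge exclusion on that side too. But this covers only the last sentence; the core equivalence still needs a proof.
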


The two beliefs $\mu$ and $\mu'$ are the extreme beliefs that can support an equilibrium with sender payoff $s$; they are the respectively left- and rightmost points where $\cV$ intersects the constant payoff $s$. Condition~\ref{a)} means that at least one of them is degenerate, and condition~\ref{b)} means that (strictly) between them the set of optimal actions for the receiver is not constant. 

The further sufficient condition, which does not involve $\mu$ and $\mu'$, cannot replace any of the three other conditions in terms of necessity. However, we obtain a further significant simplification for generic games.\footnote
{Here we mean generic within the space of transparent games, which are themselves nongeneric within the ambient space of nontransparent games.} 
For a generic receiver utility function, there are at most only two optimal actions at any given belief. Then condition~\ref{c)} becomes irrelevant. Moreover, in the proof of Proposition~\ref{prop:srobust} we show that condition~\ref{b)} is equivalent to $A(\mu)\cap A(\mu')=\emptyset$. Since $s$ is in both $\cV(\mu)$ and $\cV(\mu')$, condition~\ref{b)} thus implies the further sufficient condition. Assuming additionally that the sender is never indifferent between any two actions, we then conclude that $s$ is attainable in a robust equilibrium if and only if $s$ is in $\cV(0)\cup\cV(1)$ or at least equal to the second lowest value that $\cV$ ever attains.

\subsection{Applying the graphical characterization}\label{sec:graphicalex}

In this subsection, we illustrate Proposition~\ref{prop:srobust} by further examples. Since we only sketch the respective sender value correspondences (in Figures \ref{fig:ex3v} and \ref{fig:ex4v}), assume the underlying games are generic in the sense specified just before. Thus, we can identify the pairs $(\mu_0,s)$ such that $s$ is attainable in a robust equilibrium also by the simpler criterion. The sets of all these pairs form the colored rectangles.

\definecolor{fill1}{rgb}{.8,.9,1}
\begin{figure}[ht]
\centering
\begin{tikzpicture}
\node (0) {}
node[label=below:$0$] (x0) at (0,0) {}
node[label=below:$\mu_1$] (x1) at (1.1,0) {}
node[label=below:$\mu_2$] (x2) at (2.2,0) {}
node[label=below:$\mu_3$] (x3) at (3.3,0) {}
node[label=below:$\mu_4$] (x4) at (4.4,0) {}
node[label=below:$\mu_5$] (x5) at (5.5,0) {}
node[label=below:$1$] (x4) at (6.6,0) {}
node[label=left:$s_1$] (y1) at (0,.4) {}
node[label=left:$s_2$] (y2) at (0,1.8) {}
node[label=left:$s_3$] (y3) at (0,2.7) {}
node[label=left:$s_4$] (y4) at (0,3.6) {}
node[label=left:$s_5$] (y5) at (0,4.5) {};
  \draw[->,thick] (0,0) -- (0,5.4);
  \draw[-,thick] (0,0) -- (6.6,0); \draw[-,thick] (6.6,-.1) -- (6.6,.1); \draw[-,thick] (0,-.1) -- (0,.1);
  \fill[fill1] (1.1,3.6) rectangle (3.3,1.8);
  \fill[fill1] (4.4,4.5) rectangle (5.5,2.7);
  \draw[very thick] (0,3.6) -- (1.1,3.6) -- (1.1,1.8) -- (2.2,1.8) -- (2.2,.4) -- (3.3,.4) -- (3.3,4.5) -- (4.4,4.5) -- (4.4,2.7) -- (5.5,2.7) -- (5.5,5.4) -- (6.6,5.4);
  \draw[very thick,dashed] (1.1,3.6) -- (3.3,3.6);
  \draw[very thick,dashed] (4.4,4.5) -- (5.5,4.5);
  \draw[very thick,dotted] (0,1.8) -- (6.6,1.8);
\end{tikzpicture}
\caption{\label{fig:ex3v} Example~3.}
\end{figure}

In Example~3, shown in Figure~\ref{fig:ex3v}, we obtain all robustly attainable sender payoffs from the condition that $s$ is at least equal to the second lowest value that $\cV$ attains, which is $s_2$. The condition that $s$ is in $\cV(0)\cup\cV(1)$ does not add anything, because then also $s\geq s_2$.

A direct application of Proposition~\ref{prop:srobust} yields the following. In this example, there are five different pairs $\mu$, $\mu'$ that result from different choices of $s$ and that, then, apply to any prior $\mu_0\in(\mu,\mu')$ such that $c(\mu_0)\geq s>\max\cV(\mu_0)$:

\begin{enumerate}[label=\arabic*.]
\item For any $s\in(s_1,s_2)$, we have $\mu=\mu_2$ and $\mu'=\mu_3$. This case is analogous to $s\in(0,1)$ in Examples 1 and 2 (cf.\ Figures \ref{fig:ex1v} and \ref{fig:ex2v}), so these sender payoffs are not attainable in any robust equilibrium for the reasons given in Section~\ref{sec:examples}. Correspondingly, neither condition~\ref{a)} nor condition~\ref{b)} is satisfied here: $s$ is not in $\cV(0)$ or $\cV(1)$, and the set of optimal actions for the receiver is constant for all posterior beliefs in $(\mu,\mu')$ (since different actions would imply different sender payoffs in a generic game).

\item For any $s\in[s_2,s_3)$, we now have $\mu=\mu_1$ but still $\mu'=\mu_3$. Then condition~\ref{b)} holds, because the sender payoff correspondence is not constant in this interval and, thus, also the set of optimal actions for the receiver must undergo a change at $\mu_2$.

\item For any $s\in[s_3,s_4)$, we still have $\mu=\mu_1$ but $\mu'$ increases to $\mu_5$. This means that condition~\ref{b)} is inherited from the previous case.

\item The case $s=s_4$ is special, because then $\mu=0$, respectively $s\in\cV(0)$. This is the only instance in this example in which actually condition~\ref{a)} holds. But since still $\mu'=\mu_5$, also condition~\ref{b)} holds.

\item For any $s\in(s_4,s_5]$, we finally have $\mu=\mu_3$ and $\mu'=\mu_5$. Then condition~\ref{b)} holds because the set of optimal actions for the receiver changes at $\mu_4$.
\end{enumerate}

\begin{figure}[ht]
\centering
\begin{tikzpicture}
\node (0) {}
node[label=below:$0$] (x0) at (0,.5) {}
node[label=below:$\mu_1$] (x1) at (1.1,.5) {}
node[label=below:$\mu_2$] (x2) at (2.2,.5) {}
node[label=below:$\mu_3$] (x3) at (3.3,.5) {}
node[label=below:$1$] (x4) at (4.4,.5) {}
node[label=left:$s_1$] (y1) at (0,.9) {}
node[label=left:$s_2$] (y2) at (0,2.7) {};
  \draw[->,thick] (0,.5) -- (0,4.5);
  \draw[-,thick] (0,.5) -- (4.4,.5); \draw[-,thick] (4.4,.4) -- (4.4,.6); \draw[-,thick] (0,.4) -- (0,.6);
  \fill[fill1] (3.3,2.9) rectangle (4.4,.9);
  \fill[fill1] (1.1,3.6) rectangle (2.2,2.7);
  \draw[very thick] (0,3.6) --  (1.1,3.6) -- (1.1,2.7) -- (2.2,2.7) -- (2.2,4.5) -- (3.3,4.5) -- (3.3,.9) -- (4.4,.9) -- (4.4,2.9);
  \draw[very thick,dashed] (3.3,2.9) -- (4.4,2.9);
  \draw[very thick,dashed] (1.1,3.6) -- (2.2,3.6);
  \draw[very thick,dotted] (0,2.7) -- (4.4,2.7);
\end{tikzpicture}
\hspace{.5cm}
\begin{tikzpicture}
\node (0) {}
node[label=below:$0$] (x0) at (0,.5) {}
node[label=below:$\mu_1$] (x1) at (1.1,.5) {}
node[label=below:$\mu_2$] (x2) at (2.2,.5) {}
node[label=below:$\mu_3$] (x3) at (3.3,.5) {}
node[label=below:$1$] (x4) at (4.4,.5) {}
node[label=left:$s_1$] (y1) at (0,.9) {}
node[label=left:$s_2$] (y2) at (0,2.7) {};
  \draw[->,thick] (0,.5) -- (0,4.5);
  \draw[-,thick] (0,.5) -- (4.4,.5); \draw[-,thick] (4.4,.4) -- (4.4,.6); \draw[-,thick] (0,0.4) -- (0,.6);
  \fill[fill1] (3.3,2.9) rectangle (4.25,2.7);
  \fill[fill1] (1.1,3.6) rectangle (2.2,2.7);
  \draw[very thick] (0,3.6) --  (1.1,3.6) -- (1.1,2.7) -- (2.2,2.7) -- (2.2,4.5) -- (3.3,4.5) -- (3.3,.9) -- (4.25,.9) -- (4.25,2.9) -- (4.4,2.9);
  \draw[very thick,dashed] (3.3,2.9) -- (4.25,2.9);
  \draw[very thick,dashed] (1.1,3.6) -- (2.2,3.6);
  \draw[very thick,dotted] (0,2.7) -- (4.4,2.7);
\end{tikzpicture}
\caption{\label{fig:ex4v} Examples 4a and 4b.}
\end{figure}

Next, in Example~4a, shown in the left panel of Figure~\ref{fig:ex4v}, condition~\ref{a)} makes more sender equilibrium payoffs robustly attainable, because here $(s_1,s_2)\subseteq\cV(1)$. In consequence, \emph{every} sender equilibrium payoff is attainable also in a robust equilibrium. However, condition~\ref{a)} becomes void again if the receiver has a unique optimal action at the extreme belief $1$. If the unique optimal action is the one with sender payoff $s_1$, then still all sender equilibrium payoffs are robustly attainable, but now simply because the only equilibrium payoff for any $\mu_0>\mu_3$ would be the babbling equilibrium payoff $s_1$. If, instead, the unique optimal action is the other one with payoff slightly greater than $s_2$, then we arrive at Example~4b, shown in the right panel of Figure~\ref{fig:ex4v}, and the equilibrium payoffs $s\in(s_1,s_2)$ for any prior such that $\cV(\mu_0)=\{s_1\}$ are not robustly attainable anymore.

\subsection{Proof of Theorem~\ref{thm:smax}}\label{sec:proofsmax}

Finally, Proposition~\ref{prop:srobust} facilitates a short proof of Theorem~\ref{thm:smax} (also for nongeneric games):

If $c(\mu_0)=\max\cV(\mu_0)$, there is a babbling equilibrium with payoff $s=c(\mu_0)$, which is robust by Proposition~\ref{prop:babbling}. Hence, suppose $c(\mu_0)>\max\cV(\mu_0)$ and consider $s=c(\mu_0)$. We are going to argue that the further sufficient condition in Proposition~\ref{prop:srobust} holds whenever condition~\ref{b)} is not satisfied.

Since $s>\max\cV(\mu_0)$, all actions in $A(\mu_0)$ yield a sender payoff of at most $s$, so there is at least one such action. Moreover, in the proof of Proposition~\ref{prop:srobust} we show that if condition~\ref{b)} does not hold, then there is some $a\in A(\mu)\setminus A(\mu_0)$ and some $a'\in A(\mu')\setminus A(\mu_0)$. But since $s=c(\mu_0)$, we must have $s\geq\sup\bigcup_{\tilde{\mu}\leq\mu_0}\cV(\tilde{\mu})$ or $s\geq\sup\bigcup_{\tilde{\mu}\geq\mu_0}\cV(\tilde{\mu})$.\footnote
{In fact, since $A$ is finite, $c(\mu_0)=\min(\max\bigcup_{\tilde{\mu}\leq\mu_0}\cV(\tilde{\mu}),\max\bigcup_{\tilde{\mu}\geq\mu_0}\cV(\tilde{\mu}))$.}
It follows that also $v_S(a)\leq s$ or $v_S(a')\leq s$.
\qed

\section{Discussion}\label{sec:discussion} 

We showed that in transparent cheap-talk games with two states and finitely many actions, every sender-optimal equilibrium payoff can be approximated by equilibrium payoffs of open sets of nearby games that need not be transparent anymore. The belief distributions and interim payoffs of the approximating equilibria are close to a sender-optimal equilibrium in the original transparent game, which we then call a robust equilibrium. At the heart of our argument is Proposition~\ref{prop:binarybeliefrobust}, which characterizes a canonical set of robust equilibria spanning all outcomes induced by any robust equilibrium. Lemma~\ref{lem:samebeliefs} shows that one can use the belief distribution of this equilibrium as the belief distribution of the equilibria of the approximating games, and Lemma~\ref{lem:caratheodory} shows that we only have to look among equilibria that induce at most two posteriors. These results lead rather easily to some necessary conditions a candidate equilibrium must satisfy, and the bulk of the work consists of showing that these necessary conditions suffice.

What happens if we have more than two states? While we were not able to find a counterexample in which the sender-optimal payoff cannot be achieved robustly (we looked at a number of games with three states), our approach does not readily provide a path to a proof that it always can. With three or more states, the space of equilibria inducing an outcome becomes much larger, the space of beliefs becomes multidimensional, and there is no obvious candidate robust equilibrium to work with. Finding the right set of necessary conditions becomes much more difficult, let alone proving their sufficiency.

Even if one ignores all robustness concerns, the geometric structure of sender-optimal payoffs becomes much more complex with more than two states. With only two states, using the quasiconcavification approach of \citet{lipnowski2020cheap} is straightforward: Fill up all valleys and look at the skyline. No such simple approach for finding the quasiconcavification exists with more states, and the geometry of the problem becomes much more subtle.

It is also worth discussing our robustness notion more closely. One way to think about transparent games is as representing games that are not completely transparent, but close enough. A robust equilibrium of a transparent game can then be approximated by equilibria of an open set of such games. A more straightforward approach would be to require this open set to include the initial transparent game. This would give us what we called fully robust equilibria. By Proposition~\ref{prop:babbling}, any babbling equilibrium is fully robust. This is essentially also as far as one can go: Every sender payoff in a fully robust equilibrium could be achieved using a babbling equilibrium. 

\begin{proposition}\label{prop:fullyrobust}
Let $\Gamma$ be a transparent cheap-talk game with binary state space. Then, for any fully robust equilibrium, there is a babbling equilibrium that yields the same payoff to the sender.
\end{proposition}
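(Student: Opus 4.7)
The plan is to show that $s \in \cV(\mu_0)$ for any fully robust equilibrium $(p,\rho)$, since any $r \in \Delta A(\mu_0)$ with $v_S \cdot r = s$ then supports a babbling equilibrium of the same sender payoff. By Lemma~\ref{lem:caratheodory} I may assume $\abs{\supp{p}}\leq 2$, and the case $\abs{\supp{p}}=1$ is immediate because Bayes plausibility forces $p = \delta_{\mu_0}$, so $(p,\rho)$ is already babbling. In the substantive case $\supp{p}=\{\mu,\mu'\}$ with $\mu<\mu_0<\mu'$, transparency gives $s = v_S \cdot \rho(\mu) = v_S \cdot \rho(\mu')$. Define $S_{\tilde\mu} := \set{r \in \Delta A(\tilde\mu)}{v_S \cdot r = s}$ for $\tilde\mu \in \{\mu,\mu'\}$; the sets $S_\mu$ and $S_{\mu'}$ are nonempty (each contains the corresponding $\rho$), closed, bounded, and convex. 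The whole proof then reduces to showing $S_\mu \cap S_{\mu'} \neq \emptyset$: any $r$ in the intersection lies in $\Delta(A(\mu)\cap A(\mu'))\subseteq \Delta A(\mu_0)$, because an action optimal for the receiver at both $\mu$ and $\mu'$ is also optimal at the convex combination $\mu_0$, and then $s = v_S\cdot r \in \cV(\mu_0)$.

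To prove $S_\mu \cap S_{\mu'} \neq \emptyset$, I fix an arbitrary direction $e \in \R^{\abs{A}}$ and consider the one-parameter family of perturbations $\tilde u_S^t$ given by $\tilde u_S^t(\cdot,\theta_1) = v_S + t e$ and $\tilde u_S^t(\cdot,\theta_2) = v_S$. By Lemma~\ref{lem:samebeliefs} together with full robustness, for each sufficiently small $\abs{t}$ there is an equilibrium $(p,\tilde\rho_t)$ whose interim sender-payoff function converges to $s\cdot \one$ as $t\to 0$. Writing $\tilde x_t := \tilde\rho_t(\mu) - \tilde\rho_t(\mu')$, the equilibrium conditions from Definition~\ref{def:eqm} impose at each state $\theta$ either $\tilde u_S^t(\theta)\cdot \tilde x_t = 0$ when $\theta$ lies in both posterior supports or an appropriate weak inequality when $\theta$ lies in only one. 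I would check case by case across the three support configurations --- both posteriors interior, one posterior degenerate, or both extreme --- that along any convergent subsequence $t\to 0^+$ the limits $r = \lim\tilde\rho_t(\mu)\in S_\mu$ and $r' = \lim\tilde\rho_t(\mu')\in S_{\mu'}$ satisfy $e\cdot (r - r')\geq 0$. The analogous argument for $t\to 0^-$ (equivalently, for $-e$) delivers $(\hat r,\hat r')\in S_\mu\times S_{\mu'}$ with $e\cdot (\hat r - \hat r')\leq 0$.

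By convexity of $S_\mu - S_{\mu'}$ and the intermediate value theorem applied to $x\mapsto e\cdot x$, the coexistence of elements on both sides of $e^\perp$ forces $(S_\mu - S_{\mu'})\cap e^\perp \neq \emptyset$, and this holds for every $e$. If $0 \notin S_\mu - S_{\mu'}$, strict separation of this closed, bounded, convex set from the origin would produce a direction $e$ with $e\cdot x \geq \eta > 0$ for all $x\in S_\mu - S_{\mu'}$, contradicting the preceding sentence. Hence $0\in S_\mu - S_{\mu'}$, i.e., $S_\mu\cap S_{\mu'} \neq \emptyset$, completing the proof. The main obstacle will be the case-by-case derivation of the asymptotic sign $e\cdot (r-r')\geq 0$: in the interior configuration the two indifference equalities directly yield $e\cdot \tilde x_t = 0$, but in the one-extreme configuration (one indifference identity paired with a weak inequality) and the fully informative configuration (two weak inequalities) one must combine the constraints and extract the sign of $e\cdot \tilde x_t$ via a short algebraic rearrangement as $t\to 0^+$.
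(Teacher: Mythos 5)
Your proposal is correct, but it takes a genuinely different route from the paper's. The paper argues by contraposition with a single, explicitly constructed perturbation: it keeps $u_S(\theta_1)=v_S$ and perturbs $u_S(\theta_2)$ to $v_S+\epsilon d$, where $d$ has entries $+1$ on $A(\mu)\setminus A(\mu')$, $-1$ on $A(\mu')\setminus A(\mu)$, and $0$ on $A(\mu)\cap A(\mu')$; since $d\cdot x\geq 0$ for every $x\in\cD$ while the sender's incentive constraints in the perturbed game force $d\cdot x\leq 0$, any perturbed equilibrium with belief distribution $p$ has the receiver mixing only over $A(\mu)\cap A(\mu')\subseteq A(\mu_0)$ (Lemma~\ref{lem:Aintersection}), so its payoff stays near $\cV(\mu_0)$ and full robustness fails whenever $s\notin\cV(\mu_0)$. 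You instead argue directly: you test full robustness against the two-sided family $v_S+te$ for \emph{every} direction $e$, pass to limit points of the perturbed receiver strategies, and use compactness and convexity of $S_\mu-S_{\mu'}$ together with the intermediate value theorem and strict separation to force $S_\mu\cap S_{\mu'}\neq\emptyset$, hence $s\in\cV(\mu_0)$. Both routes rest on the same reductions (Lemmas~\ref{lem:samebeliefs} and~\ref{lem:caratheodory}) and on $A(\mu)\cap A(\mu')\subseteq A(\mu_0)$. The paper's version buys brevity and an explicit ``killer'' perturbation that makes transparent why non-babbling payoffs cannot be fully robust; yours buys a soft, direct argument at the cost of a routine diagonalization (choosing $t_k$ inside the neighborhood $U_k$ that full robustness supplies for shrinking payoff neighborhoods $W_k$, so that the interim payoffs converge to $s\one$). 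One simplification: the case analysis you flag as the main obstacle is unnecessary. Since $\mu<\mu_0<\mu'$, one always has $\theta_1\in\supp{\mu}$ and $\theta_2\in\supp{\mu'}$, so in every support configuration the equilibrium conditions give the two constraints $(v_S+te)\cdot x_t\geq 0$ and $v_S\cdot x_t\leq 0$; subtracting yields $t\,e\cdot x_t\geq 0$, which delivers the asymptotic sign uniformly for $t>0$ and $t<0$.
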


Proposition~\ref{prop:fullyrobust} also shows that our robustness notion cannot be meaningfully strengthened while still covering nontrivial equilibria. It implies that for any nontrivial equilibrium of a transparent game, the initial transparent game is the limit of a sequence of games all of whose equilibria avoid a closed neighborhood of the given nontrivial equilibrium. Standard arguments show that the equilibrium correspondence is upper hemicontinuous. Consequently, one can approximate each of the approximating games, and thus also the transparent limit game, by open sets of games whose equilibria avoid the neighborhood.

\appendix

\section{Auxiliary results}\label{app:aux}

\subsection{Belief-based approach}\label{app:beliefbased}

Given any finite set of messages $M$, recall that an \emph{equilibrium} for a cheap-talk game consists of a sender strategy $\sigma\colon\Theta\to\Delta M$, a receiver strategy $\rho\colon M\to\Delta A$, and a belief system $\beta\colon M\to\Delta\Theta$ such that
\begin{enumerate}[label=\arabic*.]
\item\label{eqmbelief}
$\beta$ is obtained from $\mu_0$, given $\sigma$, using Bayes' rule, which means that
\begin{equation}\label{Bayes}
\sigma(m\vert\theta)\mu_0(\theta)=\beta(\theta\vert m)\sum_{\theta'\in\Theta}\sigma(m\vert\theta')\mu_0(\theta')\quad\text{for all $m\in M$ and $\theta\in\Theta$,}
\end{equation}
\item\label{eqmRstrat}
$\rho(m)$ is supported on $A(\beta(m))$ for all $m\in M$, and
\item\label{eqmSstrat}
$\sigma(\theta)$ is supported on $\arg\max_{m\in M}u_S(\theta)\cdot\rho(m)$ for all $\theta\in\Theta$.
\end{enumerate}

However, instead of working with explicit message sets and sender strategies, we adopt the so-called \emph{belief-based approach}. Since equilibrium payoffs for the sender may generally depend on the state $\theta$ in our case, we use the following characterization of equilibria.

\begin{lemma}\label{lem:beliefbased}
Let $p\in\Delta\Delta\Theta$ be a finite support distribution over beliefs, and let $s^*\colon\Theta\to\R$ be a function that maps states to sender payoffs. Then there exists an equilibrium $(\sigma,\rho,\beta)$ with a finite message set $M$ such that
\begin{equation}\label{beliefdist}
p(\mu)=\sum_{\substack{\theta\in\Theta \\ m\in\beta^{-1}(\mu)}}\sigma(m\vert\theta)\mu_0(\theta)
\end{equation}
for all $\mu\in\Delta\Theta$ and
\begin{equation}\label{interimSvalue}
s^*(\theta)=\max_{m\in M}u_S(\theta)\cdot\rho(m)
\end{equation}
for all $\theta\in\Theta$ if and only if there exists a function $\rho'\colon\supp{p}\to\Delta A$ such that
\begin{enumerate}
\item\label{Bayesplausible}
$\sum_{\mu\in\supp{p}}\mu p(\mu)=\mu_0$,
\item\label{ICR}
$\supp{(\rho'(\mu))}\subseteq A(\mu)$ for all $\mu\in\supp{p}$, and
\item\label{ICS}
$s^*(\theta)=u_S(\theta)\cdot\rho'(\mu)\geq u_S(\theta)\cdot\rho'(\mu')$ for all $\mu,\mu'\in\supp{p}$ and $\theta\in\supp{\mu}$.
\end{enumerate}
\end{lemma}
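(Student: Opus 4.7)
The plan is to prove both directions constructively. The single observation driving everything is that, by Bayes' rule~\eqref{Bayes}, any message $m$ with $\beta(m)=\mu$ that is sent with positive ex ante probability must have $\sigma(m\vert\theta)>0$ for every $\theta\in\supp{\mu}$ (since $\mu(\theta)>0$ forces the right-hand side of~\eqref{Bayes} to be positive, hence also the left). Combined with sender optimality~\ref{eqmSstrat}, this forces every such message to yield the maximum payoff $s^*(\theta)$ in every state $\theta\in\supp{\mu}$, so one may freely collapse messages inducing the same belief into a single effective mixed action.

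For necessity, given a message-based equilibrium $(\sigma,\rho,\beta)$ satisfying~\eqref{beliefdist} and~\eqref{interimSvalue}, I would define, for each $\mu\in\supp{p}$,
\[
\rho'(\mu):=\sum_{m\in\beta^{-1}(\mu)}\frac{w(m)}{p(\mu)}\rho(m),\qquad w(m):=\sum_{\theta\in\Theta}\sigma(m\vert\theta)\mu_0(\theta),
\]
which is a well-defined convex combination because $p(\mu)=\sum_{m\in\beta^{-1}(\mu)}w(m)>0$ by~\eqref{beliefdist}. Condition~\ref{ICR} is inherited from~\ref{eqmRstrat}; Bayes plausibility~\ref{Bayesplausible} follows by the standard iterated-expectation argument (sum~\eqref{Bayes} over $m$ and use $\sum_m\sigma(m\vert\theta)=1$); and the equality half of~\ref{ICS} follows from the observation above (every $m$ with $w(m)>0$ contributes $u_S(\theta)\cdot\rho(m)=s^*(\theta)$ for $\theta\in\supp{\mu}$), while the inequality half is immediate from~\eqref{interimSvalue} since each $u_S(\theta)\cdot\rho(m')\leq s^*(\theta)$ is preserved under the convex combinations defining $\rho'(\mu')$.

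For sufficiency, use the canonical construction $M:=\supp{p}$, $\beta(\mu):=\mu$, $\rho:=\rho'$, and
\[
\sigma(\mu\vert\theta):=\frac{\mu(\theta)\,p(\mu)}{\mu_0(\theta)},
\]
well-defined since $\mu_0$ has full support; the normalization $\sum_\mu\sigma(\mu\vert\theta)=1$ is exactly Bayes plausibility~\ref{Bayesplausible} in coordinate $\theta$. Equations~\eqref{Bayes}, \eqref{beliefdist}, and~\eqref{interimSvalue} then follow by direct substitution; \ref{eqmRstrat} is immediate from~\ref{ICR}; and~\ref{eqmSstrat} holds because $\sigma(\mu\vert\theta)>0$ precisely when $\theta\in\supp{\mu}$, in which case~\ref{ICS} guarantees that $\mu$ maximizes $u_S(\theta)\cdot\rho(\cdot)$ over $M$. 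The existence, for each $\theta\in\Theta$, of some $\mu\in\supp{p}$ with $\theta\in\supp{\mu}$, needed to verify~\eqref{interimSvalue}, is automatic from Bayes plausibility applied to coordinate $\theta$ together with $\mu_0(\theta)>0$.

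I expect no deep obstacle; the proof is essentially bookkeeping. The only subtle point worth flagging is that unused messages (those with $w(m)=0$) are not constrained by~\eqref{Bayes} and may in principle be assigned arbitrary beliefs $\beta(m)$, but this is harmless because they receive weight zero in the construction of $\rho'$, so neither direction of the argument is affected.
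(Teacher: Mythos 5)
Your proof is correct and follows essentially the same route as the paper's: the sufficiency construction ($M=\supp{p}$, $\beta=\mathrm{id}$, $\rho=\rho'$, $\sigma(\mu\vert\theta)=\mu(\theta)p(\mu)/\mu_0(\theta)$) is identical, and the necessity direction rests on the same key observation that, by \eqref{Bayes}, any positively weighted message inducing $\mu$ must be sent with positive probability in every $\theta\in\supp{\mu}$ and hence attain $s^*(\theta)$. The only cosmetic difference is that you define $\rho'(\mu)$ as the $w$-weighted average of $\rho(m)$ over $m\in\beta^{-1}(\mu)$, whereas the paper fixes a single on-path message $m_k$ per belief and sets $\rho'(\mu_k)=\rho(m_k)$; both choices satisfy conditions \ref{Bayesplausible}--\ref{ICS} for the same reasons.
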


\begin{proof}
``$\Rightarrow$'' Let $\supp{p}=\{\mu_1,\dots,\mu_K\}$ and suppose $(\sigma,\rho,\beta)$ is an equilibrium with finite message set $M$ such that \eqref{beliefdist} holds for all $\mu\in\Delta\Theta$. Then, by \eqref{Bayes}, $\sum_{m\in\beta^{-1}(\mu_k)}\sigma(m\vert\theta)\mu_0(\theta)=\mu_k(\theta)p(\mu_k)$ for all $k=1,\dots,K$ and $\theta\in\Theta$. Thus, condition~\ref{Bayesplausible} holds if $\sum_{k=1}^K\sum_{m\in\beta^{-1}(\mu_k)}\sigma(m\vert\theta)=1$ for all $\theta$, and the latter actually follows from the facts that $1=\sum_{k=1}^Kp(\mu_k)=\sum_{k=1}^K\sum_{\theta\in\Theta,m\in\beta^{-1}(\mu_k)}\sigma(m\vert\theta)\mu_0(\theta)$, $\sum_{\theta\in\Theta}\mu_0(\theta)=1$, and $\sum_{k=1}^K\sum_{m\in\beta^{-1}(\mu_k)}\sigma(m\vert\theta)\leq 1$ for all $\theta$.

Now suppose additionally \eqref{interimSvalue} holds for all $\theta\in\Theta$. Consider any $\theta\in\Theta$ and $k=1,\dots,K$ such that $\theta\in\supp{\mu_k}$, i.e., $\mu_k(\theta)>0$. Then, for any $m\in\beta^{-1}(\mu_k)$, \eqref{Bayes} and $\mu_0(\theta)>0$ together imply that $\sigma(m\vert\theta)>0$ if and only if $\sum_{\theta'\in\Theta}\sigma(m\vert\theta')\mu_0(\theta')>0$. Thus, by equilibrium condition~\ref{eqmSstrat}, any $m_k\in\beta^{-1}(\mu_k)$ with $\sum_{\theta'\in\Theta}\sigma(m_k\vert\theta')\mu_0(\theta')>0$ must satisfy $u_S(\theta)\cdot\rho(m_k)=s^*(\theta)$. Such $m_k$ indeed exists because of $\sum_{\theta'\in\Theta,m\in\beta^{-1}(\mu_k)}\sigma(m\vert\theta')\mu_0(\theta')=p(\mu_k)>0$. Fix one such $m_k$ for each $k=1,\dots,K$ and define the function $\rho'\colon\{\mu_1,\dots,\mu_K\}\to\Delta A$ by $\rho'(\mu_k)=\rho(m_k)$. Then condition~\ref{ICR} holds by construction and equilibrium condition~\ref{eqmRstrat}, since $\beta(m_k)=\mu_k$. Moreover, by construction of $m_k$ and $\rho'$, whenever $\theta\in\supp{\mu_k}$, then $u_S(\theta)\cdot\rho'(\mu_k)=s^*(\theta)\geq u_S(\theta)\cdot\rho'(\mu_j)$ for all $j=1,\dots,K$, so also condition~\ref{ICS} holds.

``$\Leftarrow$'' Suppose there is a function $\rho'\colon\supp{p}\to\Delta A$ such that the conditions \ref{Bayesplausible}, \ref{ICR}, and \ref{ICS} hold. Let $M:=\supp{p}=\{\mu_1,\dots,\mu_K\}$, define the belief system $\beta\colon M\to\Delta\Theta$ by $\beta(\mu_k)=\mu_k$, and fix the receiver strategy $\rho:=\rho'$. Finally, construct the sender strategy $\sigma\colon\Theta\to\Delta M$ by $\sigma(\mu_k\vert\theta)\mu_0(\theta)=\mu_k(\theta)p(\mu_k)$, which is possible because the latter equation implies $\sigma(\mu_k\vert\theta)>0$ and, by condition~\ref{Bayesplausible}, $\sum_{k=1}^K\sigma(\mu_k\vert\theta)=1$. Further, then $\sum_{\theta\in\Theta}\sigma(\mu_k\vert\theta)\mu_0(\theta)=p(\mu_k)$ by $\mu_k\in\Delta\Theta$, which implies that \eqref{Bayes} holds. Equilibrium condition~\ref{eqmRstrat} holds by construction, since $\supp{(\rho(\mu_k))}\subseteq A(\mu_k)=A(\beta(\mu_k))$ for all $\mu_k\in M$. To verify also equilibrium condition~\ref{eqmSstrat} and that \eqref{interimSvalue} holds for all $\theta\in\Theta$, suppose $\sigma(\mu_k\vert\theta)>0$. Then by construction $\mu_k(\theta)>0$, i.e., $\theta\in\supp{\mu_k}$, so that condition~\ref{ICS} indeed implies $s^*(\theta)=u_S(\theta)\cdot\rho(\mu_k)\geq u_S(\theta)\cdot\rho(\mu_j)$ for all $\mu_j\in M$.
\end{proof}

\subsection{Belief dependence of the receiver's optimal actions}\label{app:A}

Here we study how the belief-dependent sets of optimal actions for the receiver, $A(\mu)$, are related to each other for different beliefs $\mu\in\Delta\Theta$.

\begin{lemma}\label{lem:Aneighborhood}
For every belief $\mu\in\Delta\Theta$ there exists a neighborhood $U_{\mu}$ such that $A(\tilde{\mu})\subseteq A(\mu)$ for all $\tilde{\mu}\in U_{\mu}$.
\end{lemma}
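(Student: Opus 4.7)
The plan is to use a standard upper hemicontinuity argument, exploiting that the receiver's expected utility is linear, hence continuous, in the belief, and that the action set $A$ is finite.

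First, I would fix $\mu\in\Delta\Theta$ and, for each action $a\in A$, consider the expected utility function $f_a\colon\Delta\Theta\to\R$ defined by $f_a(\tilde{\mu})=\sum_{\theta\in\Theta}u_R(a,\theta)\tilde{\mu}(\theta)$. Each $f_a$ is linear in $\tilde{\mu}$ and therefore continuous. By definition, $A(\tilde{\mu})=\argmax_{a\in A}f_a(\tilde{\mu})$.

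Next, for every action $a\in A\setminus A(\mu)$, we have $f_a(\mu)<\max_{a'\in A}f_{a'}(\mu)=f_{a^*}(\mu)$ for some (any) $a^*\in A(\mu)$. Pick any such $a^*$ (the choice is immaterial). By continuity of $f_a-f_{a^*}$, there is an open neighborhood $U_a$ of $\mu$ on which $f_a(\tilde{\mu})<f_{a^*}(\tilde{\mu})$ still holds; in particular, $a\notin A(\tilde{\mu})$ for any $\tilde{\mu}\in U_a$, because whatever the actual maximizer is, it yields at least $f_{a^*}(\tilde{\mu})$.

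Finally, since $A$ is finite, the intersection $U_\mu:=\bigcap_{a\in A\setminus A(\mu)}U_a$ is still an open neighborhood of $\mu$ (if $A(\mu)=A$, simply take $U_\mu=\Delta\Theta$). By construction, for every $\tilde{\mu}\in U_\mu$ and every $a\notin A(\mu)$, the action $a$ is strictly dominated by $a^*$ at $\tilde{\mu}$, so $a\notin A(\tilde{\mu})$. Equivalently, $A(\tilde{\mu})\subseteq A(\mu)$, which establishes the claim. There is no real obstacle here; the only slightly subtle point is remembering that the inclusion goes \emph{from} nearby beliefs \emph{into} $A(\mu)$, which is exactly the content of upper hemicontinuity of the $\argmax$ correspondence for a linear objective over a finite feasible set.
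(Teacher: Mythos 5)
Your proof is correct and rests on the same fact the paper uses: the paper disposes of this lemma in one line by citing upper hemicontinuity of $\mu\mapsto A(\mu)$ via Berge's Maximum Theorem together with finiteness of $A$. Your argument—strict suboptimality of each $a\notin A(\mu)$ persists on a neighborhood by continuity of the linear payoff difference, then intersect the finitely many neighborhoods—is simply the elementary, self-contained unpacking of that citation, and it goes through without any gap.
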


\begin{proof}
This follows from upper hemicontinuity (by Berge's Maximum Theorem) of the correspondence that associates to each $\mu\in\Delta\Theta$ the set $A(\mu)$ and finiteness of $A$.
\end{proof}

\begin{lemma}\label{lem:Aintersection}
Let $\mu,\mu'\in\Delta\Theta$ be two given beliefs. If $A(\mu)$ and $A(\mu')$ have any common element, then, for all $\tilde{\mu}\in\co\{\mu,\mu'\}$, $A(\tilde{\mu})=A(\mu)\cap A(\mu')$ or $\tilde{\mu}\in\{\mu,\mu'\}$.
\end{lemma}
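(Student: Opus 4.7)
The plan is to exploit the linearity of the receiver's expected payoff in the belief, using a common optimal action $a^{\ast}\in A(\mu)\cap A(\mu')$ as a benchmark. Pick any interior convex combination $\tilde{\mu}=\lambda\mu+(1-\lambda)\mu'$ with $\lambda\in(0,1)$; the endpoint cases $\tilde{\mu}\in\{\mu,\mu'\}$ are exactly the exceptions permitted by the statement, so they require no argument.

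First I would prove the inclusion $A(\mu)\cap A(\mu')\subseteq A(\tilde{\mu})$. Any $a\in A(\mu)\cap A(\mu')$ attains the maximum receiver payoff at both $\mu$ and $\mu'$, namely the same value as $a^{\ast}$. By the linearity of $\mu\mapsto\sum_{\theta}u_R(a,\theta)\mu(\theta)$ the same value is attained at $\tilde{\mu}$, so $a\in A(\tilde{\mu})$.

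Next I would prove the reverse inclusion $A(\tilde{\mu})\subseteq A(\mu)\cap A(\mu')$. Take any $a\notin A(\mu)\cap A(\mu')$; without loss of generality $a\notin A(\mu)$, so $\sum_{\theta}u_R(a,\theta)\mu(\theta)<\sum_{\theta}u_R(a^{\ast},\theta)\mu(\theta)$, while at the other belief $\sum_{\theta}u_R(a,\theta)\mu'(\theta)\le\sum_{\theta}u_R(a^{\ast},\theta)\mu'(\theta)$ because $a^{\ast}\in A(\mu')$. Forming the convex combination with weights $\lambda$ and $1-\lambda$, both strictly positive, yields a strict inequality at $\tilde{\mu}$, so $a^{\ast}$ strictly beats $a$ at $\tilde{\mu}$ and hence $a\notin A(\tilde{\mu})$.

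Combining both inclusions gives $A(\tilde{\mu})=A(\mu)\cap A(\mu')$ for every $\tilde{\mu}\in\co\{\mu,\mu'\}\setminus\{\mu,\mu'\}$, which is exactly the claim. There is no real obstacle here: the only subtlety is making sure $\lambda\in(0,1)$ so that the strict inequality at one endpoint survives, which explains precisely why the two endpoints $\mu,\mu'$ must be excluded in the statement.
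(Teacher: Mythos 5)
Your proof is correct and follows essentially the same route as the paper: linearity of the receiver's expected payoff in the belief, with the interior weight $\lambda\in(0,1)$ turning one strict endpoint inequality into a strict inequality at $\tilde{\mu}$. The only slight ellipsis is in your first inclusion, where ``the same value is attained at $\tilde{\mu}$, so $a\in A(\tilde{\mu})$'' tacitly uses that every competitor $a'$ satisfies $\sum_{\theta}u_R(a',\theta)\tilde{\mu}(\theta)\leq\lambda\sum_{\theta}u_R(a,\theta)\mu(\theta)+(1-\lambda)\sum_{\theta}u_R(a,\theta)\mu'(\theta)$; comparing $a$ against arbitrary actions at both endpoints (as the paper does), rather than only against $a^{\ast}$, closes this immediately.
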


\begin{proof}
Suppose $a\in A(\mu)\cap A(\mu')$ and $\tilde{\mu}=\alpha\mu+(1-\alpha)\mu'$ for some $\alpha\in(0,1)$. For notational simplicity, let $u_R(a')\cdot\mu$ denote $\sum_{\theta\in\Theta}u_R(a',\theta)\mu(\theta)$ for any $a'\in A$. Proceeding likewise also for other beliefs, $(u_R(a)-u_R(a'))\cdot\tilde{\mu}=\alpha(u_R(a)-u_R(a'))\cdot\mu+(1-\alpha)(u_R(a)-u_R(a'))\cdot\mu'$, which is nonnegative by $a\in A(\mu)\cap A(\mu')$. Thus, also $a\in A(\tilde{\mu})$, and it follows that $A(\mu)\cap A(\mu')\subseteq A(\tilde{\mu})$. Now suppose $a'\in A(\tilde{\mu})$. Then in fact $(u_R(a)-u_R(a'))\cdot\tilde{\mu}=0$, so that, by $\alpha\in(0,1)$, also both inequalities $(u_R(a)-u_R(a'))\cdot\mu\geq 0$ and $(u_R(a)-u_R(a'))\cdot\mu'\geq 0$ must hold with equality. Thus, $a'\in A(\mu)\cap A(\mu')$, and it follows that $A(\tilde{\mu})\subseteq A(\mu)\cap A(\mu')$.
\end{proof}

\subsection{The set $\cD$}\label{app:D}

A crucial aspect for an equilibrium with two posterior beliefs $\mu,\mu'\in\Delta\Theta$ is the ``size'' of the set $\cD=\set{r-r'}{r,r'\in\Delta A, \supp{r}\subseteq A(\mu), \textnormal{ and } \supp{r'}\subseteq A(\mu')}$. 

\begin{lemma}\label{lem:dimD}
The linear subspace spanned by $\cD$ has dimension $\abs{A(\mu)\cup A(\mu')}-1$.
\end{lemma}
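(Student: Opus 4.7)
The plan is to show the two inclusions separately: that $\cD$ is contained in a natural hyperplane of dimension $\abs{A(\mu)\cup A(\mu')}-1$, and that the linear span of $\cD$ exhausts this hyperplane.

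First, I would set $B:=A(\mu)\cup A(\mu')$ and observe that any $r\in\Delta A$ with $\supp{r}\subseteq A(\mu)$, viewed as a vector in $\R^{\abs A}$, is supported on $B$ and has entries summing to $1$; the same holds for $r'$ supported on $A(\mu')$. Consequently, every difference $r-r'\in\cD$ lies in
\begin{equation*}
V:=\setd{v\in\R^{\abs A}}{\supp{v}\subseteq B \text{ and } \vsum v=0},
\end{equation*}
which is a linear subspace of dimension $\abs B-1$. This already gives the upper bound $\dim\,\text{span}(\cD)\leq\abs{A(\mu)\cup A(\mu')}-1$.

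For the reverse direction, I would fix arbitrary $a\in A(\mu)$ and $b\in A(\mu')$ (these sets are nonempty since the receiver always has a best reply) and exhibit an explicit basis of $V$ inside $\text{span}(\cD)$. For each $x\in A(\mu)$, the vector $e_x-e_b$ belongs to $\cD$ (take $r=e_x$, $r'=e_b$), and for each $y\in A(\mu')$, the vector $e_a-e_y$ belongs to $\cD$. Hence for any $x,x'\in A(\mu)$ the difference $e_x-e_{x'}=(e_x-e_b)-(e_{x'}-e_b)$ is in the span, and analogously $e_y-e_{y'}$ lies in the span for $y,y'\in A(\mu')$; combined with $e_a-e_b\in\cD$, this shows that $e_x-e_y\in\text{span}(\cD)$ for every pair $x,y\in B$.

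Finally, it is standard that the family $\{e_x-e_z:x\in B,\ x\neq z\}$, for any fixed $z\in B$, is a basis of the sum-zero subspace supported on $B$, i.e.\ of $V$; it has $\abs B-1$ elements. Since every such basis vector has just been shown to lie in $\text{span}(\cD)$, we conclude that $V\subseteq\text{span}(\cD)$, and thus $\dim\,\text{span}(\cD)=\abs B-1=\abs{A(\mu)\cup A(\mu')}-1$. I do not anticipate a real obstacle: the only subtlety is to be careful that the spanning argument goes through even when $A(\mu)\cap A(\mu')\neq\emptyset$, which is handled automatically because the construction of $e_x-e_y$ above does not require $x\neq y$ and never uses more than the nonemptiness of each of $A(\mu)$ and $A(\mu')$.
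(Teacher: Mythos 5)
Your proof is correct and follows essentially the same route as the paper: the upper bound comes from the support and sum-zero constraints, and the lower bound from differences of Dirac measures. The only cosmetic difference is that the paper exhibits an explicit family of $\abs{A(\mu)\cup A(\mu')}-1$ linearly independent vectors lying in $\cD$ itself (after partitioning the union into disjoint subsets of $A(\mu)$ and $A(\mu')$), whereas you show that the span of $\cD$ contains the standard difference basis of the sum-zero subspace, which sidesteps the bookkeeping for the overlap $A(\mu)\cap A(\mu')$.
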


\begin{proof}
%
We can identify $\Delta A$ and $\cD$ with subsets of $\R^{\abs{A}}.$ The restrictions that $r(a)-r'(a)=0$ for $a\notin A(\mu)\cup A(\mu')$ and $\vsum(r-r')=0$ show that the dimension of the span of $\cD$ is at most $\abs{A(\mu)\cup A(\mu')}-1$. It remains to find as many linearly independent vectors in $\cD$. If this number is zero, there is nothing to show. Otherwise, $A(\mu)$ and $A(\mu)'$ are nonempty sets whose union contains at least two elements. We can then partition their union into two nonempty disjoint sets $E$ and $E'$, with $E\subseteq A(\mu)$ and $E'\subseteq A(\mu')$, respectively. For any $a\in A$, let $\delta_a\in\Delta A$ be the Dirac measure that assigns probability one to $\{a\}$. Fix some $a\in E$ and $a'\in E'$. Then $\cD$ contains $\abs{E'}$ vectors of the form $\delta_a-\delta_{a''}$ with $a''\in E'$, and $\abs{E}-1$ vectors of the form $\delta_{a''}-\delta_{a'}$ with $a''\in E\setminus \{a\}$, and all these vectors are linearly independent.
\end{proof}

\subsection{Hyperplanes and half spaces}\label{app:hyperplanes}

Recall that, for any $x\in\R^n$, $H_{x}$, $H_{x}^{-}$, and $H_{x}^{+}$ respectively denote the sets $\set{y\in\R^n}{y\cdot x=0}$, $\set{y\in\R^n}{y\cdot x\leq 0}$, and $\set{y\in\R^n}{y\cdot x\geq 0}$. If $x\neq 0$, thus, $H_{x}$ is the hyperplane with normal vector $x$, and $H_{x}^{-}$ and $H_{x}^{+}$ are the two associated (closed) half spaces (whereas $H_{0}=H_{0}^{-}=H_{0}^{+}=\R^n$).

\begin{lemma}\label{lem:halfspaceshyperplanes}
Let $v_1,v_2\in\R^n$ and $D=\co\{v_1,v_2\}$. Then
\begin{enumerate}
\item\label{intersectionhalfspaces}
$\bigcap_{x\in D}H_{x}^{-}=H_{v_1}^{-}\cap H_{v_2}^{-}$,
\item\label{unionhyperplanes}
$\bigcup_{x\in D}H_{x}=(H_{v_1}^{-}\cap H_{v_2}^{+})\cup(H_{v_1}^{+}\cap H_{v_2}^{-})$, and
\item\label{interior}
$H_{v_1}^{-}\cap H_{v_2}^{-}$ is convex, and it has nonempty interior unless $v_2=\lambda v_1\neq 0$ for some $\lambda<0$.
\end{enumerate}
\end{lemma}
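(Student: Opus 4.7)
The plan is to prove the three parts by direct computation, using the parametrization of $D = \co\{v_1,v_2\}$ as the set of $x = \alpha v_1 + (1-\alpha) v_2$ with $\alpha \in [0,1]$. For any $y \in \R^n$, the key observation is that $y \cdot x = \alpha (y \cdot v_1) + (1-\alpha)(y \cdot v_2)$ is an affine function of $\alpha$ on $[0,1]$, taking the value $y \cdot v_2$ at $\alpha=0$ and $y \cdot v_1$ at $\alpha=1$.

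For part (\ref{intersectionhalfspaces}), the inclusion from left to right is immediate by specializing to $\alpha=0$ and $\alpha=1$. Conversely, if $y \cdot v_1 \leq 0$ and $y \cdot v_2 \leq 0$, then every convex combination $\alpha(y \cdot v_1) + (1-\alpha)(y \cdot v_2)$ is nonpositive, so $y \in H_{x}^{-}$ for all $x \in D$. For part (\ref{unionhyperplanes}), I apply the intermediate value theorem: $y$ lies in the left-hand side if and only if some $\alpha \in [0,1]$ drives the affine function above to zero, which happens if and only if $y \cdot v_1$ and $y \cdot v_2$ have opposite weak signs, i.e., $(y \cdot v_1)(y \cdot v_2) \leq 0$. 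The latter inequality is exactly the membership condition for $(H_{v_1}^{-}\cap H_{v_2}^{+}) \cup (H_{v_1}^{+}\cap H_{v_2}^{-})$.

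For part (\ref{interior}), convexity is immediate as an intersection of two convex half-spaces. For the interior, I split into cases according to the relationship between $v_1$ and $v_2$. If at least one of them is zero, the intersection is either all of $\R^n$ or a single half-space, both of which have nonempty interior. If both are nonzero and $v_2 = \lambda v_1$ with $\lambda > 0$, the two half-spaces coincide, so the intersection is a single half-space. If $v_1$ and $v_2$ are nonzero and linearly independent, the linear map $y \mapsto (y\cdot v_1, y \cdot v_2)$ from $\R^n$ onto $\R^2$ is surjective, so the preimage of any point in the open quadrant $(-\infty,0)^2$ provides a $y$ in the interior. Finally, in the exceptional case $v_2 = \lambda v_1 \neq 0$ with $\lambda < 0$, we have $H_{v_2}^{-} = H_{v_1}^{+}$, so the intersection reduces to the hyperplane $H_{v_1}$, which has empty interior.

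The main obstacle is really just bookkeeping in the case analysis of part (\ref{interior}) — in particular, isolating exactly when the two half-spaces point in opposite directions, since this is the only scenario that collapses the intersection to a hyperplane. The other two parts are essentially one-line consequences of the affine-function-in-$\alpha$ observation.
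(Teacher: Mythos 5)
Your proof is correct and follows essentially the same route as the paper's: parts (\ref{intersectionhalfspaces}) and (\ref{unionhyperplanes}) via the affine-in-$\alpha$/intermediate-value observation, and part (\ref{interior}) via the same case split on the relationship between $v_1$ and $v_2$. The only difference is cosmetic: in the linearly independent case you get an interior point from surjectivity of $y\mapsto(y\cdot v_1,y\cdot v_2)$, whereas the paper writes $v_2=\lambda v_1+w$ with $w$ orthogonal to $v_1$ and exhibits the witness $-(v_1+w)$ explicitly.
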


\begin{proof}
\ref{intersectionhalfspaces} Clearly, the set on the left is a subset of the set on the right. For the other direction, note that $y\cdot v_1\leq 0$ and $y\cdot v_2\leq 0$ implies $y\cdot(\alpha v_1+(1-\alpha)v_2)=\alpha y\cdot v_1+ (1-\alpha) y\cdot v_2\leq 0+0$ for $\alpha\in (0,1)$.

\ref{unionhyperplanes} Being element of the set on the right means having a nonnegative dot product with one of the values $v_1,v_2$ and a nonpositive one with the other. Between them, at an element $x$ of $D$, the value must be zero, which means being in some $H_{x}$. On the other hand, if $y\in H_{x}$ for some $x\in D$, we clearly cannot have both $y\cdot v_1<0$ and $y\cdot v_2<0$, or both $y\cdot v_1>0$ and $y\cdot v_2>0$. But this means $y$ must be in the set on the right.

\ref{interior} Convexity is straightforward. Concerning the nonempty interior, consider first the case that $v_1$ or $v_2$ is the null vector, say w.l.o.g.\ $v_2$. Then $H_{v_2}^{-}=\R^n$, so the intersection equals $H_{v_1}^{-}$. We always have $-v_1\in\inte H_{v_1}^{-}$, by $H_{v_1}^{-}=\R^n$ if also $v_1=0$, and otherwise by $(-v_1)\cdot v_1=-v_1^2<0$ and continuity of the dot product. Hence, suppose neither $v_1$ nor $v_2$ is the null vector. If they are linearly dependent, then $v_2=\lambda v_1$ for some $\lambda\neq 0$. So suppose now this is the case with $\lambda>0$. Then $H_{v_2}^{-}=H_{v_1}^{-}$, which has nonempty interior as already shown. Finally, consider the case that $v_1$ and $v_2$ are linearly independent. Then we can write $v_2=\lambda v_1+w$ with $v_1\cdot w=0$ and $w\neq 0$. Since also $v_1\neq 0$, this implies both $-(v_1+w)\cdot v_1=-v_1^2<0$ and $-(v_1+w)\cdot v_2=-\lambda v_1^2-w^2<0$. Thus, again by continuity of the dot product, $-(v_1+w)$ is in the interior of $H_{v_1}^{-}\cap H_{v_2}^{-}$.
\end{proof}

For later reference, the following lemma collects some standard facts about orthogonal complements; see \citet[Chapter 2]{Lax}.

\begin{lemma}\label{lem:intersectionofhyperplanes}
Let $D$ be a linear subspace of $\R^n$ and $U$ the orthogonal complement of $D$, that is the space of all $y\in\R^n$ such that $x\cdot y=0$ for all $x\in D$. Then $D$ is also the orthogonal complement of $U$ and $\dim D+\dim U=n$. Moreover, $U$ is the orthogonal complement of any set of vectors that span $D$. Similarly, $D$ is the orthogonal complement of any set of vectors that span $U$.
\end{lemma}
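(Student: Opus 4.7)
The plan is to first verify that $U$ is itself a linear subspace, then establish the dimension formula via Gram--Schmidt, which simultaneously delivers the orthogonal decomposition $\R^{n}=D\oplus U$. From that decomposition, all four claims follow by routine computations in coordinates.

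First, I would check that $U$ is a linear subspace: if $y_1,y_2\in U$ and $\alpha,\beta\in\R$, then for any $x\in D$, bilinearity of the dot product gives $(\alpha y_1+\beta y_2)\cdot x=\alpha(y_1\cdot x)+\beta(y_2\cdot x)=0$, so $\alpha y_1+\beta y_2\in U$. Next I would choose a basis of $D$, orthonormalize it via Gram--Schmidt to obtain $e_1,\dots,e_k$ with $k=\dim D$, extend to a basis of $\R^{n}$, and apply Gram--Schmidt once more to produce an orthonormal basis $e_1,\dots,e_k,e_{k+1},\dots,e_n$ of $\R^{n}$.

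I would then show $U=\operatorname{span}\{e_{k+1},\dots,e_n\}$. Each $e_j$ with $j>k$ is orthogonal to $e_1,\dots,e_k$ and hence, by bilinearity, to every element of $D$, so $e_j\in U$. Conversely, any $y\in\R^{n}$ expands uniquely as $y=\sum_{i=1}^{n}(y\cdot e_i)e_i$; if additionally $y\in U$, then $y\cdot e_i=0$ for $i\le k$ (because $e_i\in D$), so $y$ lies in $\operatorname{span}\{e_{k+1},\dots,e_n\}$. This immediately yields $\dim U=n-k$, i.e.\ $\dim D+\dim U=n$. To see that $D$ is the orthogonal complement of $U$, note the inclusion $\subseteq$ is immediate from the definition of $U$; for the reverse, if $x\cdot y=0$ for all $y\in U$, then in particular $x\cdot e_j=0$ for $j>k$, so the orthonormal expansion of $x$ involves only $e_1,\dots,e_k$ and thus $x\in D$.

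For the final two statements, suppose $\{v_\alpha\}$ is any spanning set of $D$. If $y\in U$, then $y\cdot v_\alpha=0$ for every $\alpha$ by definition. Conversely, if $y\cdot v_\alpha=0$ for all $\alpha$, then by bilinearity $y\cdot x=0$ for every finite linear combination $x$ of the $v_\alpha$, i.e.\ for every $x\in D$; hence $y\in U$. So $U$ is the orthogonal complement of $\{v_\alpha\}$. The symmetric statement that $D$ is the orthogonal complement of any spanning set of $U$ follows by applying the same argument with the roles of $D$ and $U$ interchanged, using that $D$ is the orthogonal complement of $U$. There is no real obstacle here; the only nontrivial ingredient is Gram--Schmidt, which guarantees the orthonormal basis used to read off both the dimension identity and the orthogonal decomposition in coordinates.
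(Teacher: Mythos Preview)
Your proof is correct. The paper does not actually prove this lemma; it merely states it ``collects some standard facts about orthogonal complements'' and cites \citet[Chapter~2]{Lax}. Your Gram--Schmidt argument is a standard, self-contained route to all four claims and supplies more detail than the paper itself.
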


\section{Proofs of main results}\label{app:proofs}

\subsection{Proofs of Lemma~\ref{lem:samebeliefs} and Lemma~\ref{lem:caratheodory}}\label{app:lemmas}

\begin{proof}[{\bfseries Proof of Lemma~\ref{lem:samebeliefs}}]
The result follows from the following claim.

\begin{claim}\label{clm:samebeliefs}
Let $\Gamma$ be a cheap-talk game and $p$ a given Bayes plausible posterior belief distribution with finite support. Then there exists a neighborhood $V$ of $p$ such that, if $(\tilde{p},\tilde{\rho})$ is an equilibrium with $\tilde{p}\in V$, there is a sender payoff-equivalent equilibrium with posterior belief distribution $p$. Moreover, the neighborhood $V$ does not depend on the sender's utility function $u_S$.
\end{claim}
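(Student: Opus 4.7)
My plan is to construct the sender payoff-equivalent equilibrium $(p,\rho')$ directly, by assigning to each $\mu_k\in\supp{p}$ a representative $\tilde\mu_k\in\supp{\tilde p}$ close to $\mu_k$ and setting $\rho'(\mu_k):=\tilde\rho(\tilde\mu_k)$. The neighborhood $V$ is then chosen precisely so that such representatives are guaranteed to exist.

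First, using Lemma~\ref{lem:Aneighborhood}, I would select for every $\mu_k\in\supp{p}$ a (small, pairwise disjoint) open neighborhood $N_k$ of $\mu_k$ satisfying (i) $A(\tilde\mu)\subseteq A(\mu_k)$ and (ii) $\supp{\mu_k}\subseteq\supp{\tilde\mu}$ for all $\tilde\mu\in N_k$. Property~(ii) follows from the continuity of the coordinate map $\mu\mapsto\mu(\theta)$ together with $\mu_k(\theta)>0$ for $\theta\in\supp{\mu_k}$, and neither property involves $u_S$. Next, fix continuous bump functions $f_k:\Delta\Theta\to[0,1]$ with $f_k(\mu_k)=1$ and $f_k\equiv 0$ outside $N_k$, and define
\[
V:=\setd{\tilde p\in\Delta\Delta\Theta}{\textstyle\int f_k\,d\tilde p>\tfrac{1}{2}p(\mu_k)\text{ for every }k},
\]
which is an open neighborhood of $p$ in the weak topology and, by construction, does not involve $u_S$. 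For any $\tilde p\in V$ we have $\tilde p(N_k)\geq\int f_k\,d\tilde p>0$, so we may pick some $\tilde\mu_k\in N_k\cap\supp{\tilde p}$ for each $k$.

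Setting $\rho'(\mu_k):=\tilde\rho(\tilde\mu_k)$ should then yield a valid equilibrium $(p,\rho')$ in the sense of Definition~\ref{def:eqm}. Bayes plausibility of $p$ is given; $\supp{\rho'(\mu_k)}\subseteq A(\tilde\mu_k)\subseteq A(\mu_k)$ by property~(i); and for any $\theta\in\supp{\mu_k}$, property~(ii) ensures $\theta\in\supp{\tilde\mu_k}$, so the sender's equilibrium condition in $(\tilde p,\tilde\rho)$ transfers immediately to $(p,\rho')$. The same observation delivers payoff equivalence: at any $\theta\in\supp{\mu_k}$, both interim payoffs equal $u_S(\theta)\cdot\tilde\rho(\tilde\mu_k)$; Bayes plausibility of $p$ together with the full-support assumption on $\mu_0$ implies that every $\theta\in\Theta$ lies in some such $\supp{\mu_k}$, and so the two payoff functions agree pointwise.

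The main obstacle I foresee is that the weak topology on $\Delta\Delta\Theta$ does not directly confine $\supp{\tilde p}$ to lie in $\bigcup_k N_k$: arbitrarily close to $p$, a distribution $\tilde p$ may still carry small mass on posteriors far from $\supp{p}$. The construction above sidesteps this issue by demanding only that each $N_k$ \emph{meets} $\supp{\tilde p}$ — a lower bound on $\tilde p(N_k)$ that the weak topology does control via the bump functions $f_k$ — and by simply discarding any extraneous posteriors in $\supp{\tilde p}$; they never enter the definition of $\rho'$.
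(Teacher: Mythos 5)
Your proposal is correct and follows essentially the same route as the paper's proof: pick for each $\mu_k\in\supp{p}$ a nearby posterior in $\supp{\tilde p}$ whose support contains $\supp{\mu_k}$ and whose optimal-action set is contained in $A(\mu_k)$, transplant the receiver's mixed action, and transfer the sender's optimality and interim payoffs state by state. The only difference is cosmetic: where the paper invokes lower hemicontinuity of the support correspondence to get the neighborhood $V$, you construct $V$ explicitly with bump functions, which proves exactly that hemicontinuity statement by hand.
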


For each $\mu\in\supp{p}$, there exists a neighborhood $U_{\mu}$ such that $A(\mu')\subseteq A(\mu)$ for all $\mu'\in U_{\mu}$ (see Lemma~\ref{lem:Aneighborhood}). Further, since $\Theta$ is finite and the support correspondence is lower hemicontinuous by Theorem~17.14 in \citet{AliprantisBorder06}, there exists for each $\mu\in\supp{p}$ a neighborhood $V_{\mu}$ such that $\supp{\mu'}\supseteq \supp{\mu}$ for each $\mu'\in V_{\mu}$.\footnote
{In detail: For each $\theta\in\supp{\mu}$, there must exist a neighborhood $V_{\mu}^\theta$ of $\mu$ such that $\supp{\mu'}\cap\{\theta\}\neq\emptyset$ for $\mu'\in V_{\mu}^\theta$. Let $V_{\mu}$ be the intersection of these $V_{\mu}^\theta$.}
Again, since the support correspondence is lower hemicontinuous, there exists a neighborhood $V$ of $p$ such that for every $p'\in V$, and every
$\mu\in\supp{p}$, one has $\supp{p'}\cap U_{\mu}\cap V_{\mu}\neq\emptyset$. So far, nothing depended on $u_S$.

Now suppose $(\tilde{p},\tilde{\rho})$ is an equilibrium with $\tilde{p}\in V$. Since $\supp{\tilde{p}}\cap U_{\mu}\cap V_{\mu}\neq\emptyset$ for all $\mu\in\supp{p}$, there is a function $f_{\tilde{p}}:\supp{p}\to \supp{\tilde{p}}$ such that $f_{\tilde{p}}(\mu)\in U_{\mu}\cap V_{\mu}$ for all $\mu\in\supp{p}$. Let $\rho=\tilde{\rho}\circ f_{\tilde{p}}$. We claim that $(p,\rho)$ is an equilibrium that is sender payoff-equivalent to $(\tilde{p},\tilde{\rho})$. By assumption, $p$ is Bayes plausible. Moreover, since $A(\mu')\subseteq A(\mu)$ for $\mu'\in U_{\mu}$ and $f_{\tilde{p}}(\mu)\in U_{\mu}$, we have $\supp{(\tilde{\rho}\circ f_{\tilde{p}}(\mu))}\subseteq A(\mu)$ for all $\mu\in\supp{p}$. Now take any $\mu^*\in\supp{p}$ and $\theta\in\supp{\mu^*}$. Since $(\tilde{p},\tilde{\rho})$ is an equilibrium, it must be optimal to choose any $\mu'\in\supp{\tilde{p}}$ that has $\theta$ in its support at $\theta$. This gives us the second equality in the following:
\begin{align*}
u_S(\theta)\cdot\rho(\mu^*)=u_S(\theta)\cdot\tilde{\rho}(f_{\tilde{p}}(\mu^*))&=\max_{\mu'\in\supp{\tilde{p}}}u_S(\theta)\cdot\tilde{\rho}(\mu') \\
&\geq\max_{\mu\in\supp{p}}u_S(\theta)\cdot\tilde{\rho}(f_{\tilde{p}}(\mu)) \\
&=\max_{\mu\in\supp{p}}u_S(\theta)\cdot\rho(\mu) \\
&\geq u_S(\theta)\cdot\rho(\mu^*).
\end{align*}
Therefore, all inequalities are actually equalities. The equality
\begin{equation*}
u_S(\theta)\cdot\rho(\mu^*)=\max_{\mu\in\supp{p}}u_S(\theta)\cdot\rho(\mu)
\end{equation*}
means that the last condition in the definition of an equilibrium is satisfied for $(p,\rho)$ and the equality
\begin{equation*}
u_S(\theta)\cdot\rho(\mu^*)=\max_{\mu'\in\supp{\tilde{p}}}u_S(\theta)\cdot\tilde{\rho}(\mu')
\end{equation*}
together with $(\tilde{p},\tilde{\rho})$ being an equilibrium guarantees that $(p,\rho)$ is sender payoff-equivalent.
\end{proof}

\begin{proof}[{\bfseries Proof of Lemma~\ref{lem:caratheodory}}]
By equilibrium condition~\ref{eqm_p}, $p$ must be Bayes plausible, i.e., $\mu_0$ must be a convex combination of all beliefs $\mu \in \supp{p}$. By Carath{\'e}odory's theorem, there is, thus, a subset $B$ of $\supp{p}$ with $|B| \leq |\Theta|$ such that $\mu_0$ is a convex combination of all beliefs $\mu \in B$ (since one can identify beliefs $\mu \in \Delta \Theta$ with vectors in $\R^{|\Theta|-1}$ by a linear injective mapping). This convex combination defines a Bayes plausible belief distribution $p'$ with $\supp{p'}\subseteq B\subseteq\supp{p}$. Now let $\rho'=\rho\vert_{\supp{p'}}$. Then $\rho'=\rho\circ f$, where $f$ is the identity function on $\supp{p'}$. This function $f$ trivially satisfies the properties that were used in the proof of Lemma~\ref{lem:samebeliefs}, so the arguments given there (applied to $p'$ in place of $p$ and $(p,\rho)$ in place of $(\tilde{p},\tilde{\rho})$) show that $(p',\rho')$ is an equilibrium that is sender payoff-equivalent to $(p,\rho)$.

Next, suppose $(p,\rho)$ is robust and let, for any neighborhoods $U$, $V$, and $W$, $U'$ be the corresponding set of perturbed sender utility functions $\tilde{u}_S$. Then the same set $U'$ and the same construction that was used for $\rho'$ can also be applied for showing that $(p',\rho')$ is robust. Indeed, consider any $\tilde{u}_S\in U'$ and the corresponding equilibrium $(\tilde{p},\tilde{\rho})$. By Lemma~\ref{lem:samebeliefs}, we may assume $\tilde{p}=p$. Now let $\tilde{\rho}'=\tilde{\rho}\vert_{\supp{p'}}$. Then, by the arguments given for $(p',\rho')$, also $(p',\tilde{\rho}')$ is an equilibrium, and $(p',\tilde{\rho}')$ and $(p,\tilde{\rho})$ are sender payoff-equivalent. Since also $(p',\rho')$ and $(p,\rho)$ are sender payoff-equivalent, it follows that $(p',\rho')$ inherits robustness from $(p,\rho)$. The argument for a fully robust equilibrium is completely analogous.
\end{proof}

\subsection{Proof of Proposition~\ref{prop:binarybeliefrobust}}\label{app:binarybeliefrobust}

\noindent
{\bfseries\itshape Preliminaries.}
Recall that, for any $x\in\R^n$, $H_{x}$, $H_{x}^{-}$, and $H_{x}^{+}$ respectively denote the sets $\set{y\in\R^n}{y\cdot x=0}$, $\set{y\in\R^n}{y\cdot x\leq 0}$, and $\set{y\in\R^n}{y\cdot x\geq 0}$.

To start the proof, suppose $(p,\rho)$ is an equilibrium with $\supp{p}=\{\mu,\mu'\}$ and where $\mu<\mu'$. By Lemma~\ref{lem:samebeliefs}, we only need to consider equilibria $(\tilde{p},\tilde{\rho})$ for any perturbed utility function $\tilde{u}_S$ such that $\tilde{p}=p$. Thus, we only consider receiver strategies $\tilde{\rho}$ with domain $\{\mu,\mu'\}$, and it will be convenient to map any such $\tilde{\rho}$ to the difference $x=\tilde{\rho}(\mu)-\tilde{\rho}(\mu')$.

The set of these differences such that $(p,\tilde{\rho})$ satisfies equilibrium condition~\ref{eqm_r} is
\begin{equation*}
\cD:=\setd{r-r'}{\text{$r,r'\in\Delta A$, $\supp{r}\subseteq A(\mu)$, and $\supp{r'}\subseteq A(\mu')$}}.
\end{equation*}
Further, given $x=\tilde{\rho}(\mu)-\tilde{\rho}(\mu')$, and since $\mu<\mu'$ implies that $\theta_1\in\supp{\mu}$ and $\theta_2\in\supp{\mu'}$, equilibrium condition~\ref{eqm_s} is satisfied if and only if
\begin{equation*}
\tilde{u}_S(\theta_1)\cdot x\geq 0\quad\text{and}\quad \tilde{u}_S(\theta_2)\cdot x\leq 0
\end{equation*}
for $x=\tilde{\rho}(\mu)-\tilde{\rho}(\mu')$, where the first inequality must hold with equality if $\mu'<1$ (because then also $\theta_1\in\supp{\mu'}$), and the second must be an equality if $\mu>0$ (because then also $\theta_2\in\supp{\mu}$).

Thus, there exists an equilibrium $(p,\tilde{\rho})$ for $\tilde{u}_S$ if and only if there is some $x\in\cD$ that satisfies the two given (in)equalities. In particular, since $(p,\rho)$ is an equilibrium for $\tilde{u}_S=u_S$, $x=\rho(\mu)-\rho(\mu')$ satisfies $x\in\cD$ and in fact $u_S(\theta_1)\cdot x=u_S(\theta_2)\cdot x=0$, because transparency of $\Gamma$ means $u_S(\theta_1)=u_S(\theta_2)$.

\noindent
{\bfseries\itshape Proof of necessity.}
Suppose now that $(p,\rho)$ is robust and generates an expected sender payoff $s>\max\cV(\mu_0)$. Note that $\mu<\mu'$ implies in fact $\mu<\mu_0<\mu'$. By way of contradiction, suppose none of the conditions \ref{fullyinf}--\ref{fouractions} holds. Specifically, assume $\mu'<1$, as the case in which $\mu'=1$ and $\mu>0$ is symmetric to $\mu=0$ and $\mu'<1$.

Fix any neighborhoods $U$, $V$, and $W$ from the definition of a robust equilibrium and let $U'$ be the associated set of perturbed sender utility functions. Take any $\tilde{u}_S\in U'$ and let $(\tilde{p},\tilde{\rho})$ be the corresponding equilibrium, where we may assume $\tilde{p}=p$ by Lemma~\ref{lem:samebeliefs} and hence consider $x=\tilde{\rho}(\mu)-\tilde{\rho}(\mu')$. As argued at the beginning of the proof, necessarily $x\in\cD$, and, since $\mu'<1$, $\tilde{u}_S(\theta_1)\cdot x=0$ and $\tilde{u}_S(\theta_2)\cdot x\leq 0$, where the inequality cannot be strict if $\mu>0$. We are now going to argue that this implies $x=0$ (if $\tilde{u}_S$ is a generic element of $U'$).

By Lemma~\ref{lem:dimD}, $\cD$ is contained in a subspace with dimension $m=\abs{A(\mu)\cup A(\mu')}-1<n$. Therefore, by Lemma~\ref{lem:intersectionofhyperplanes}, $\cD\subseteq\bigcap_{i=1}^{n-m}H_{v_i}$ for some linearly independent vectors $v_1,\dots,v_{n-m}$, which we fix. Since condition~\ref{fouractions} does not hold, we have $m<3$, so there are three cases to consider. If $m=0$, then $\cD=\{0\}$, which trivially implies $x=0$. If $m=1$, then necessarily $x\in H_{\tilde{u}_S(\theta_1)}\cap\bigcap_{i=1}^{n-1}H_{v_i}$, where, generically for all $\tilde{u}_S\in\tilde{\cU}$, the vectors $\tilde{u}_S(\theta_1),v_1,\dots,v_{n-1}$ are linearly independent, which implies $x=0$. If $m=2$, i.e., $\abs{A(\mu)\cup A(\mu')}=3$, then, since condition~\ref{threeactions} does not hold, we must have $\mu>0$, so necessarily $x\in H_{\tilde{u}_S(\theta_1)}\cap H_{\tilde{u}_S(\theta_2)}\cap\bigcap_{i=1}^{n-2}H_{v_i}$, where, again generically, the vectors $\tilde{u}_S(\theta_1),\tilde{u}_S(\theta_2),v_1,\dots,v_{n-2}$ are linearly independent, which once more requires $x=0$.

However, $x=0$ means $\tilde{\rho}(\mu)=\tilde{\rho}(\mu')$, and then the sender's interim expected equilibrium payoff in any state is $\tilde{s}^*(\theta)=\tilde{u}_S(\theta)\cdot\tilde{\rho}(\mu)$. Further, $x=0\in\cD$ only if $\supp{(\tilde{\rho}(\mu))}=\supp{(\tilde{\rho}(\mu'))}\subseteq A(\mu)\cap A(\mu')$. Then, by Lemma~\ref{lem:Aintersection} and $\mu<\mu_0<\mu'$, also $\supp{(\tilde{\rho}(\mu))}\subseteq A(\mu_0)$. The latter means that there is also a babbling equilibrium in which the receiver uses the mixed action $\tilde{\rho}(\mu)$ (for the ``posterior'' belief $\mu_0$). Using the state-independent utility function $u_S$, Lemma~LR implies $u_S(\theta)\cdot\tilde{\rho}(\mu)\in\cV(\mu_0)$ for any $\theta$. Thus, since $\tilde{s}^*(\theta)=\tilde{u}_S(\theta)\cdot\tilde{\rho}(\mu)$, $\tilde{s}^*(\theta)$ must be in the vicinity of $\cV(\mu_0)$ if we take $U$ to be small enough and $U'$ a subset of $U$ (which is w.l.o.g.). This, however, contradicts that $\tilde{s}^*(\theta)$ approaches $s>\max\cV(\mu_0)$ if we take $W$ to be small enough.

\noindent
{\bfseries\itshape Proof of sufficiency.}
For the first step, we are going to use the facts established at the beginning of the proof to show that, under any of the conditions \ref{fullyinf}--\ref{fouractions}, and for any neighborhoods $U$ of $u_S$ and $X$ of $\rho(\mu)-\rho(\mu')$, there is some $u_S'\in U$ with a neighborhood $U'\subseteq U$ such that, for any $\tilde{u}_S\in U'$, there is an equilibrium $(p,\tilde{\rho})$ such that $x=\tilde{\rho}(\mu)-\tilde{\rho}(\mu')\in X$. The candidate $x=\rho(\mu)-\rho(\mu')$, which is in $\cD\cap X$, trivially works for all $\tilde{u}_S$ if $x=0$. Hence, suppose $\rho(\mu)\neq\rho(\mu')$.

If condition~\ref{fullyinf} holds, i.e., $\mu=0$ and $\mu'=1$, then $x=\rho(\mu)-\rho(\mu')\neq 0$ still works for a perturbed utility function $\tilde{u}_S$ if and only if $\tilde{u}_S(\theta_1)\cdot x\geq 0$ and $\tilde{u}_S(\theta_2)\cdot x\leq 0$. This condition can be written as $\tilde{u}_S\in H_{x}^{+}\times H_{x}^{-}$. Being half spaces, both $H_{x}^{+}$ and $H_{x}^{-}$ are convex and have nonempty interior, and then the same holds for their product. Moreover, since $u_S(\theta_1)\cdot x=u_S(\theta_2)\cdot x=0$, in particular $u_S\in H_{x}^{+}\times H_{x}^{-}$. Therefore, by Lemma 5.28 (1.) in \citet{AliprantisBorder06}, the set $U\cap H_{x}^{+}\times H_{x}^{-}$ has nonempty interior. Thus, we can take this interior as $U'$ if condition~\ref{fullyinf} holds.

Now suppose condition~\ref{threeactions} or condition~\ref{fouractions} holds, but condition~\ref{fullyinf} not. Specifically, assume again $\mu'<1$, as the case in which $\mu'=1$ and $\mu>0$ is symmetric to $\mu=0$ and $\mu'<1$. Then the equilibrium (in)equalities for $x$ are $\tilde{u}_S(\theta_1)\cdot x=0$ and $\tilde{u}_S(\theta_2)\cdot x\leq 0$, where the inequality cannot be strict if $\mu>0$. Now we cannot use $x=\rho(\mu)-\rho(\mu')$ for many perturbed utility functions $\tilde{u}_S$ anymore, and the more involved arguments that establish the desired set $U'$ are relegated to Proposition~\ref{prop:opensetoneindiffonepref} (in case $\mu=0$) and Proposition~\ref{prop:opensettwoindiff} (in case $\mu>0$). To see that the prerequisites for these propositions are satisfied, recall that $x=\rho(\mu)-\rho(\mu')$ satisfies $x\in\cD$ and $u_S(\theta_1)\cdot x=u_S(\theta_2)\cdot x=0$. Moreover, Lemma~\ref{lem:dimD} implies that $\cD$ contains $m=\abs{A(\mu)\cup A(\mu')}-1$ linearly independent vectors, which, since condition~\ref{threeactions} or condition~\ref{fouractions} holds, is at least two, and at least three if $\mu>0$. Further, if we pick any number of linearly independent vectors from $\cD$, we may, by the Steinitz exchange lemma, assume that $x=\rho(\mu)-\rho(\mu')$ is one of them, because then $x\neq 0$ by hypothesis. Therefore, as $\cD$ is convex, we can indeed apply Proposition~\ref{prop:opensetoneindiffonepref} or Proposition~\ref{prop:opensettwoindiff} to obtain $U'$.

For the second step of proving sufficiency, note that, if $U'$ is the subset of $U$ established in the first step, then we can ensure that all $\tilde{u}_S\in U'$ are arbitrarily close to $u_S$ by choosing $U$ small enough. Additionally, by choosing $X$ small enough and using Proposition~\ref{prop:decomp}, we can guarantee that $\tilde{\rho}(\mu)$ and $\tilde{\rho}(\mu')$ are, respectively, arbitrarily close to $\rho(\mu)$ and $\rho(\mu')$ (uniformly for all $\tilde{u}_S\in U'$). This way, we can let the sender's equilibrium payoffs $\tilde{s}^*(\theta)=\max(\tilde{u}_S(\theta)\cdot\tilde{\rho}(\mu),\tilde{u}_S(\theta)\cdot\tilde{\rho}(\mu'))$ be as close to $s^*(\theta)=\max(u_S(\theta)\cdot\tilde{\rho}(\mu),u_S(\theta)\cdot\tilde{\rho}(\mu'))$ as we want for every $\theta$ (again uniformly for all $\tilde{u}_S\in U'$), which yields the robustness of $(p,\rho)$.

The proof of Proposition~\ref{prop:binarybeliefrobust} is now complete up to the three more technical Propositions \ref{prop:opensetoneindiffonepref}, \ref{prop:opensettwoindiff}, and \ref{prop:decomp}, which are central for sufficiency.

\begin{proposition}\label{prop:opensetoneindiffonepref}
Let $u_0\in\R^n$, $D=\co\{v_1,v_2\}$ for two linearly independent vectors $v_1,v_2\in\R^n$, and $x_0\in D$ such that $u_0\cdot x_0=0$. Then, for any given $\epsilon>0$, there exists a nonempty, open set $U\subseteq\R^n\times\R^n$ such that $U\subseteq B_{\epsilon}(u_0,u_0)$ and for all $(u_1,u_2)\in U$ there is some $x\in D\cap B_{\epsilon}(x_0)$ such that $u_1\cdot x=0\geq u_2\cdot x$ (i.e., $U\subseteq\bigcup_{x\in D\cap B_{\epsilon}(x_0)}H_{x}\times H_{x}^{-}$).
\end{proposition}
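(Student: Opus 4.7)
The plan is to parametrize $D$ as $x(t)=(1-t)v_1+tv_2$ for $t\in[0,1]$, so that $x_0=x(t_0)$ for some $t_0\in[0,1]$. Observe that $0\notin D$, since $v_1,v_2$ are linearly independent, and hence every point of $D$ is nonzero. The central tool is that whenever $u\cdot(v_2-v_1)\neq 0$, the equation $u\cdot x(t)=0$ has the unique solution $t^*(u)=-u\cdot v_1/u\cdot(v_2-v_1)$, and the map $u\mapsto t^*(u)$ is continuous on its domain. Once a suitable $u_1^*$ is found so that $t^*(u_1^*)$ lies strictly inside $(0,1)$, the remaining constraints become open conditions and yield a neighborhood via continuity.

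In the first step I would construct $u_1^*$ arbitrarily close to $u_0$ such that $u_1^*\cdot(v_2-v_1)\neq 0$, $t^*(u_1^*)\in(0,1)$, and $x(t^*(u_1^*))$ is arbitrarily close to $x_0$. When $t_0\in(0,1)$ and $u_0\cdot(v_2-v_1)\neq 0$, take $u_1^*=u_0$; then $t^*=t_0$ and $x(t^*)=x_0$. When $u_0\cdot(v_2-v_1)\neq 0$ but $t_0\in\{0,1\}$, take $u_1^*=u_0+\eta z$ with $z$ chosen so that the sign of $z\cdot v_1$ relative to $u_0\cdot(v_2-v_1)$ nudges $t^*(u_1^*)$ just inside $(0,1)$; for $\eta$ small, $t^*$ is close to $t_0$. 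The delicate case is $u_0\cdot(v_2-v_1)=0$, which combined with $u_0\cdot x_0=0$ forces $u_0\cdot v_1=u_0\cdot v_2=0$, so $u_0$ annihilates all of $D$. Here I use the linear independence of $v_1$ and $v_2-v_1$ to pick $z$ with $z\cdot(v_2-v_1)\neq 0$ and $z\cdot v_1$ prescribed so that $t^*(u_0+\eta z)=-z\cdot v_1/z\cdot(v_2-v_1)$, which is independent of $\eta\neq 0$, lies in $(0,1)$ and arbitrarily close to $t_0$; taking $\eta>0$ small then keeps $u_1^*$ near $u_0$.

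In the second step I set $y^*=x(t^*(u_1^*))\neq 0$ and define $u_2^*=u_1^*-\delta y^*$ for small $\delta>0$, so that $u_2^*\cdot y^*=-\delta\norm{y^*}^2<0$, with $u_2^*$ staying in any preassigned neighborhood of $u_0$ by shrinking $\delta$. Finally, the four conditions $u_1\cdot(v_2-v_1)\neq 0$, $t^*(u_1)\in(0,1)$, $x(t^*(u_1))\in B_\epsilon(x_0)$, and $u_2\cdot x(t^*(u_1))<0$ are open in $(u_1,u_2)$ by continuity; their intersection with $B_\epsilon(u_0,u_0)$ is the desired set $U$, with witness $x=x(t^*(u_1))$ for each $(u_1,u_2)\in U$.

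The main obstacle is the subcase $u_0\cdot(v_2-v_1)=0$: no small perturbation of $u_0$ in an arbitrary direction yields a $t^*(u_1^*)$ in a prescribed subinterval of $(0,1)$, because $t^*$ degenerates as the denominator passes through zero. What saves the argument is the freedom to choose the perturbation \emph{direction} $z$ (rather than just its magnitude), and the linear independence of $v_1$ and $v_2-v_1$ then lets me realize any desired value of $t^*\in(0,1)$.
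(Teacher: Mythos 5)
Your proof is correct, but it takes a genuinely different route from the paper's. The paper fixes a subsegment $D'=\co\{x_0,x'\}\subseteq D\cap B_{\epsilon}(x_0)$ on which $u_0\cdot x\leq 0$ (choosing $x'$ on the side of $x_0$ determined by the signs of $u_0\cdot v_1$ and $u_0\cdot v_2$), and then builds a \emph{product} neighborhood $U_1\times U_2$: $U_2$ comes from the convexity and nonempty interior of $\bigcap_{x\in D'}H_x^{-}$, so the inequality for $u_2$ holds uniformly over all of $D'$, while $U_1$ comes from writing $\bigcup_{x\in D'}H_x$ as a union of two convex cones with nonempty interior; the endpoint case ($x_0$ a vertex of $D$) is handled by perturbing the coefficients $\lambda_0$ and orthogonally projecting $u_0$. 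You instead exploit the one-dimensionality of the equality constraint directly: parametrizing the segment and solving $u_1\cdot x(t)=0$ explicitly gives a continuous root map $t^*(u_1)$, you construct one strictly feasible pair $(u_1^*,u_2^*)$ near $(u_0,u_0)$ (perturbing $u_0$ in a chosen direction $z$ when $u_0$ annihilates all of $D$, which is exactly where the root formula degenerates), and then openness of the strict conditions yields $U$, which need not be a product and whose witness $x$ varies with $u_1$. Your argument is more elementary---no half-space lemmas or projection step---and is closer in spirit to the paper's implicit-function-theorem treatment of the two-equality case (its Proposition~4), whereas the paper's construction decouples the conditions on $u_1$ and $u_2$ and reuses its general hyperplane machinery. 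One trivial imprecision: in the boundary subcase $t_0=1$ the sign that must be controlled is that of $z\cdot v_2$ (not $z\cdot v_1$) relative to $u_0\cdot(v_2-v_1)$; the freedom to choose $z$ makes this immaterial.
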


\begin{proof}
Since $x_0\in D$, there is some $\lambda_0\in\R_+^2\setminus\{0\}$ such that, in matrix notation, $(v_1,v_2)\lambda_0=x_0$. In particular, thus, $x_0\neq 0$ by linear independence of $v_1,v_2$. First suppose $\lambda_0\in\inte\R_+^2=\R_{++}^2$.

The proof strategy is to identify a convex subset $D'$ of $D\cap B_{\epsilon}(x_0)$ and two nonempty open sets $U_1,U_2\subseteq B_{\epsilon/2}(u_0)$ such that for every $u_1\in U_1$ we have $u_1\cdot x=0$ for \emph{some} $x\in D'$, and for every $u_2\in U_2$ we have $u_2\cdot x\leq 0$ for \emph{all} $x\in D'$. Then, for every pair $(u_1,u_2)\in U_1\times U_2$, by $u_1\in U_1$ there is some $x\in D'$ such that $u_1\cdot x=0$, and in particular for this $x$ also $u_2\cdot x\leq 0$ by $u_2\in U_2$, so together indeed $u_1\cdot x=0\geq u_2\cdot x$. (In the set notation, we exploit the fact that $\bigcup_{x\in D'}H_{x}\times\bigcap_{x\in D'}H_{x}^{-}\subseteq\bigcup_{x\in D'}H_{x}\times H_{x}^{-}$ for any subset $D'$ of $D$.) Hence, we can choose $U=U_1\times U_2$, because it is a subset of $B_{\epsilon}(u_0,u_0)$ due to $U_1,U_2\subseteq B_{\epsilon/2}(u_0)$.

We are now going to construct a set $D'$ such that $u_0\cdot x\leq 0$ for all $x\in D'$, and then we will verify the actually needed properties. Therefore, note that $\lambda_0\in\R_{++}^2$ and $u_0\cdot x_0=(u_0\cdot v_1,u_0\cdot v_2)\lambda_0=0$ together imply that either $u_0\cdot v_1<0<u_0\cdot v_2$, or $u_0\cdot v_2<0<u_0\cdot v_1$, or $u_0\cdot v_1=u_0\cdot v_2=0$. In the last case, in fact $u_0\cdot x\leq 0$ for all $x\in D$, so let $D'=\co\{x_0,x'\}$ for an arbitrary other vector $x'\neq x_0$ from $D\cap B_{\epsilon}(x_0)$. In the first case, $u_0\cdot x\leq 0$ for all $x\in\co\{v_1,x_0\}$, so let $D'=\co\{x_0,x'\}$ for an arbitrary $x'\neq x_0$ from $\co\{v_1,x_0\}\cap B_{\epsilon}(x_0)$, noting that such an $x'$ exists due to $x_0\neq v_1$. In the second case let analogously $D'=\co\{x_0,x'\}$ for an arbitrary $x'\neq x_0$ from $\co\{v_2,x_0\}\cap B_{\epsilon}(x_0)$. In any case we now have a set $D'\subseteq D\cap B_{\epsilon}(x_0)$ such that $u_0\cdot x\leq 0$ for all $x\in D'$, which we can write as $u_0\in\bigcap_{x\in D'}H_{x}^{-}$. Further, in any case $D'=\co\{x_0,x'\}$ for some $x'\in D$ that is distinct from $x_0$, so that these two vectors inherit linear independence from $v_1$ and $v_2$.

Consider the property $u_0\in\bigcap_{x\in D'}H_{x}^{-}$. By Lemma~\ref{lem:halfspaceshyperplanes}, the latter intersection is convex and has nonempty interior. Thus, also the open set $B_{\epsilon/2}(u_0)\cap\inte\bigcap_{x\in D'}H_{x}^{-}$ is nonempty by Lemma 5.28 (1.) in \citet{AliprantisBorder06}, so we can use it as our set $U_2$, because for every $u_2\in U_2$ then $u_2\cdot x\leq 0$ holds by construction for all $x\in D'$ (i.e., $U_2\subseteq\bigcap_{x\in D'}H_{x}^{-}$).

By $x_0\in D'$ and $u_0\cdot x_0=0$, we further have $u_0\in\bigcup_{x\in D'}H_{x}$. By Lemma~\ref{lem:halfspaceshyperplanes}, and since $D'=\co\{x_0,x'\}$ for two linearly independent vectors $x_0$ and $x'$, $\bigcup_{x\in D'}H_{x}=(H_{x_0}^{-}\cap H_{x'}^{+})\cup(H_{x_0}^{+}\cap H_{x'}^{-})$. Each of the two latter intersections, again by Lemma~\ref{lem:halfspaceshyperplanes}, is convex and has nonempty interior. Choose one of them so that it contains $u_0$. Then the interior of this convex set has a nonempty intersection with $B_{\epsilon/2}(u_0)$ by Lemma 5.28 (1.) in \citet{AliprantisBorder06}. Therefore, we can use the latter intersection as our set $U_1$, because for every $u_1\in U_1$ then $u_1\cdot x=0$ by construction for some $x\in D'$ (i.e., $U_1\subseteq\bigcup_{x\in D'}H_{x}$). Letting $U=U_1\times U_2$ as argued in the beginning completes the proof for $\lambda_0\in\R_{++}^2$.

Now suppose $\lambda_0\not\in\R_{++}^2$. Consider any $\delta>0$ and let $\hat\lambda=\lambda_0+\delta\one$, which is in $\R_{++}^2$ (in particular $\hat\lambda\neq 0$). Then let $\hat x=\frac{1}{\one\cdot\hat\lambda}(v_1,v_2)\hat\lambda$, which is in $D$. We can induce $\hat x$ to be arbitrarily close to $x_0$ by starting with sufficiently small $\delta$, because $(v_1,v_2)\lambda_0=x_0$ and $\one\cdot\lambda_0=1$. Next, let $\hat u$ be the orthogonal projection of $u_0$ onto the hyperplane $H_{\hat x}$ (so that $\hat u\cdot\hat x=0$), noting that $\hat x\in D$ implies $\hat x\neq 0$. To make $\norm{\hat u-u_0}$ arbitrarily small, we only need to make $\norm{\hat x-x_0}$ small enough---which we can do through $\delta$ (cf.\ footnote~\ref{fn:projection}). Let $\delta$ in fact be sufficiently small so that $\hat u\in B_{\epsilon/2}(u_0)$ and $\hat x\in B_{\epsilon}(x_0)$. Then there exists also a sufficiently small $\epsilon'>0$ so that $B_{\epsilon'}(\hat u)\subseteq B_{\epsilon/2}(u_0)$ and $B_{\epsilon'}(\hat x)\subseteq B_{\epsilon}(x_0)$. Fix such an $\epsilon'$. Since $\hat\lambda\in\R_{++}^2$, we can apply the already proved results for $\lambda_0\in\R_{++}^2$ to $\hat u$, $\hat x$, and $\epsilon'$ in place of $u$, $x_0$, and $\epsilon$, respectively. Thus, there exists a nonempty, open set $U\subseteq\R^n\times\R^n$ such that $U\subseteq B_{\epsilon'}(\hat u,\hat u)$ and $U\subseteq\bigcup_{x\in D\cap B_{\epsilon'}(\hat x)}H_{x}\times H_{x}^{-}$. We can actually use the same set $U$ for $u$, $x_0$, and $\epsilon$, because $B_{\epsilon'}(\hat x)\subseteq B_{\epsilon}(x_0)$, and also $B_{\epsilon'}(\hat u,\hat u)\subseteq B_{\epsilon}(u_0,u_0)$ by $B_{\epsilon'}(\hat u)\subseteq B_{\epsilon/2}(u_0)$.
\end{proof}


\begin{proposition}\label{prop:opensettwoindiff}
Let $u_0\in\R^n$, $D=\co\{v_1,v_2,v_3\}$ for three linearly independent vectors $v_1,v_2,v_3\in\R^n$, and $x_0\in D$ such that $u_0\cdot x_0=0$. Then, for any given $\epsilon>0$, there exists a nonempty, open set $U\subseteq\R^n\times\R^n$ such that $U\subseteq B_{\epsilon}(u_0,u_0)$ and for all $(u_1,u_2)\in U$ there is some $x\in D\cap B_{\epsilon}(x_0)$ such that $u_1\cdot x=u_2\cdot x=0$ (i.e., $U\subseteq\bigcup_{x\in D\cap B_{\epsilon}(x_0)}H_{x}\times H_{x}$).
\end{proposition}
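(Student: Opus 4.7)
The plan is to mirror the proof of Proposition~\ref{prop:opensetoneindiffonepref}: first handle the case where $x_0$ lies in the relative interior of $D$ (i.e., $x_0=\sum_i\lambda_{0,i}v_i$ with all coefficients $\lambda_{0,i}$ strictly positive), then reduce the general case by perturbing $x_0$ into the relative interior and orthogonally projecting $u_0$ onto the resulting hyperplane through the origin, exactly as in the closing step of that proposition.

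Assume first $\lambda_{0,i}>0$ for all $i$. Let $V:=\operatorname{span}\{v_1-v_3,v_2-v_3\}$, the two-dimensional direction space of $\operatorname{aff}(D)$; for $v\in V$ sufficiently small, $x_0+v\in D$. Writing $u_i=u_0+\eta_i$, parametrizing $x=x_0+v$, and using $u_0\cdot x_0=0$, the equations $u_i\cdot(x_0+v)=0$ become a $2\times 2$ linear system $M(\eta_1,\eta_2)\,v=b(\eta_1,\eta_2)$ in $v\in V$, whose $i$-th row is the restriction $(u_0+\eta_i)|_V\in V^*$ and whose right-hand side is $b_i=-\eta_i\cdot x_0$. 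Assuming also $u_0|_V\neq 0$, I would fix $\eta_1^*=0$ and pick a small $\eta_2^*$ lying \emph{inside} $x_0^\perp$ with $\eta_2^*|_V$ not a scalar multiple of $u_0|_V$; such a choice is available because $x_0\notin V$ (the affine hull of $D$ avoids the origin by linear independence of $v_1,v_2,v_3$), which makes the restriction $x_0^\perp\to V^*$ surjective, so the preimage of $\R\cdot u_0|_V$ is a proper cone whose complement in $x_0^\perp$ is open and dense. At the center $(0,\eta_2^*)$ one has $b=0$ and $\det M=u_0|_V\wedge\eta_2^*|_V\neq 0$; on a sufficiently small open neighborhood $N\subseteq\R^{2n}$ of this center, $\det M$ remains bounded away from zero while $\|b\|$ stays of order $\operatorname{diam}(N)$, so the unique solution $v(\eta_1,\eta_2)$ is correspondingly small, giving $x_0+v\in D\cap B_\epsilon(x_0)$. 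Then $U:=\{(u_0+\eta_1,u_0+\eta_2):(\eta_1,\eta_2)\in N\}$ is the sought open set and sits inside $B_\epsilon(u_0,u_0)$ once $\|\eta_2^*\|$ and the radius of $N$ are controlled.

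If instead $u_0|_V=0$, pre-perturb $u_0$: since $x_0\neq 0$, the subspace $V\cap x_0^\perp$ has dimension at least $2+(n-1)-n=1$ and contains some unit vector $w$; then $\hat u:=u_0+\tau w$ for small $\tau>0$ satisfies $\hat u\cdot x_0=0$ and $\hat u|_V=\tau w\neq 0$, so the previous subcase applies to $\hat u$ with a reduced tolerance $\epsilon'$ arranged so that the resulting neighborhood still lies in $B_\epsilon(u_0,u_0)$. Combining this with the closing trick of Proposition~\ref{prop:opensetoneindiffonepref}---perturb $\lambda_0$ to $\hat\lambda$ with strictly positive entries, let $\hat x$ be the associated point in the relative interior of $D$, and orthogonally project $u_0$ onto $\hat x^\perp$---handles the general $\lambda_0$. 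The main obstacle is the balance required in the core subcase: invertibility of $M$ pushes $\eta_2^*|_V$ away from being parallel to $u_0|_V$, while smallness of $v=M^{-1}b$ forces $b\approx 0$, achieved by constraining $\eta_2^*$ to $x_0^\perp$; these two constraints together carve out just enough freedom for an open neighborhood to exist.
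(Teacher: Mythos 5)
Your argument is correct, and it is worth noting where it coincides with and where it departs from the paper's proof. The outer reduction you invoke at the end (perturb $\lambda_0$ to $\lambda_0+\delta\one$, pass to the associated normalized point $\hat x$ in the relative interior of $D$, orthogonally project $u_0$ onto $\hat x^\perp$, and shrink the tolerance) is exactly the device the paper also uses. Your treatment of the core case, however, is genuinely different. The paper stays in barycentric coordinates $\lambda\in\R^3$, manufactures a nearby pair $(\hat u_1,\hat u_2)$ by lifting two perturbed targets $y_i=\hat y+\delta' b_i$ through the surjective map $u\mapsto(u\cdot v_1,u\cdot v_2,u\cdot v_3)^\top$, and then applies the implicit function theorem to the $3\times 3$ system formed by the two orthogonality equations together with the normalization $\one\cdot\lambda=\one\cdot\hat\lambda$. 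You instead parametrize $x=x_0+v$ with $v$ in the two-dimensional direction space $V$ of the affine hull of $D$, which makes the problem a genuinely linear $2\times 2$ system $M(\eta_1,\eta_2)v=b(\eta_1,\eta_2)$; choosing the center $(0,\eta_2^*)$ with $\eta_2^*\in x_0^\perp$ and $\eta_2^*|_V\notin\R u_0|_V$ (available since $x_0\notin V$, so $\eta\mapsto\eta|_V$ maps $x_0^\perp$ onto all linear functionals on $V$) gives an invertible $M$ and $b=0$ there, so continuity of $M^{-1}b$ replaces the implicit function theorem, and only the second utility vector needs to be displaced away from $u_0$. The price is the extra subcase $u_0|_V=0$, which you handle correctly by pre-perturbing $u_0$ inside $V\cap x_0^\perp$ (nonempty by the dimension count $\dim V+\dim x_0^\perp>n$), whereas the paper's $b_1,b_2$ construction absorbs that degeneracy automatically. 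Both routes give the same conclusion; yours is more elementary (linear algebra plus continuity of the solution of a nonsingular linear system), the paper's is more uniform across the degenerate configurations.
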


\begin{proof}
Note that the condition $u_1\cdot x=u_2\cdot x=0$ holds for some $x\in D$ if and only if it holds for some $x=\sum_{i=1}^3\lambda_iv_i$ such that $\lambda=(\lambda_1,\lambda_2,\lambda_3)^\top\in\R_+^3\setminus\{0\}$. Hence, if we first set aside the additional requirement that $x\in B_{\epsilon}(x_0)$, we are looking for pairs $(u_1,u_2)\in\R^n\times\R^n$ such that the homogeneous system of linear equations
\begin{equation}\label{kerUX}
\begin{pmatrix} u_1\cdot v_1 & u_1\cdot v_2 & u_1\cdot v_3 \\ u_2\cdot v_1 & u_2\cdot v_2 & u_2\cdot v_3 \end{pmatrix}\lambda=\begin{pmatrix} 0 \\ 0 \end{pmatrix},\quad\lambda\in\R^3,
\end{equation}
has a solution in $\R_+^3\setminus\{0\}$. The latter is the case if $u_1=u_2=u_0$, since $u_0\cdot x_0=0$ and $x_0\in\cD$; then a suitable solution is the unique $\lambda_0\in\R^3$ such that, in matrix notation, $(v_1,v_2,v_3)\lambda_0=x_0$. Because we ultimately want to use the implicit function theorem to obtain the open set $U$, we are going to construct another, but near, starting pair $(\hat u_1,\hat u_2)$, which is such that the two rows of the matrix in \eqref{kerUX} are linearly independent and there is a solution $\hat\lambda$ in the \emph{interior} of the positive orthant. We indeed have $\hat\lambda\in\inte\R_+^3=\R_{++}^3$ if $\hat\lambda=\lambda_0+\delta\one$ for some $\delta>0$. Fix an arbitrary such $\hat\lambda$ (so in particular $\hat\lambda\neq 0$), let $y_0=(u_0\cdot v_1,u_0\cdot v_2,u_0\cdot v_3)^\top$, which lies in $H_{\lambda_0}\subseteq\R^3$, and let $\hat y$ be the projection of $y_0$ onto the hyperplane $H_{\hat\lambda}$. To have $\norm{\hat y-y_0}$ arbitrarily small, we only need $\norm{\hat\lambda-\lambda_0}$ to be small enough---which we can achieve by starting with sufficiently small $\delta$.\footnote
{Specifically, $\hat y=y_0+\mu\hat\lambda$ for $\mu=-y_0\cdot\hat\lambda/\hat\lambda^2$, so $\norm{\hat y-y_0}^2=(y_0\cdot\hat\lambda)^2/\hat\lambda^2$. The latter vanishes as $\hat\lambda\to\lambda_0$, because the numerator tends to $(y_0\cdot\lambda_0)^2=0$ and the denominator to $\lambda_0^2$, which is strictly positive since $\lambda_0\neq 0$.\label{fn:projection}}
Next, consider any two linearly independent vectors $b_1,b_2\in H_{\hat\lambda}$ and any $\delta'>0$, and let $y_i=\hat y+\delta'b_i$ for both $i=1,2$. Then $y_1,y_2\in H_{\hat\lambda}$ by construction, and $y_1,y_2$ are also linearly independent whenever $\delta'$ is small enough.\footnote
{If $y_1$ and $y_2$ are linearly \emph{dependent}, then one of them, say w.l.o.g. $y_2$, is a scalar multiple of the other, i.e., $y_2=\mu y_1$ for some $\mu\in\R$. This requires that $(1-\mu)\hat y=\delta'(\mu b_1-b_2)$, which is a nontrivial linear combination of $b_1,b_2$ (due to $\delta'>0$) and thus not null, implying that also $1-\mu\neq 0$. Hence, $\hat y$ must be a linear combination of $b_1,b_2$, with some coefficients $\mu_1,\mu_2\in\R$ that are uniquely determined by linear independence, and which by the previous equation must satisfy $\mu_1=\delta'\mu/(1-\mu)$ and $\mu_2=-\delta'/(1-\mu)$. The two latter equations yield $\mu(\delta'+\mu_1)=\mu_1$ and $\mu\mu_2=\delta'+\mu_2$, so $(\delta'+\mu_1)(\delta'+\mu_2)=\mu_1\mu_2$, which holds for $\delta'=0$ and hence for at most one $\delta'>0$.}
By linear independence of $v_1,v_2,v_3$, the linear function $f$ that maps every $u\in\R^n$ to $(u\cdot v_1,u\cdot v_2,u\cdot v_3)^\top=(v_1,v_2,v_3)^\top u\in\R^3$ is surjective. Hence, there are two vectors $\hat u_1,\hat u_2\in\R^n$ such that $f(\hat u_i)=y_i$ for both $i=1,2$, so that the starting pair $(\hat u_1,\hat u_2)$ has the desired properties by construction. Moreover, we may assume it to be arbitrarily close to $(u_0,u_0)$ by choosing sufficiently small $\delta$ and $\delta'$. Indeed, $f$ is an open mapping by Theorem 5.18 in \citet{AliprantisBorder06}, and hence the inverse correspondence $f^{-1}$ is lower hemicontinuous by Theorem~17.7 (ibid.), so if $y_i$ is close enough to $y_0$ (which we can achieve via the triangle inequality by making $\delta$ and $\delta'$ small), there is some $\hat u_i\in f^{-1}(y_i)$ as close as desired to $u_0\in f^{-1}(y_0)$.

In summary, for any such pair $(\hat u_1,\hat u_2)$, the matrix in \eqref{kerUX} equals $(y_1,y_2)^\top$ for some linearly independent vectors $y_1,y_2\in\R^3$, and there is a corresponding solution $\hat\lambda\in\inte\R_+^3$, so that actually $y_1,y_2\in H_{\hat\lambda}$. Since $\one\cdot\hat\lambda>0$, the vector $\one\in\R^3$ is outside $H_{\hat\lambda}$, and thus the square matrix $(y_1,y_2,\one)^\top$ has full rank. Moreover, $\hat\lambda$ satisfies $(y_1,y_2,\one)^\top\hat\lambda=(0,0,\one\cdot\hat\lambda)^\top$ by construction.

Fixing $\hat u_1$, $\hat u_2$, and $\hat\lambda$, we are now in the position to apply the implicit function theorem. Let $F$ be the (continuously differentiable) mapping that assigns the value
\begin{equation*}
F(u_1,u_2,\lambda)=\begin{pmatrix} u_1\cdot v_1 & u_1\cdot v_2 & u_1\cdot v_3 \\ u_2\cdot v_1 & u_2\cdot v_2 & u_2\cdot v_3 \\ 1 & 1 & 1 \end{pmatrix}\lambda-\begin{pmatrix} 0 \\ 0 \\ \one\cdot\hat\lambda \end{pmatrix}\in\R^3
\end{equation*}
to every triple $(u_1,u_2,\lambda)\in\R^n\times\R^n\times\R^3$. Then $F(\hat u_1,\hat u_2,\hat\lambda)=0$, and the Jacobian of $F$ with respect to $\lambda$ and evaluated at $(u_1,u_2,\lambda)=(\hat u_1,\hat u_2,\hat\lambda)$ is $(y_1,y_2,\one)^\top$, which has full rank. Hence, there exists a neighborhood $\hat U\subseteq\R^n\times\R^n$ of $(\hat u_1,\hat u_2)$ and a continuously differentiable function $g\colon\hat U\to\R^3$ such that $g(\hat u_1,\hat u_2)=\hat\lambda$ and $F(u_1,u_2,g(u_1,u_2))=0$ for all $(u_1,u_2)\in\hat U$. Since $\hat\lambda\in\inte\R_+^3$ and $g$ is continuous, there is another neighborhood $U\subseteq\hat U$ of $(\hat u_1,\hat u_2)$ such that $g(U)\subseteq\inte\R_+^3$. Therefore, for every pair $(u_1,u_2)\in U$, there is some $\lambda=g(u_1,u_2)\in\inte\R_+^3$ such that $F(u_1,u_2,\lambda)=0$, so that in particular \eqref{kerUX} holds. Forcing $\hat u_1$ and $\hat u_2$ to be sufficiently close to $u_0$ (through small $\delta$ and $\delta'$), and keeping the radius of $U$ small enough (but still positive), we can further ensure that $U\subseteq B_{\epsilon}(u_0,u_0)$.

The last requirement to fulfill is that $x=(v_1,v_2,v_3)\lambda=(v_1,v_2,v_3)g(u_1,u_2)$ stays in $B_{\epsilon}(x_0)$ for all $(u_1,u_2)\in U$. By $x_0=(v_1,v_2,v_3)\lambda_0$, it is enough to keep $\lambda$ close to $\lambda_0$, or $\lambda$ close to $\hat\lambda$ and $\hat\lambda$ close to $\lambda_0$. We can indeed make $\norm{\hat\lambda-\lambda_0}=\norm{\delta\one}$ arbitrarily small through $\delta$, whereas $\norm{\lambda-\hat\lambda}=\norm{g(u_1,u_2)-g(\hat u_1,\hat u_2)}$ can be kept arbitrarily small for all $(u_1,u_2)\in U$ by a sufficiently small radius of $U$, because $g$ is continuous.
\end{proof}


To state Proposition~\ref{prop:decomp}, let $\mu,\mu'\in\Delta\Theta$ now be two arbitrary beliefs and consider the set
\begin{equation*}
\cR:=\setd{(r,r')\in(\Delta A)^2}{\text{$\supp{r}\subseteq A(\mu)$ and $\supp{r'}\subseteq A(\mu')$}}
\end{equation*}
and the (onto) mapping from $\cR$ to $\cD$ that maps $(r,r')$ to $x=r-r'$. We need an inverse that is continuous in an arbitrary given point $x_0=r_0-r_0'$.

\begin{proposition}\label{prop:decomp}
Let $x_0=r_0-r_0'\in\cD$. For any neighborhood $R$ of $(r_0,r_0')$, there is a neighborhood $X$ of $x_0$ such that every $x\in X\cap\cD$ has a representation $x=r-r'$ with $(r,r')\in R\cap\cR$.
\end{proposition}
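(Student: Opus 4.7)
The statement expresses a form of lower hemicontinuity of the fiber correspondence $\Phi\colon\cD\to 2^{\cR}$ defined by $\Phi(x)=\setd{(r,r')\in\cR}{r-r'=x}$, evaluated at the base point $(r_0,r_0')\in\Phi(x_0)$. My plan is to exploit the fact that the graph of $\Phi$ inside $\cD\times\cR$ is polyhedral and that parametric polyhedral solution sets are Hausdorff--Lipschitz in the parameter, so that any chosen element of $\Phi(x_0)$ can be approximated by elements of $\Phi(x)$ as $x$ tends to $x_0$ within $\cD$.

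First, I would identify $\cR$ as a polytope in $\R^{\abs{A}}\times\R^{\abs{A}}$ cut out by the linear constraints $r(a),r'(a)\geq 0$ for every $a\in A$, $r(a)=0$ for $a\notin A(\mu)$, $r'(a)=0$ for $a\notin A(\mu')$, together with $\one\cdot r=\one\cdot r'=1$. Then $\Phi(x)$ is the intersection of $\cR$ with the affine flat $\setd{(r,r')}{r-r'=x}$, and the graph of $\Phi$ inside $\cD\times\cR$ is itself a polytope carved out by finitely many linear (in)equalities whose right-hand sides depend linearly on $x$. In particular $\Phi$ has nonempty, convex, compact values throughout $\cD$.

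Next I would invoke Hoffman's lemma on Lipschitz stability of parametric polyhedral solution sets: for a system of the form $Ay=b(x)$, $Cy\leq d(x)$ with $b,d$ affine in $x$, there is a constant $L>0$ so that whenever the feasible sets are nonempty, they are at Hausdorff distance at most $L\norm{x-x'}$. Applied to our setup, this yields in particular that for every $x\in\cD$ there is some $(r,r')\in\Phi(x)$ with $\norm{(r,r')-(r_0,r_0')}\leq L\norm{x-x_0}$. Choosing the neighborhood $X$ of $x_0$ to be a sufficiently small ball (relative to the radius of a ball around $(r_0,r_0')$ contained in $R$) then furnishes the required $(r,r')\in R\cap\cR$ satisfying $r-r'=x$.

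The principal obstacle is avoiding the use of Hoffman's lemma as a black box. A self-contained alternative is an explicit perturbation argument: write $y=x-x_0$ and, using that both $x$ and $x_0$ lie in $\cD$, split $y$ into a component supported on $A(\mu)\setminus A(\mu')$ (where the $\cD$-representation forces the change to be absorbed entirely into $r$, leaving $r'$ untouched), a symmetric component on $A(\mu')\setminus A(\mu)$, and a free component on $A(\mu)\cap A(\mu')$ to be split continuously between $r$ and $r'$, say proportionally to the masses that $(r_0,r_0')$ assigns to the common support. Nonnegativity on the forced coordinates is automatic from $x\in\cD$ (which constrains the sign of $y(a)$ on those coordinates), and on the free intersection coordinates it is preserved for $y$ sufficiently small; the delicate bookkeeping occurs when $r_0(a)$ or $r_0'(a)$ vanishes at some $a\in A(\mu)\cap A(\mu')$, where one must verify that the chosen split does not push any coordinate below zero.
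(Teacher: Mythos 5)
Your proposal is correct, but it takes a genuinely different route from the paper. You treat the fiber map $\Phi(x)=\set{(r,r')\in\cR}{r-r'=x}$ as a polyhedral solution set with fixed constraint matrix and right-hand side affine in $x$ (only the equality $r-r'=x$ varies), note that $\Phi(x)\neq\emptyset$ for every $x\in\cD$ by the very definition of $\cD$, and then invoke Hoffman's error bound (equivalently, Walkup--Wets Lipschitz continuity of polyhedral multifunctions) to get a point of $\Phi(x)$ within $L\norm{x-x_0}$ of $(r_0,r_0')$; choosing $X$ small relative to a ball inside $R$ finishes the argument, and this even yields a Lipschitz rate rather than mere lower hemicontinuity. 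The paper instead proceeds elementarily: it first proves a structural lemma describing the fiber exactly, namely $(r,r')\in\cR_x$ iff $(r,r')=(x^++d,\,x^-+d)$ with $d\geq 0$ supported on $A(\mu)\cap A(\mu')$ and $\vsum d=1-\vsum x^+$, and then perturbs $d_0$ (the residual of $(r_0,r_0')$) by adjusting a single coordinate in the common support, treating $A(\mu)\cap A(\mu')=\emptyset$ and $d_0=0$ as separate easy cases. Your black-box route is shorter and more quantitative; the paper's is self-contained and makes the fiber geometry explicit. Note that your fallback ``proportional split'' sketch is essentially the paper's construction in disguise, and the delicate bookkeeping you flag (vanishing mass on the common support) is exactly what the paper's lemma resolves; a clean way to complete your sketch is to set $d=d_0\,\bigl(1-\vsum x^+\bigr)/\bigl(1-\vsum x_0^+\bigr)$ when $d_0\neq 0$, and to observe that when $d_0=0$ the required residual mass $1-\vsum x^+$ is itself small, so any admissible placement of it stays near $(r_0,r_0')$.
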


The proof of Proposition~\ref{prop:decomp} will use the following lemma, which characterizes the whole preimage $\cR_x:=\set{(r,r')\in\cR}{r-r'=x}$ of $x\in\cD$. For any vector $x=(x_1,\dots,x_n)^\top\in\R^n$, let $x^+$ denote $(\max(x_1,0),\dots,\max(x_n,0))^\top\in\R_+^n$, and let $x^-$ denote $(-x)^+$ (so that $x=x^+-x^-$).

\begin{lemma}\label{lem:decomp}
Let $x\in\cD$. Then $(r,r')\in\cR_x$ if and only if $(r,r')=(x^++d,x^-+d)$ for some $d=(d_1,\dots,d_n)^\top\in\R_+^n$ such that $\vsum d=1-\vsum x^+$ and $d_i=0$ whenever $a_i\not\in A(\mu)\cap A(\mu')$.
\end{lemma}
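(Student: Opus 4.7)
The plan is to prove the two directions separately, with the canonical choice of $d$ being $d_i=\min(r_i,r'_i)$. Both directions are essentially bookkeeping; the content of the lemma is just a componentwise Jordan decomposition of $x=r-r'$ combined with the support restrictions inherited from $\cD$.

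For the sufficiency (``if'') direction, assume we are given $d\in\R_+^n$ satisfying the two conditions, and define $r:=x^++d$ and $r':=x^-+d$. Clearly $r,r'\geq 0$. For $r\in\Delta A$ I would use $\one\cdot r=\one\cdot x^++\one\cdot d=1$ by hypothesis, and for $r'\in\Delta A$ I would note that $x\in\cD$ forces $\one\cdot x=0$ (so $\one\cdot x^+=\one\cdot x^-$), hence $\one\cdot r'=\one\cdot x^-+(1-\one\cdot x^+)=1$. For $\supp r\subseteq A(\mu)$, observe that $r_i>0$ requires either $x_i>0$ or $d_i>0$; in the first case $x\in\cD$ implies $a_i\in A(\mu)$ (since for any representation $x=\tilde r-\tilde r'$ in $\cD$, $x_i>0$ forces $\tilde r_i>0$), and in the second case $a_i\in A(\mu)\cap A(\mu')\subseteq A(\mu)$ by the assumption on $d$. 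The analogous argument, using $x_i<0\Rightarrow a_i\in A(\mu')$, handles $\supp r'\subseteq A(\mu')$. Finally $r-r'=x^+-x^-=x$, so $(r,r')\in\cR_x$.

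For the necessity (``only if'') direction, take any $(r,r')\in\cR_x$ and set $d_i:=\min(r_i,r'_i)$ componentwise. A direct case split on the sign of $x_i=r_i-r'_i$ shows $r_i=x_i^++d_i$ and $r'_i=x_i^-+d_i$: if $x_i>0$ then $d_i=r'_i$ and $x_i^++d_i=x_i+r'_i=r_i$; if $x_i<0$ then $d_i=r_i$ and $x_i^-+d_i=-x_i+r_i=r'_i$; if $x_i=0$ then $d_i=r_i=r'_i$ and both identities are immediate. Nonnegativity of $d$ is built in, and $\one\cdot d=\one\cdot r-\one\cdot x^+=1-\one\cdot x^+$. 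To verify the support condition on $d$, note that $d_i>0$ means $r_i>0$ and $r'_i>0$ simultaneously, which by $\supp r\subseteq A(\mu)$ and $\supp r'\subseteq A(\mu')$ places $a_i$ in $A(\mu)\cap A(\mu')$.

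There is no substantial obstacle to this lemma; the only point requiring any care is remembering that membership of $x$ in $\cD$ (rather than just $r,r'\in\Delta A$) is what lets the sufficiency direction recover $a_i\in A(\mu)$ from $x_i>0$ alone, without needing $d_i>0$. Keeping the three cases $x_i>0$, $x_i<0$, $x_i=0$ cleanly separated and invoking $\one\cdot x=0$ at the right moment for the normalization of $r'$ are the only things one must be deliberate about.
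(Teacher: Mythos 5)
Your proof is correct and follows essentially the same route as the paper's: the paper defines $d=r-x^+$ in the necessity direction, which coincides componentwise with your $d_i=\min(r_i,r'_i)$, and both arguments verify the normalization via $\vsum x^+=\vsum x^-$ (from $\vsum x=0$) and the support conditions by the same sign considerations. No gaps.
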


\begin{proof}
``$\Rightarrow$'': Suppose $(r,r')\in\cR_x$ and let $d=r-x^+$. Then $r=x^++d$ and $r'=r-x=r-(x^+-x^-)=d+x^-$. Further, since $r\geq 0$ and $r=x+r'$, where also $r'\geq 0$, we have $r\geq x^+$, which implies $d\in\R_+^n$, and $\vsum d=1-\vsum x^+$ by $\vsum r=1$. Now let $d_i$ be the $i$th coordinate of $d$. If $a_i\not\in A(\mu)$, then $r(a_i)=0$, which implies $d_i=0$, because $r=x^++d$ and both $x^+$ and $d$ are nonnegative. Likewise, if $a_i\not\in A(\mu')$, then $r'(a_i)=0$, and hence $d_i=0$ by $r'=x^-+d$ and nonnegativity of $x^-$ and $d$. Thus, $d$ is as claimed.

``$\Leftarrow$'': Suppose $r=x^++d$ and $r'=x^-+d$ for some $d$ as stated in the lemma. Then both $r$ and $r'$ are in $\R_+^n$, and $r-r'=x$. Further, $\vsum r=\vsum x^++\vsum d=1$ and, since $x\in\cD$ implies $\vsum x=0$ and thus $\vsum x^-=\vsum x^+$, also $\vsum r'=\vsum x^-+\vsum d=1$. Hence, $r,r'\in\Delta A$. Now let $x_i$ be the $i$th coordinate of $x$ and suppose $a_i\not\in A(\mu)$. Then $x\in\cD$ implies $x_i\leq 0$, and $d_i=0$ by hypothesis, so $r(a_i)=0$. Likewise, if $a_i\not\in A(\mu')$, then $r'(a_i)=0$ by $x_i\geq 0$ and $d_i=0$. It follows that $(r,r')\in\cR_x$.
\end{proof}

We remark that the decomposition $x=r-r'$ is in fact unique if $x$ has \emph{any} decomposition with $\supp{r}\cap\supp{r'}=\emptyset$, because then $\vsum x^+=1$, or if $A(\mu)\cap A(\mu')$ has at most one element. 

\begin{proof}[{\bfseries Proof of Proposition~\ref{prop:decomp}}]
Consider any $x\in\cD$. By Lemma~\ref{lem:decomp}, there is a decomposition $x=r-r'$ with $(r,r')\in\cR$ if and only if $(r,r')=(x^++d,x^-+d)$ for some $d$ as in Lemma~\ref{lem:decomp}. Suppose this is the case. Analogously, since $x_0=r_0-r_0'\in\cD$, $(r_0,r_0')=(x_0^++d_0,x_0^-+d_0)$ for $d_0=r_0-x_0^+=r_0'-x_0^-$. Thus, $r-r_0=x^++d-x_0^+-d_0$ and $r'-r_0'=x^-+d-x_0^--d_0$, which implies $\norm{r-r_0}\leq\norm{x^+-x_0^+}+\norm{d-d_0}$ and $\norm{r'-r_0'}\leq\norm{x^--x_0^-}+\norm{d-d_0}$. Therefore, it is enough to show that for every (small enough) $\delta>0$ and every $x\in\cD$ with $\norm{x-x_0}<\delta$ there is some $d$ as in Lemma~\ref{lem:decomp} such that $\norm{d-d_0}$ vanishes as $\delta\to 0$.

First suppose $A(\mu)\cap A(\mu')=\emptyset$. For any $x\in\cD$, there is by definition some $(r,r')\in\cR_x$, and then, by Lemma~\ref{lem:decomp}, $(r,r')=(x^++d,x^-+d)$, where now necessarily $d=0$. Analogously, $d_0=0$, so in particular $\norm{d-d_0}=0<\delta$.

Now suppose there is some $a_i\in A(\mu)\cap A(\mu')$. If $d_0\neq 0$, assume w.l.o.g.\ (by Lemma~\ref{lem:decomp}) that $a_i$ is such that the corresponding $i$th coordinate of $d_0$ is positive. Let $d=d_0+\vsum(x_0^+-x^+)e_i$, where $e_i$ denotes the $i$th unit vector. Then $d$ differs from $d_0$ only in the $i$th coordinate. Thus, if $d_0\neq 0$, the choice of $a_i$ implies $d\geq 0$ for all $\delta$ small enough. If $d_0=0$, Lemma~\ref{lem:decomp} implies $\vsum x_0^+=1-\vsum d_0=1$ and, since also $x$ has a corresponding representation with \emph{some} $d\geq 0$, $\vsum x^+\leq 1$, so that also the constructed $d$ satisfies $d\geq d_0=0$. Further, by construction $\vsum d=\vsum d_0+\vsum(x_0^+-x^+)$, so as required $\vsum d=1-\vsum x^+$ by $\vsum d_0=1-\vsum x_0^+$. Now consider any $a_j\not\in A(\mu)\cap A(\mu')$. Then $a_j\neq a_i$, so the $j$th coordinate of $d$ agrees with the $j$th coordinate of $d_0$, and the latter is zero by Lemma~\ref{lem:decomp}. Therefore, $d$ satisfies all properties in Lemma~\ref{lem:decomp} (if $\delta$ is small enough). Finally, $\norm{d-d_0}=\lvert{\vsum(x_0^+-x^+)}\rvert$, which indeed vanishes as $\delta\to 0$.
\end{proof}

\subsection{Proof of Proposition~\ref{prop:srobust}}\label{app:srobust}

The Quasiconcavification Theorem of \citet{lipnowski2020cheap} and their Theorem~1 (Securability) together imply that there exists an equilibrium that achieves the given sender payoff $s$. Because $s\not\in\cV(\mu_0)$, Lemma~LR tells us that any equilibrium achieving $s$ must have $\abs{\supp{p}}\geq 2$. But since $\abs{\Theta}=2$, Lemma~\ref{lem:caratheodory} allows us to assume $\abs{\supp{p}}=2$. Therefore, let $\supp{p}=\{\mu,\mu'\}$ and $\mu<\mu'$. Using once more Lemma~LR, there is an equilibrium with a belief distribution of the present form and payoff $s$ if and only if in fact $\mu<\mu_0<\mu'$ and $s\in\cV(\mu)\cap\cV(\mu')$. This implies that the two sets in \eqref{extremesupport} are nonempty. Further, they respectively have a minimal and a maximal element by upper hemicontinuity of $\cV$ (cf.\ Lemma~\ref{lem:Aneighborhood}), so we may in fact assume $\mu$ and $\mu'$ are as specified in \eqref{extremesupport}.

The present equilibrium is robust if one of the conditions \ref{fullyinf}--\ref{fouractions} in Proposition~\ref{prop:binarybeliefrobust} holds. In particular, condition~\ref{fullyinf} is satisfied if $\mu=0$ and $\mu'=1$, i.e., if $s$ is in both $\cV(0)$ and $\cV(1)$. Therefore, suppose $\mu>0$ or $\mu'<1$. Specifically, assume $\mu'<1$, i.e., $s\not\in\cV(1)$, because the case in which $\mu'=1$ and $\mu>0$ is symmetric to $\mu=0$ and $\mu'<1$.

Using Lemma~\ref{lem:Aneighborhood}, consider any neighborhood $U_\mu$ of $\mu$ small enough such that $A(\tilde{\mu})\subseteq A(\mu)$ for all $\tilde{\mu}\in U_\mu$, where the inclusion must now be strict whenever $\tilde{\mu}<\mu$, because then $s\in\cV(\mu)\setminus\cV(\tilde{\mu})$ by construction of $\mu$. Thus, $\abs{A(\mu)}\geq 2$ if $\mu>0$, and analogously $\abs{A(\mu')}\geq 2$ given $\mu'<1$. If $A(\mu)\cap A(\mu')=\emptyset$, it already follows that $A(\mu)\cup A(\mu')$ has at least three elements and at least four if $\mu>0$, so that condition~\ref{threeactions} or condition~\ref{fouractions} is satisfied.

Now suppose $A(\mu)\cap A(\mu')$ is nonempty, so in fact $A(\mu)\cap A(\mu')=A(\mu_0)$ by Lemma~\ref{lem:Aintersection}. Since $s\in\cV(\mu)$ but $s>\max\cV(\mu_0)$, there must be some $a\in A(\mu)\setminus A(\mu_0)$ such that $v_S(a)\geq s$. Analogously, there is another $a'\in A(\mu')\setminus A(\mu_0)$ such that $v_S(a')\geq s$. It follows that $A(\mu)\cup A(\mu')$ has at least three elements. Thus, condition~\ref{threeactions} holds if $\mu=0$, i.e., if $s\in\cV(0)\setminus\cV(1)$, and by the mentioned symmetry this implies that condition~\ref{threeactions} holds likewise if $s\in\cV(1)\setminus\cV(0)$.

In summary, we have so far established robustness whenever $s$ is in $\cV(0)\cup\cV(1)$, i.e., if condition~\ref{a)} holds, and also whenever $A(\mu)\cap A(\mu')$ is empty. The latter is in fact equivalent to condition~\ref{b)}, because Lemmas \ref{lem:Aneighborhood} and \ref{lem:Aintersection} together imply that $A(\mu)\cap A(\mu')$ is nonempty if and only if $A(\tilde{\mu})=A(\mu_0)$ for all $\tilde{\mu}\in(\mu,\mu')$.

In all remaining cases, in which $s$ is neither in $\cV(0)$ nor in $\cV(1)$, i.e., $\mu>0$ and $\mu'<1$, and $A(\mu)\cap A(\mu')$ is nonempty, the present equilibrium is robust if and only if condition~\ref{fouractions} in Proposition~\ref{prop:binarybeliefrobust} holds, i.e., $\abs{A(\mu)\cup A(\mu')}\geq 4$. For nonempty $A(\mu)\cap A(\mu')$ we have already argued that then $A(\mu)\cap A(\mu')=A(\mu_0)$ and there is some $a\in A(\mu)\setminus A(\mu_0)$ as well as some $a'\in A(\mu')\setminus A(\mu_0)$. It follows that $\abs{A(\mu)\cup A(\mu')}\geq 4$ if and only if $\abs{A(\mu)}\geq 3$ or $\abs{A(\mu')}\geq 3$, which is condition~\ref{c)}.

At this point we have shown that the present equilibrium with $\supp{p}=\{\mu,\mu'\}$ satisfying \eqref{extremesupport} is robust if and only if one of the conditions \ref{a)}, \ref{b)}, or \ref{c)} holds. To prove that this is equivalent to the existence of \emph{any} robust equilibrium achieving the payoff $s$, it is enough to show that the latter implies robustness of the present equilibrium. We only need to consider the case in which $\mu>0$, $\mu'<1$, and $A(\mu)\cap A(\mu')\neq\emptyset$, since otherwise robustness is already established. 

So, assume the initial equilibrium is robust and repeat the beginning of the proof, noting that Lemma~\ref{lem:caratheodory} preserves robustness. Thus, we may assume that also the robust equilibrium has a belief distribution with support $\{\tilde{\mu},\tilde{\mu}'\}$, where $\tilde{\mu}<\tilde{\mu}'$. By construction, $\mu\leq\tilde{\mu}<\tilde{\mu}'\leq\mu'$. This has the following two consequences. First, since $\mu>0$, $\mu'<1$, and the initial equilibrium is robust, condition~\ref{fouractions} in Proposition~\ref{prop:binarybeliefrobust} must hold, i.e., $\abs{A(\tilde{\mu})\cup A(\tilde{\mu}')}\geq 4$. Second, since $A(\mu)\cap A(\mu')\neq\emptyset$, Lemma~\ref{lem:Aintersection} now implies that $A(\tilde{\mu})\subseteq A(\mu)$ and $A(\tilde{\mu}')\subseteq A(\mu')$. Together, it follows that also $\abs{A(\mu)\cup A(\mu')}\geq 4$.

Finally, consider the further sufficient condition, which is $\abs{\set{\tilde{a}\in\bigcup_{\tilde{\mu}\in\Delta\Theta}A(\tilde{\mu})}{v_S(\tilde{a})\leq s}}\geq 2$. To verify it, we again only need to consider $\mu>0$, $\mu'<1$, and $A(\mu)\cap A(\mu')\neq\emptyset$. If $\abs{A(\mu_0)}\geq 2$, then, since there is some $a\in A(\mu)\setminus A(\mu_0)$ and $A(\mu_0)\subseteq A(\mu)$ as argued before, we have $\abs{A(\mu)}\geq 3$, so robustness follows from condition~\ref{c)}. 

Therefore, suppose $A(\mu_0)$ is a singleton and there is another belief $\tilde{\mu}\neq\mu_0$ and an action $\tilde{a}\in A(\tilde{\mu})\setminus A(\mu_0)$ such that $v_S(\tilde{a})\leq s$. By Lemma~\ref{lem:Aintersection} and $A(\mu)\cap A(\mu')\neq\emptyset$, we must have $\tilde{\mu}\leq\mu$ or $\tilde{\mu}\geq\mu'$. By symmetry, assume w.l.o.g.\ the former is the case. Now let $\bar{\mu}$ be the supremum of the (nonempty) set of all such $\tilde{\mu}\leq\mu$. Since $\bar{\mu}\leq\mu$ and, by Lemma~\ref{lem:Aneighborhood}, $A(\tilde{\mu})\subseteq A(\bar{\mu})$ for all $\tilde{\mu}$ in some neighborhood of $\bar{\mu}$, $\bar{\mu}$ belongs itself to the set, i.e., it is its maximal element. Moreover, we must in fact have $\bar{\mu}=\mu$. To show this, suppose by way of contradiction $\bar{\mu}<\mu$. Then, by construction of $\mu$, $\max\cV(\bar{\mu})<s$ and hence also $\max\cV(\tilde{\mu})<s$ for all $\tilde{\mu}$ in some neighborhood of $\bar{\mu}$ by $A(\tilde{\mu})\subseteq A(\bar{\mu})$. For any $\tilde\mu\in(\bar{\mu},\mu]$, however, $\max\cV(\tilde{\mu})<s$ only if $A(\tilde{\mu})\subseteq A(\mu_0)$, but the latter is not possible for any $\tilde{\mu}<\mu$. Indeed, if $\tilde{\mu}<\mu$, then necessarily $A(\tilde{\mu})\cap A(\mu_0)=\emptyset$, because otherwise Lemma~\ref{lem:Aintersection}, $\tilde{\mu}<\mu<\mu_0<\mu'$, and $A(\mu_0)\subseteq A(\mu')$ would imply $A(\mu)=A(\mu_0)$ in contradiction to the previously established $a\in A(\mu)\setminus A(\mu_0)$.

Having shown that $\bar{\mu}=\mu$, we may conclude there is in particular some $\tilde{a}\in A(\mu)\setminus A(\mu_0)$ such that $v_S(\tilde{a})\leq s$. In addition, we may assume this $\tilde{a}$ to be distinct from the given $a\in A(\mu)\setminus A(\mu_0)$ with $v_S(a)\geq s$. Indeed, consider any $\tilde{\mu}<\mu$ but in a neighborhood of $\mu$ such that $A(\tilde{\mu})\subseteq A(\mu)$. Since $\tilde{\mu}<\mu$, $A(\tilde{\mu})\cap A(\mu_0)=\emptyset$ as argued before and $s\not\in\cV(\tilde{\mu})$ by construction of $\mu$. It follows that we may take either $a$ or $\tilde{a}$ to be from $A(\tilde{\mu})$ and the corresponding inequality to be strict, which yields $v_S(\tilde{a})<s\leq v_S(a)$ or $v_S(\tilde{a})\leq s<v_S(a)$. The fact that $\tilde{a}$ is distinct from $a$ finally implies that $\abs{A(\mu)}\geq 3$, and now robustness follows from condition~\ref{c)}.
\qed

\subsection{Proof of Proposition~\ref{prop:fullyrobust}}\label{app:fullyrobust}

We prove the contrapositive. Let $\Theta=\{\theta_1,\theta_2\}$, $u_R$ be an arbitrary receiver utility function, $u_S$ a state-independent sender utility function with $u_S(\theta_1)=u_S(\theta_2)=v_S\in\R^n$, and $(p,\rho)$ an equilibrium such that the (constant) induced sender payoff $s^*$ is not in $\cV(\mu_0)$. Using Lemma~\ref{lem:caratheodory}, we can assume that the support of $p$ contains exactly two beliefs $\mu$ and $\mu'$ such that $\theta_1\in\supp{\mu}$ and $\theta_2\in\supp{\mu'}$. We will show that there exists an arbitrarily nearby utility function for the sender such that every corresponding equilibrium with belief distribution $p$ induces a sender payoff in $\cV(\mu_0)$. By Lemma~\ref{clm:samebeliefs}, this suffices.  

For $p$ to be an equilibrium belief distribution for a perturbed sender utility function $\tilde u_S$, there must be some
\begin{equation*}
x\in\cD=\setd{r-r'}{\text{$r,r'\in\Delta A$, $\supp{r}\subseteq A(\mu)$, and $\supp{r'}\subseteq A(\mu')$}}
\end{equation*}
such that $\tilde u_S(\theta_1)\cdot x\geq 0\geq\tilde u_S(\theta_2)\cdot x.$ 
Let $d$ be the vector in $\R^n$ whose $i$th entry is $1$ if $a_i\in A(\mu)\setminus A(\mu')$, $-1$ if $a_i\in A(\mu')\setminus A(\mu)$, and $0$ if $a_i\in A(\mu)\cap A(\mu')$. Then, for any $x=r-r'\in\cD$,
\begin{equation*}
d\cdot x=\sum_{a\in A(\mu)\setminus A(\mu')}r(a)~~+\sum_{a\in A(\mu')\setminus A(\mu)}r'(a)\geq 0.
\end{equation*}
We can approximate $u_S$ by a sender utility function $\tilde u_S$ such that $\tilde u_S(\theta_1)=v_S$ and $\tilde u_S(\theta_2)=v_S+\epsilon d$ for some $\epsilon>0$. Take $x=r-r'\in\cD$ corresponding to some equilibrium for the perturbed sender utility function $\tilde u_S$ with belief distribution $p$. Then the condition $\tilde u_S(\theta_1)\cdot x\geq 0\geq\tilde u_S(\theta_2)\cdot x$ implies $d\cdot x\leq 0$, which again implies
\begin{equation*}
\sum_{a\in A(\mu)\setminus A(\mu')}r(a)~~+\sum_{a\in A(\mu')\setminus A(\mu)}r'(a)=0.
\end{equation*}
So both $r$ and $r'$ are mixtures over actions in $A(\mu)\cap A(\mu')$. This immediately implies $d\cdot r'=0$ and additionally, by Lemma~\ref{lem:Aintersection}, that both $r$ and $r'$ are supported on $A(\mu_0)$. Consequently, both $v_S\cdot r$ and $v_S\cdot r'$ are in $\cV(\mu_0)$. That the same holds for the sender's equilibrium payoff in each state then follows from the construction of $\tilde u_S$, the fact that we can assume $r$ ($r'$) to be the receiver's equilibrium reply to the posterior belief $\mu$ ($\mu'$), and $d\cdot r'=0$.
\qed
\bibliography{references.bib}

\end{document}